\theoremstyle{plain}
\newtheorem{thm}{\protect\theoremname}[section]
\theoremstyle{definition}
\newtheorem{defn}[thm]{\protect\definitionname}
\theoremstyle{remark}
\newtheorem{rem}[thm]{\protect\remarkname}
\theoremstyle{plain}
\newtheorem{prop}[thm]{\protect\propositionname}
\newenvironment{proof}[1][\protect\proofname]{\par
\normalfont\topsep6\p@\@plus6\p@\relax
\trivlist
\itemindent\parindent
\item[\hskip\labelsep
\scshape
#1]\ignorespaces
}{%
\endtrivlist\@endpefalse
}
\providecommand{\proofname}{Proof}
\theoremstyle{plain}
\newtheorem{lem}[thm]{\protect\lemmaname}
\theoremstyle{definition}
\newtheorem{example}[thm]{\protect\examplename}
\theoremstyle{plain}
\newtheorem{cor}[thm]{\protect\corollaryname}
\date{}  
\providecommand{\corollaryname}{Corollary}
\providecommand{\definitionname}{Definition}
\providecommand{\examplename}{Example}
\providecommand{\lemmaname}{Lemma}
\providecommand{\propositionname}{Proposition}
\providecommand{\remarkname}{Remark}
\providecommand{\theoremname}{Theorem}
\begin{document}

\title{Fundamental theorems of asset pricing for piecewise semimartingales
of stochastic dimension\global\long\global\long\global\long\def\norm#1{\left\Vert #1\right\Vert }

\global\long\global\long\global\long\def\abs#1{\left\vert #1\right\vert }

\global\long\global\long\global\long\def\set#1{\left\{  #1\right\}  }

\global\long\global\long\global\long\def\eps{\varepsilon}

\global\long\global\long\global\long\def\zero{\odot}

\global\long\global\long\global\long\def\fa{\mathfrak{a}}

\global\long\global\long\global\long\def\cA{\mathcal{A}}

\global\long\global\long\global\long\def\cB{\mathcal{B}}

\global\long\global\long\global\long\def\fb{\mathfrak{b}}

\global\long\global\long\global\long\def\C{\mathbb{C}}

\global\long\global\long\global\long\def\bC{\mathbb{C}}

\global\long\global\long\global\long\def\cC{\mathcal{C}}

\global\long\global\long\global\long\def\cD{\mathcal{D}}

\global\long\global\long\global\long\def\bD{\mathbb{D}}

\global\long\global\long\global\long\def\rd{\mathrm{d}}

\global\long\global\long\global\long\def\d{\mathrm{d}}

\global\long\global\long\global\long\def\cE{\mathcal{E}}

\global\long\global\long\global\long\def\cF{\mathcal{F}}

\global\long\global\long\global\long\def\bF{\mathbb{F}}

\global\long\global\long\global\long\def\cG{\mathcal{G}}

\global\long\global\long\global\long\def\fG{\mathfrak{G}}

\global\long\global\long\global\long\def\fg{\mathfrak{g}}

\global\long\global\long\global\long\def\bG{\mathbb{G}}

\global\long\global\long\global\long\def\bH{\mathbb{H}}

\global\long\global\long\global\long\def\cH{\mathcal{H}}

\global\long\global\long\global\long\def\cI{\mathcal{I}}

\global\long\global\long\global\long\def\cK{\mathcal{K}}

\global\long\global\long\global\long\def\bK{\mathbb{K}}

\global\long\global\long\global\long\def\bL{\mathbb{L}}

\global\long\global\long\global\long\def\cL{\mathcal{L}}

\global\long\global\long\global\long\def\cM{\mathcal{M}}

\global\long\global\long\global\long\def\fm{\mathfrak{m}}

\global\long\global\long\global\long\def\N{\mathbb{N}}

\global\long\global\long\global\long\def\bN{\mathbb{N}}

\global\long\global\long\global\long\def\cN{\mathcal{N}}

\global\long\global\long\global\long\def\cO{{\normalcolor \mathcal{O}}}

\global\long\global\long\global\long\def\bP{\mathbb{P}}

\global\long\global\long\global\long\def\cP{\mathcal{P}}

\global\long\global\long\global\long\def\fP{\mathfrak{P}}

\global\long\global\long\global\long\def\fp{\mathfrak{p}}

\global\long\global\long\global\long\def\bQ{\mathbb{Q}}

\global\long\global\long\global\long\def\cQ{\mathcal{Q}}

\global\long\global\long\global\long\def\fr{\mathfrak{r}}

\global\long\global\long\global\long\def\R{\mathbb{R}}

\global\long\global\long\global\long\def\bR{\mathbb{R}}

\global\long\global\long\global\long\def\cR{\mathcal{R}}

\global\long\global\long\global\long\def\fR{\mathfrak{R}}

\global\long\global\long\global\long\def\bS{\mathbb{S}}

\global\long\global\long\global\long\def\cS{\mathcal{S}}

\global\long\global\long\global\long\def\fs{\mathfrak{s}}

\global\long\global\long\global\long\def\cT{\mathcal{T}}

\global\long\global\long\global\long\def\bT{\mathbb{T}}

\global\long\global\long\global\long\def\ft{\mathfrak{t}}

\global\long\global\long\global\long\def\bU{\mathbb{U}}

\global\long\global\long\global\long\def\cU{\mathcal{U}}

\global\long\global\long\global\long\def\bV{\mathbb{V}}

\global\long\global\long\global\long\def\cV{\mathcal{V}}

\global\long\global\long\global\long\def\cX{\mathcal{X}}

\global\long\global\long\global\long\def\fX{\mathfrak{X}}

\global\long\global\long\global\long\def\cY{\mathcal{Y}}

\global\long\global\long\global\long\def\fY{\mathfrak{Y}}

\global\long\global\long\global\long\def\fy{\mathfrak{y}}

\global\long\global\long\global\long\def\bZ{\mathbb{Z}}

\global\long\global\long\global\long\def\cZ{\mathcal{Z}}

\global\long\global\long\global\long\def\fZ{\mathfrak{Z}}

\global\long\global\long\global\long\def\I{\mathbf{1}}

\global\long\global\long\global\long\def\cemetery{\dagger}

\global\long\global\long\global\long\def\D#1#2{\frac{\partial#1}{\partial#2}}

\global\long\global\long\global\long\def\DD#1#2{\frac{\partial^{2}#1}{\partial#2^{2}}}

$\global\long\global\long\global\long\def\vec#1{\mbox{\boldmath\ensuremath{#1}}}
$

\global\long\global\long\global\long\def\wt#1{\widetilde{#1}}

\global\long\global\long\global\long\def\1{\mathbf{1}}

\global\long\global\long\global\long\def\uI{\mathbf{\hat{1}}}

\global\long\global\long\global\long\def\2{\mathbf{1}}

\global\long\global\long\global\long\def\asto{\xrightarrow{\text{a.s.}}}

\global\long\global\long\global\long\def\Lto{\xrightarrow{L^{1}}}

\global\long\global\long\global\long\def\Lpto{\xrightarrow{L^{p}}}

\global\long\global\long\global\long\def\asLto{\xrightarrow{L^{1}, \text{ a.s.}}}

\global\long\global\long\global\long\def\imply{\Rightarrow}

\global\long\global\long\global\long\def\nimply{\nRightarrow}

\global\long\global\long\global\long\def\limply{\Longrightarrow}

\global\long\global\long\global\long\def\leftexp#1#2{{{\vphantom{#2}}^{#1}{#2}}}

\global\long\global\long\global\long\def\var{\textrm{Var}}

\global\long\global\long\global\long\def\std{\textrm{Std}}

\global\long\global\long\global\long\def\corr{\textrm{Corr}}

\global\long\global\long\global\long\def\cov{\textrm{Cov}}

\global\long\global\long\global\long\def\tr{\textrm{tr}}

\global\long\global\long\global\long\def\diag{\textrm{diag}}

\global\long\global\long\global\long\def\sgn{\textrm{sign}}

\global\long\global\long\global\long\def\argmin{\operatornamewithlimits{arg\, min}}

\global\long\global\long\global\long\def\argmax{\operatornamewithlimits{arg\, max}}

\global\long\global\long\global\long\def\bbmid{\big|}

\global\long\global\long\global\long\def\bmid{\textrm{\ensuremath{\Big|}}}

\global\long\global\long\global\long\def\Bmid{\bigg|}

\global\long\global\long\global\long\def\BBmid{\Bigg|}

}

\author{Winslow Strong%
\thanks{I gratefully acknowledge financial support by the National Centre
of Competence in Research ``Financial Valuation and Risk Management''
(NCCR FINRISK), Project D1 (Mathematical Methods in Financial Risk
Management). The NCCR FINRISK is a research instrument of the Swiss
National Science Foundation. Additionally, financial support by the
ETH Foundation and a Chateaubriand fellowship, from the Embassy of
France in the United States, are gratefully acknowledged. %
}}

\maketitle
\noindent \begin{center}
ETH Zurich, Department of Mathematics
\par\end{center}

\noindent \begin{center}
CH-8092 Zurich, Switzerland
\par\end{center}

\noindent \begin{center}
winslow.strong@math.ethz.ch
\par\end{center}

\smallskip{}

\begin{center}
\emph{First Draft: }12/11, \emph{Current Draft: }12/11
\par\end{center}
\begin{abstract}
The purpose of this paper is two-fold. First is to extend the notions
of an $n$-dimensional semimartingale and its stochastic integral
to a \emph{piecewise semimartingale} \emph{of stochastic dimension}.
The properties of the former carry over largely intact to the latter,
avoiding some of the pitfalls of infinite-dimensional stochastic integration.
Second is to extend two fundamental theorems of asset pricing (FTAPs):
the equivalence of no free lunch with vanishing risk to the existence
of an equivalent sigma-martingale measure for the price process, and
the equivalence of no arbitrage of the first kind to the existence
of an equivalent local martingale deflator for the set of nonnegative
wealth processes.
\end{abstract}
\medskip{}

\noindent \begin{flushleft}
\textbf{\small Keywords: }{\small Semimartingale, Martingale, Stochastic
integration, Fundamental theorem of asset pricing, Stochastic dimension}
\par\end{flushleft}{\small \par}

\noindent \begin{flushleft}
\textbf{Mathematics Subject Classification: }60H05 $\cdot$ 60G48
\par\end{flushleft}

\noindent \begin{flushleft}
\textbf{JEL Classification: }G12 $\cdot$ C60
\par\end{flushleft}

\section{Introduction and background\label{Sec:Introduction}}

\subsection{Piecewise semimartingales\label{Sub:Piecewise_Semis}}

Stochastic processes with finite, stochastic dimension have been studied
previously, for example in the theory of branching processes and diffusions,
but it does not appear that a general theory of stochastic integration
with respect to these processes has been developed. This case lies
in-between that of infinite-dimensional stochastic integration and
the fixed-finite-dimensional case. The stronger properties of the
latter carry over largely intact to the case of stochastic dimension.
This is one reason for our choice of developing the theory by extending
finite-dimensional stochastic integration via localization, rather
than treating it as a special case of infinite-dimensional stochastic
integration. The other main reason for this approach is that the finite-dimensional
treatment is more elementary, and therefore accessible to a broader
audience.

\subsubsection{Related notions of stochastic integration\label{SubSub:Related_notions_of_Stoch_Int}}

Stochastic integration has previously been extended to integrators
taking values in infinite-dimensional spaces of varying generality
\cite{Art:BjorkEtAl:TowardsGenTheorBondMarks:1997,Book:CarmonaTehranchi:IntRateModelsInfDimStochAnal:2006,Art:Mikulevicius:StochIntTVS:1998,Art:Mikulevicius:MartProbStochPDEs:1999}.
The case that is closest to that of finite-dimensional semimartingale
integration is when the integrator is a sequence of semimartingales,
as developed by De Donno and Pratelli \cite{Art:DeDonnoPratelli:StochIntSeqSemis:2006}.
Their formulation preserves many, but not all, of the nice properties
of finite-dimensional stochastic integration. For example the Ansel
and Stricker theorem \cite{Art:AnselStricker:Thm:1994}, which gives
necessary and sufficient conditions for $H\cdot X$ to be a local
martingale when $X$ is an $\R^{n}$-valued local martingale, does
not extend. A counterexample is given as Example $2$ in \cite{Art:DeDonnoPratelli:StochIntSeqSemis:2006}
where $(H\cdot X)_{t}=t$, $\forall t\ge0$, with $X$ a local martingale.

This pathology presents a difficulty for defining admissibility of
trading strategies in market models where the price process is a sequence
of semimartingales. The notion of a limited credit line ($H\cdot X$
uniformly bounded from below) is no longer sufficient to rule out
arbitrage. Instead, more technical formulations of admissibility are
necessary \cite{Art:DeDonnoGuasoniPratelli:SuperRepUtilMaxLargeFinMarks:2005}.
However, the theory of stochastic integration with respect to piecewise
semimartingales, developed herein, does not have such problems. The
Ansel and Sticker theorem extends as Theorem \ref{Thm:Ansel_Stricker},
and consequently if $H\cdot X$ is uniformly bounded from below, then
$H$ is admissible.

\subsubsection{Piecewise integration\label{SubSec:Piecewise_Integration}}

The theory of stochastic integration developed herein is a piecewise
one. The integrand $X$ takes values in $\cup_{n=1}^{\infty}\R^{n}$,
and its integral is formed by \emph{dissection, }that is, by localization
on stochastic time intervals $(\tau_{k-1},\tau_{k}]$ and partitioning
on the dimension of the integrator. Then the stochastic integrals
with respect to $X^{k,n}$, the $\R^{n}$-valued semimartingale ``pieces''
of $X$, may be stitched together to define $H\cdot X:=H_{0}^{\prime}X_{0}+\sum_{k,n=1}^{\infty}H^{k,n}\cdot X^{k,n}$,
and $H^{\prime}$ denotes the transpose of $H$.

This notion of piecewise integration provides one possible solution
for how to deal with integration over dimensional changes. In $\R^{n}$-valued
semimartingale stochastic integration, $X$ is assumed to have right-continuous
paths, and $\Delta(H\cdot X)=H^{\prime}\Delta X$, where $\Delta X:=X-X^{-}$,
and $X^{-}$ is the left-limit process of $X$. However, since $x-y$
is undefined when $\dim x\ne\dim y$, for $x,y\in\cup_{n=1}^{\infty}\R^{n}$,
this approach does not immediately extend to dimensional shifts. One
solution would be to adopt the convention of treating nonexistent
components as if they take the value $0$ (similar to the convention
of $(\Delta(H\cdot X))_{0}:=H_{0}^{\prime}X_{0}$ in $\R^{n}$-stochastic
integration, as in \cite{Book:Protter:SDE:2005}).

However, here we take a different perspective, and place primary importance
on preserving $H\cdot X$ as the capital gains (profits) arising from
holding $H$ shares in risky assets with price process $X$. This
is mainly due to the naturalness of $H\cdot X$ in this role and the
centrality of capital gains to financial mathematics. For example,
when a new asset enters the investable universe, its mere existence
as an option for investment does not cause any portfolio values to
change. So portfolio values should be \emph{conserved }upon such an
event, making the jump notion considered above incompatible with maintaining
$H\cdot X$ as the capital gains process.

Instead, dimensional jumps in $X$ are mandated to occur only as right
discontinuities. This allows stochastic integration to be stopped
just before each jump and resumed just afterwards. The left discontinuities,
as usual, influence $H\cdot X$, while the right discontinuities serve
to indicate the start of a new piece and do not affect $H\cdot X$,
which remains a right-continuous process. Having decided on this convention
for how to handle the dimensional jumps, there is no reason to restrict
those jumps of $X$ that do not influence $H\cdot X$ to merely changes
in dimension. Hence, $X$ is permitted to have right discontinuities
without changing dimension. Since this piecewise procedure will fail
to define a process on $\R_{+}\times\Omega$ if infinitely many pieces
are required on a compact time interval, the paths of $X$ are required
to have no accumulation points of right discontinuities.

\subsection{Fundamental theorems of asset pricing\label{Sub:Fundamental-theorems-of}}

There has been no small amount of literature on the topic of FTAPs
in different settings. We do not attempt to provide a full history
here, for which the interested reader should see \cite{Book:DelbSchach:ArbBook:2006}.
Instead, we  highlight some of the most important results pertaining
to the cases studied herein.

The paper by Delbaen and Schachermayer \cite{Art:DelbSchach:FundThmMathFin:1998}
proves the equivalence of the condition no free lunch with vanishing
risk (NFLVR)\emph{ }to the existence of an equivalent sigma-martingale
measure (E$\sigma$MM) for the price process $X$, when $X$ is an
$\R^{n}$-valued semimartingale. The special case where $X$ is locally
bounded yields the existence of an equivalent local martingale measure
(ELMM) for $X$, a result proved earlier in \cite{Art:DelbSchach:FundThmMathFin:1994}.

For unbounded $X$, the paper of Kabanov \cite{Art:Kabanov:OnFTAPKrepsDelbSchach:1997}
concurrently arrived at the weaker equivalence of NFLVR with the existence
of an equivalent separating measure for the set of replicable claims.
However, his approach for this weaker result is more general than
\cite{Art:DelbSchach:FundThmMathFin:1998}, in that it merely requires
a closedness property of the replicable claims and some other basic
properties, rather than also imposing that these claims arise from
stochastic integration with respect to a semimartingale. Therefore,
his approach is well-suited for more general investigations into arbitrage,
including the case herein of $X$ as a piecewise semimartingale of
stochastic dimension. It is used in Section \ref{Sec:MarketModelsStochNumAssets},
along with Delbaen and Schachermayer's result \cite{Art:DelbSchach:FundThmMathFin:1998}
of E$\sigma$MMs being dense amongst the space of equivalent separating
measures, in order to prove Theorem \ref{Thm:FTAP}, a generalization
of $\mbox{NFLVR}\Longleftrightarrow\mbox{E}\sigma\mbox{MM}$ in the
piecewise setting.  Specializations are proved additionally, showing
that the sigma-martingale measures are local martingale measures when
the price process is locally bounded, in analogy with \cite{Art:DelbSchach:FundThmMathFin:1994}.

It does not appear that in the setting of infinite-dimensional stochastic
integration any sigma-martingale equivalence to a form of no approximate
arbitrage has been proved in the literature. A related result is proved
in \cite{Art:BalbasDownarowicz:InfManySecFTAP2007}, where the setting
is discrete time and the number of assets is countable, but the FTAP
does not extend in its original (discrete-time) form. Instead, no
arbitrage is characterized by projective limits of projective systems
of martingale measures.

\subsubsection{Large financial markets\label{SubSub:Large-financial-markets}}

The setting of large financial markets, introduced by Kabanov and
Kramkov in \cite{Art:KabanovKramkov:LargeFinMarks:AsympArb:1994},
bears resemblance to the setting herein, but is somewhat different,
since it consists of sequences of finite-dimensional market models
without the dynamics of a stochastic number of assets. They proved
FTAPs stating that for no asymptotic arbitrage of the second or first
kind to exist, respectively, it is sufficient that there exists a
sequence of local martingale measures for the finite-dimensional price
processes, contiguous with respect to the physical measures, or the
physical measures contiguous with respect to it, respectively. Furthermore,
they showed that if the market is complete, then this is necessary
as well. The completeness assumption was later shown to be unnecessary
concerning arbitrage of the first kind, by Klein and Schachermayer
in \cite{Art:Klein_Schachermayer:Asympt_Arb_Incomp_Larg_Fin_Mark:1997},
with alternative proofs in \cite{Art:Klein_Schachermayer:Halmos-Savage:1996,Art:KabanovKramkov:AsympArbLargFinMarks:1998}.
An equivalence of no asymptotic free lunch with the existence of a
bicontinuous sequence of sigma-martingale measures was additionally
proved by Klein in \cite{Art:Klein:FTAPforLargeFinMarks:2000}. There
it was also shown that no weakening to ``vanishing risk'' or even
``bounded risk'' is possible.

The paper of De Donno, Guasoni, and Pratelli, \cite{Art:DeDonnoGuasoniPratelli:SuperRepUtilMaxLargeFinMarks:2005},
studies super-replication and utility maximization using duality methods
in a market modeled by a sequence of semimartingales, using the integration
theory developed in \cite{Art:DeDonnoPratelli:StochIntSeqSemis:2006}.
Sequences of finite-dimensional markets are used as approximations
to the countable-asset market, and it is assumed that a measure exists
that martingalizes the entire sequence of asset prices, a stronger
condition than in the previously mentioned results.

\subsubsection{No arbitrage of the first kind\label{SubSub:No-arbitrage-of-first-kind}}

A different FTAP is also proved here as Theorem \ref{Thm:NA1_ELMD},
proved by Kardaras in \cite[Theorem 1.1]{Art:Kardaras:MarkViabAbsArb1stKind:2012}
for the one-dimensional semimartingale case. It is a much weaker condition
than NFLVR, indeed even admitting certain types of arbitrage. The
statement is that no arbitrage of the first kind is equivalent to
the existence of an equivalent local martingale deflator (ELMD) for
the set of nonnegative wealth processes. Notably, this condition does
not require the closure property of passing from local martingales
to martingales, so it has the virtue of being verifiable via local
arguments, which is not the case for the NFLVR FTAP. When an ELMD
exists in $\R^{n}$-valued semimartingale markets, it provides sufficient
regularity for a duality-based theory of hedging and utility maximization
\cite{Art:Karatzas&Fernholz:SPTReview:2009,Art:Ruf:OptTradStratUnderArb:2009}.

\section{Piecewise semimartingales of stochastic dimension\label{Sec:Piecewise-semimartingales-of-Stoch-Dim}}

This section will motivate and develop the notion of a piecewise semimartingale
whose dimension is a finite but unbounded stochastic process, and
extend stochastic integration to these processes as integrators. A
natural%
\footnote{Another choice could be the space of sequences that have all but finitely
many terms equal to $0$. However, this state space lacks the dimensional
information of $\bU$. This information would need to be supplied
as an auxiliary process for the theory of integration developed herein.%
} state space for such a process is $\bU:=\cup_{n=1}^{\infty}\R^{n}$,
equipped with the topology generated by the union of the standard
topologies on each $\R^{n}$. When $x,y\in\R^{n}$, then $x+y$ is
defined as usual, and multiplication by a scalar is defined as usual
within each $\R^{n}$. For regularity considerations, we will limit
discussion to processes whose paths are composed of finitely many
càdlàg pieces on all compact time intervals. Each change in dimension
of the process necessitates the start of a new piece, so may only
occur at a right discontinuity.

\subsection{Notation\label{Sub:Notation}}

The basic technique for manipulating $\bU$-valued piecewise processes
will be\emph{ dissection},  meaning localization on stochastic time
intervals and partitioning into $\R^{n}$-valued processes. Then standard
results from $\R^{n}$-valued stochastic analysis can be applied and
extended.

Indicator functions are a useful notational tool for dissecting stochastic
processes, but must be reformulated to be useful in the state space
$\bU$, due to the multiplicity of zeros: $0^{(n)}\in\R^{n}$. To
salvage their utility, define an additive identity element $\zero$,
a topologically isolated point in $\widehat{\bU}:=\bU\cup\{\zero\}$,
distinguished from $0^{(n)}\in\R^{n}$, $n\in\N$. Let $\zero+x=x+\zero=x$,
and $\zero x=x\zero=\zero$, for each $x\in\widehat{\bU}$. The modified
indicator will be denoted by
\begin{align*}
\uI_{A}(t,\omega) & :=\begin{cases}
1\in\R, & \mbox{for }(t,\omega)\in A\subseteq[0,\infty)\times\Omega\\
\zero & \mbox{otherwise }
\end{cases}.
\end{align*}
The usual definition of indicator will still be useful, which will
be denoted by the usual notation: $\I_{A}$.

All relationships among random variables hold merely almost surely
(a.s.), and for stochastic processes $Y$ and $Z$, $Y=Z$ will mean
that $Y$ and $Z$ are indistinguishable. The notations $\R_{+}:=[0,\infty)$
and $B^{\prime}$ to denote the transpose of a matrix $B$ will be
used. A process $Y$ stopped at a random time $\alpha$ will be denoted
$Y^{\alpha}:=(Y_{\alpha\wedge t})_{t\ge0}$. For any process $Y$
possessing paths with right limits at all times, $Y^{+}$ will denote
the right-limit process. All $\R^{n}$-valued semimartingales will
be assumed to have right-continuous paths. The $\bU$-extension of
the $l_{p}$-norms, referred to here as \emph{local norms}, will also
be useful. Of course, these are not norms in $\bU$, since $\bU$
is not even a vector space.
\begin{align*}
\left|\cdot\right|_{p} & :\bU\to\R_{+}\\
\left|h\right|_{p} & :=\left(\sum_{i=1}^{n}h_{i}^{p}\right)^{1/p},\quad\mbox{for }h\in\R^{n},\; n\in\N,\; p\in[1,\infty),\\
|h|_{\infty} & :=\max_{1\le i\le n}|h_{i}|,\quad\mbox{for }h\in\R^{n},\; n\in\N.
\end{align*}

\subsection{Stochastic integration\label{Sub:Stochastic-integration}}

Let the stochastic basis $(\Omega,\cF,\bF:=(\cF_{t})_{t\ge0},P)$
satisfy the usual conditions of $\cF_{0}$ containing the $P$-null
sets and $\bF$ being right-continuous. Let $X$ be a $\bU$-valued
progressively measurable process whose paths have left and right limits
at all times. In particular, this implies that $N:=\dim X$ also has
paths with left and right limits at all times.
\begin{defn}
\label{Def:ResetSeq}A sequence of stopping times $(\tau_{k})$ is
called a \emph{reset sequence} for a progressive $\bU$-valued process
$X$ if for $P$-almost every $\omega$ all of the following hold:

\begin{enumerate}
\item \label{Def:ResetSeq:Item:1}$\tau_{0}(\omega)=0$, $\tau_{k-1}(\omega)\le\tau_{k}(\omega)$,
$\forall k\in\N$,  and $\lim_{k\to\infty}\tau_{k}(\omega)=\infty$;
\item \label{Def:ResetSeq:Item:2}$N_{t}(\omega)=N_{\tau_{k-1}}^{+}(\omega)$
for all $t\in(\tau_{k-1}(\omega),\tau_{k}(\omega)]$, for each $k\in\N$;
\item \label{Def:ResetSeq:Item:3}$t\mapsto X_{t}(\omega)$ is right-continuous
on $(\tau_{k-1}(\omega),\tau_{k}(\omega))$ for all $k\in\N$.
\end{enumerate}
\end{defn}
If $X$ has a reset sequence, then the minimal one (in the sense of
the fewest resets by a given time) is given by $\hat{\tau}_{0}:=0$,
\begin{align}
\hat{\tau}_{k} & :=\inf\left\{ t>\hat{\tau}_{k-1}\mid X_{t}^{+}\ne X_{t}\right\} ,\quad k\in\N.\label{Eq:Minimal_ResetSeqDef}
\end{align}
The existence of a reset sequence for $X$ is a necessary regularity
condition for the theory herein, whereas the choice of reset sequence
is inconsequential for most applications, a fact addressed below.
\begin{rem}
The progressive property of $(X,\bF)$ is important, because it guarantees
that $(\hat{\tau}_{k})$ is in fact a sequence of stopping times.
Due to $X$ lacking the usual path regularity of having left or right
continuity at all times, it may not be the case that $X$ is progressive
with respect to its natural filtration. For this reason, it may more
appropriate to think of $(X,\hat{K})$ as the fundamental process
being studied here, where $\hat{K}_{t}:=\sum_{k=1}^{\infty}\I_{t\ge\hat{\tau}_{k}}$,
$t\ge0$, specifies what ``piece'' is active at time $t$. If $(\hat{\tau}_{k})$
satisfies \ref{Def:ResetSeq:Item:1}-\ref{Def:ResetSeq:Item:3} of
Definition \ref{Def:ResetSeq}, then $X$ can be shown to be progressive
with respect to the natural filtration of $(X,\hat{K})$, since the
$\hat{\tau}_{k}$ are stopping times on this filtration.
\end{rem}

The first technique that we develop is extending stochastic integration
to $X$ as integrator. The idea is that when $X$ has a discontinuity
from the right, the stochastic integral will ignore it. Integration
occurs from $0$ up to and including $\tau_{1}$, at which point the
integral is pasted together with an integral beginning just after
$\tau_{1}$, and so on.

For a process $X$ as described above and a reset sequence $(\tau_{k})$
for that process, \emph{dissect }$X$ and $\Omega$ to obtain the
following:
\begin{align}
\Omega^{k,n} & :=\{\tau_{k-1}<\infty,\; N_{\tau_{k-1}}^{+}=n\}\subseteq\Omega,\quad\forall k,n\in\N,\label{Eq:Omega_k_n}\\
X^{k,n} & :=(X^{\tau_{k}}-X_{\tau_{k-1}}^{+})\uI_{(\tau_{k-1},\infty)\cap\R_{+}\times\Omega^{k,n}}+0^{(n)},\quad\forall k,n\in\N.\label{Eq:X^kn-Pieces}
\end{align}
Each $X^{k,n}$ is $\R^{n}$-valued, adapted, has càdlàg paths, and
is therefore optional.
\begin{defn}
\label{Def:UValSemi}A \emph{ piecewise semimartingale} $X$ is a
$\bU$-valued progressive process having paths with left and right
limits for all times and possessing a reset sequence $(\tau_{k})$,
such that $X^{k,n}$ is an $\R^{n}$-valued semimartingale for each
$k,n\in\N$.
\end{defn}
Proposition \ref{Prop:Well_Definition_of_Semis_and_Stoch_Int} will
show that this definition and the subsequent development are not sensitive
to the choice of reset sequence. That is, if they hold for a particular
reset sequence, then they hold for any. The definition allows the
full generality of $\R^{n}$-valued semimartingale stochastic integration
theory to be carried over to  piecewise semimartingales taking values
in $\bU$.

Let $X$ be a  piecewise semimartingale and $(\tau_{k})$ a reset
sequence such that the $X^{k,n}$ are semimartingales, for each $k,n\in\N$.
Let $H$ be a $\bU$-valued predictable process satisfying $\dim H=N$.
Dissect $H$ via
\begin{align}
H^{k,n}: & =H\uI_{(\tau_{k-1},\tau_{k}]\cap\R_{+}\times\Omega^{k,n}}+0^{(n)},\quad k,n\in\N.\label{Eq:H^kn-Def}
\end{align}
Each $H^{k,n}$ is $\R^{n}$-valued and predictable, since $H$ is
predictable and $(\tau_{k-1},\tau_{k}]\cap\R_{+}\times\Omega^{k,n}$
is a predictable set.
\begin{defn}
\label{Def:StochasticIntegral}For a  piecewise semimartingale $X$
and reset sequence $(\tau_{k})$ let
\begin{align*}
\cL(X) & :=\{H\mbox{ predictable}\mid\dim H=N,\mbox{ and }\mbox{\ensuremath{H^{k,n}\;}is \ensuremath{X^{k,n}}-integrable \ensuremath{\forall k,n\in\N}\},}\\
\cL_{0}(X) & :=\{H\in\cL(X)\mid H_{0}=0^{(N_{0})}\}.
\end{align*}
For $H\in\cL(X)$, the \emph{stochastic integral} $H\cdot X$ is defined
as
\begin{align}
H\cdot X & :=H_{0}^{\prime}X_{0}+\sum_{k=1}^{\infty}\sum_{n=1}^{\infty}(H^{k,n}\cdot X^{k,n}).\label{Eq:Def-U-Val-StochInt}
\end{align}

\end{defn}
This stochastic integral is a generalization of $\R^{n}$-valued semimartingale
stochastic integration: In that special case any sequence stopping
times increasing to infinity is a reset sequence for $X$. Denoting
traditional stochastic integration with respect to $\R^{n}$-valued
semimartingale $X$ by $\widehat{H\cdot X}$, we have
\begin{align*}
H\cdot X & =H_{0}^{\prime}X_{0}+\sum_{k=1}^{\infty}\sum_{m=1}^{\infty}\widehat{H^{k,m}\cdot X^{k,m}},\\
 & =H_{0}^{\prime}X_{0}+\sum_{k=1}^{\infty}[(\widehat{H\cdot X})^{\tau_{k}}-(\widehat{H\cdot X})^{\tau_{k-1}}],\\
 & =\widehat{H\cdot X}.
\end{align*}

The stochastic integral operator $\cdot X$ retains the property of
being a continuous linear operator on the appropriate analog of the
space of simple predictable processes. A $\bU$-valued process $H$
is called \emph{simple predictable} for dimensional process $N$ if
it satisfies $\dim H=N$ and is of the form
\begin{align}
H= & H_{0}\uI_{\{0\}\times\Omega}+\sum_{i=1}^{j}H_{i}\uI_{(\alpha_{i},\alpha_{i+1}]}+0^{(N)},\label{Eq:Def_Simple_Predictable}
\end{align}
where $0=\alpha_{1}\le\ldots\le\alpha_{j+1}<\infty$, are stopping
times, and $H_{i}\in\cF_{\alpha_{i}}$, for $1\le i\le j$. The class
of such processes will be denoted $\bS(N)$, and when topologized
with the topology of uniform convergence on compact time sets in probability
(ucp), the resulting space will be denoted $\bS_{\textrm{ucp}}(N)$.
Similarly denote $\bD_{\textrm{ucp}}$ as the space of adapted processes
with right-continuous paths bearing the ucp topology.

\begin{prop}
\label{Prop:Semis_Are_Cont_Integrators}If $X$ is a  piecewise semimartingale
with $N=\dim X$, then
\begin{align*}
X:\bS_{\textrm{\emph{ucp}}}(N) & \to\bD_{\textrm{\emph{ucp}}}\\
X(H) & =H\cdot X
\end{align*}
is a continuous linear operator. \end{prop}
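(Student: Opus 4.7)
My plan is to reduce to the classical continuity of $\R^n$-valued stochastic integration in the ucp topology, via the dissection \eqref{Eq:H^kn-Def}--\eqref{Eq:Def-U-Val-StochInt}. Linearity is immediate: each dissection map $H \mapsto H^{k,n}$ is linear, and each $\R^n$-valued stochastic integral $H \mapsto H \cdot X^{k,n}$ is linear, so $H \mapsto H \cdot X$ is linear by \eqref{Eq:Def-U-Val-StochInt}. For continuity, assume $H_m \to H$ in $\bS_{\text{ucp}}(N)$ and fix $T > 0$ together with $\varepsilon, \eta > 0$; I will show $P(\sup_{t \le T} |((H_m - H) \cdot X)_t| > \varepsilon) < \eta$ for all sufficiently large $m$.

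The key step is to truncate the doubly infinite sum in \eqref{Eq:Def-U-Val-StochInt} to a finite one on an event of probability at least $1 - \eta/2$. First, since $\tau_k \nearrow \infty$ a.s.\ by condition~\ref{Def:ResetSeq:Item:1} of Definition~\ref{Def:ResetSeq}, pick $K$ with $P(\tau_K \le T) < \eta/4$; on $\{\tau_K > T\}$ the terms with $k > K$ vanish identically on $[0,T]$, since $X^{k,n}$ is supported in $(\tau_{k-1}, \infty) \supseteq (\tau_K, \infty)$. Second, for each $k \le K$ the sets $\{\Omega^{k,n}\}_{n \in \N}$ are disjoint subsets of $\Omega$, so one can choose $M_k \in \N$ with $P\bigl(\bigcup_{n > M_k} \Omega^{k,n}\bigr) < \eta/(4K)$. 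Off the bad set $B := \{\tau_K \le T\} \cup \bigcup_{k=1}^K \bigcup_{n > M_k} \Omega^{k,n}$, which has probability less than $\eta/2$, only the finitely many indices $(k,n)$ with $k \le K$ and $n \le M_k$ can contribute to $\sup_{t \le T}|((H_m - H) \cdot X)_t|$.

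For each fixed pair $(k,n)$ in this finite index set, the dissection formula \eqref{Eq:H^kn-Def} exhibits $H_m^{k,n}$ as the restriction of $H_m$ to the predictable set $(\tau_{k-1}, \tau_k] \cap \R_+ \times \Omega^{k,n}$; this restriction is a ucp-continuous operation on predictable processes, so $H_m^{k,n} \to H^{k,n}$ in ucp. Since $X^{k,n}$ is an $\R^n$-valued semimartingale, the classical continuity of $\R^n$-valued stochastic integration from $\bS_{\text{ucp}}(n)$ to $\bD_{\text{ucp}}$ then gives $H_m^{k,n} \cdot X^{k,n} \to H^{k,n} \cdot X^{k,n}$ in ucp, and similarly the initial term $(H_{m,0} - H_0)^{\prime} X_0 \to 0$ in probability, since $H_m \to H$ in ucp forces $H_{m,0} \to H_0$ in probability. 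A union bound over the finite set $\{(k,n) : k \le K,\, n \le M_k\}$ together with the $0$-th term completes the estimate. I expect the main obstacle to be precisely the truncation in the dimensional index $n$: without exploiting the disjointness of $\{\Omega^{k,n}\}_n$ (so that their probabilities are summable to at most one), one cannot replace the infinite inner sum by a finite one and hence cannot invoke the classical $\R^n$-level ucp continuity only finitely many times.
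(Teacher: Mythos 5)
Your proof is correct and follows essentially the same route as the paper: dissect $H_m$ into $H_m^{k,n}$, truncate the double sum to finitely many $(k,n)$ off an event of small probability (using $\tau_k\nearrow\infty$ for the $k$-index and the disjointness of $\{\Omega^{k,n}\}_n$ for the $n$-index), invoke classical ucp-continuity of $\R^n$-valued stochastic integration for each surviving piece, and conclude by a union bound. The only cosmetic difference is that you choose a separate $M_k$ for each $k\le K$ while the paper chooses a single $n_0$ covering all $k\le k_0$; this does not change the argument.
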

\begin{proof}
Let $H$, $H^{i}\in\bS_{\textrm{ucp}}(N)$, $\forall i\in\N$, and
$\lim_{i\to\infty}H^{i}=H$ (all limits here are assumed to be ucp).
Then by dissecting $H^{i}$ as in (\ref{Eq:H^kn-Def}) to get $H^{k,n,i}$,
and interchanging stopping and ucp limits, we have $\lim_{i\to\infty}H^{k,n,i}=H^{k,n}$
for all $k,n\in\N$. Since the $X^{k,n}$ are semimartingales, then
$\lim_{i\to\infty}(H^{k,n,i}\cdot X^{k,n})=(H^{k,n}\cdot X^{k,n})$.
Thus, $(\forall T>0)(\forall k_{0}\in\N)(\forall n_{0}\in\N)(\forall\varepsilon>0)(\forall\delta>0)(\exists i_{0}\in\N)$
such that
\begin{align}
P\left(\sup_{0\le t\le T}\left|(H^{k,n,i}\cdot X^{k,n})_{t}-(H^{k,n}\cdot X^{k,n})_{t}\right|>\varepsilon\right) & <\delta,\label{Eq:Prop:Semis_Are_Cont_Integrators:1}
\end{align}
whenever $k\le k_{0}$, $n\le n_{0}$, and $i>i_{0}$. For arbitrary
$\varepsilon,\rho>0$, choose $k_{0}$ sufficiently large such that
$P(\tau_{k_{0}}\le T)<\rho$, and $n_{0}$ sufficiently large such
that $P(\cup_{n>n_{0},k\le k_{0}}\Omega^{k,n})<\rho$. Then pick $i_{0}$
sufficiently large such that (\ref{Eq:Prop:Semis_Are_Cont_Integrators:1})
is satisfied for $\delta=\rho/(n_{0}k_{0})$. This results in the
following estimate for any $i\ge i_{0}$, proving the claim.
\begin{align*}
P\biggl(\sup_{0\le t\le T}\Bigl|(H^{i}\cdot X)_{t} & -(H\cdot X)_{t}>\varepsilon\Bigr|\biggr)\\
 & \le P\left(\tau_{k_{0}}>T\bigcap\sup_{0\le t\le T}\left|(H^{i}\cdot X)_{t}-(H\cdot X)_{t}\right|>\varepsilon\right)+P(\tau_{k_{0}}\le T),\\
 & \le\sum_{k=1}^{k_{0}}\sum_{n=1}^{n_{0}}P\left(\sup_{0\le t\le T}\left|(H^{k,n,i}\cdot X^{k,n})_{t}-(H^{k,n}\cdot X^{k,n})_{t}\right|>\varepsilon\right)\\
 & \quad+P\biggl(\bigcup_{n>n_{0},k\le k_{0}}\Omega^{k,n}\biggr)+P(\tau_{k_{0}}\le T),\\
 & \le k_{0}n_{0}\frac{\rho}{k_{0}n_{0}}+\rho+\rho=3\rho.\tag*{\qedhere}
\end{align*}

\end{proof}
The following proposition shows that the choice of reset sequence
$(\tau_{k})$ carries no significance in the definitions of piecewise
semimartingale, $\cL(X)$, nor $H\cdot X$. The proof is simple, yet
tedious, and is therefore relegated to Appendix \ref{App:Proofs}.
\begin{prop}
\label{Prop:Well_Definition_of_Semis_and_Stoch_Int}Let $X$ be a
 piecewise semimartingale and $\tilde{X}^{k,n}$ be defined as in
(\ref{Eq:X^kn-Pieces}), but with respect to an arbitrary reset sequence
$(\tilde{\tau}_{k})$. Then $\tilde{X}^{k,n}$ is an $\R^{n}$-valued
semimartingale for each $k,n\in\N$. Furthermore, the class $\cL(X)$
and the process $H\cdot X$ are invariant with respect to the choice
of reset sequence used in their definitions.
\end{prop}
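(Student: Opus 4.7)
The plan is to reduce the proposition to the case of one reset sequence refining another and then exploit a telescoping identity. Given two reset sequences $(\tau_{k})$ and $(\tilde{\tau}_{k})$ for $X$, enumerate the random set $\{\tau_{k}:k\in\N\}\cup\{\tilde{\tau}_{k}:k\in\N\}$ in nondecreasing order to obtain a sequence $(\sigma_{j})$. Items \ref{Def:ResetSeq:Item:1}--\ref{Def:ResetSeq:Item:3} of Definition \ref{Def:ResetSeq} all transfer: divergence and monotonicity are immediate, and each open interval $(\sigma_{j-1},\sigma_{j})$ contains no $\tau_{k}$ or $\tilde{\tau}_{l}$, hence lies inside some $(\tau_{k-1},\tau_{k})$, on which $N$ is constant and $X$ is right-continuous. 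Thus $(\sigma_{j})$ is a reset sequence that simultaneously refines $(\tau_{k})$ and $(\tilde{\tau}_{k})$, and it suffices to prove the proposition in the special case that $(\tilde{\tau}_{k})$ is a refinement of $(\tau_{k})$.

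Assuming this, fix $k,n$ and work on $\Omega^{k,n}$. Let $\tau_{k-1}=\tilde{\tau}_{j_{0}}<\tilde{\tau}_{j_{1}}<\cdots$ enumerate the refined times in $[\tau_{k-1},\tau_{k}]$. At every interior refined time $\tilde{\tau}_{j_{l}}<\tau_{k}$, item \ref{Def:ResetSeq:Item:3} of Definition \ref{Def:ResetSeq} applied to the coarser sequence gives $X_{\tilde{\tau}_{j_{l}}}^{+}=X_{\tilde{\tau}_{j_{l}}}$, so the shift $X^{+}_{\tilde{\tau}_{j_{l-1}}}$ appearing in $\tilde{X}^{j_{l},n}$ coincides with $X_{\tilde{\tau}_{j_{l-1}}}$ for $l\ge 2$, while for $l=1$ it equals $X^{+}_{\tau_{k-1}}$. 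A direct computation then yields the telescoping identity
\begin{align*}
\tilde{X}^{j_{l},n} & = (X^{k,n})^{\tilde{\tau}_{j_{l}}}-(X^{k,n})^{\tilde{\tau}_{j_{l-1}}}\quad\text{on }\Omega^{k,n},\quad l\ge 1.
\end{align*}
One direction is then immediate: if $X^{k,n}$ is a semimartingale, each right-hand side is a difference of stopped semimartingales, and zero-padding of $\tilde{X}^{j,n}$ off of any $\Omega^{k,n}$ remains trivially a semimartingale. Conversely, summing the identity over $l=1,\ldots,L$ gives $(X^{k,n})^{\tilde{\tau}_{j_{L}}}=\sum_{l=1}^{L}\tilde{X}^{j_{l},n}$ on $\Omega^{k,n}$, a finite sum of semimartingales; since $\tilde{\tau}_{j_{L}}\to\infty$ as $L\to\infty$, localization yields the semimartingale property of $X^{k,n}$.

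The same identity drives the invariance of $\cL(X)$ and $H\cdot X$. On $\Omega^{k,n}$, the predictable dissection $H^{k,n}$ decomposes as $\sum_{l}\tilde{H}^{j_{l},n}$; because semimartingale integrability is a local property, $H^{k,n}$ is $X^{k,n}$-integrable if and only if each $\tilde{H}^{j_{l},n}$ is $\tilde{X}^{j_{l},n}$-integrable, in which case the associated stochastic integrals sum to $H^{k,n}\cdot X^{k,n}=\sum_{l}\tilde{H}^{j_{l},n}\cdot\tilde{X}^{j_{l},n}$. Reassembling over $k$ and $n$ shows that both $\cL(X)$ and $H\cdot X$ are unchanged when $(\tau_{k})$ is replaced by $(\tilde{\tau}_{k})$. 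The main obstacle is the careful bookkeeping at piece boundaries --- distinguishing $X$ from $X^{+}$, handling the case $\tau_{k}=\infty$ in which countably many refined resets may crowd into a single coarse interval, and tracking the $0^{(n)}$ zero-padding off of $\Omega^{k,n}$ --- but these are all resolved by invoking item \ref{Def:ResetSeq:Item:3} of Definition \ref{Def:ResetSeq} at interior refined times together with localization along the sequence $(\tilde{\tau}_{j_{L}})$.
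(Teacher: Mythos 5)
Your argument is essentially correct but follows a genuinely different route from the paper's. The paper works directly with two arbitrary reset sequences, manipulating $H\cdot X$ through a chain of telescoping identities on the double-dissected overlaps $(\tau_{k-1}\vee\tilde{\tau}_{j-1},\tau_{k}\wedge\tilde{\tau}_{j}]\cap\R_{+}\times\Omega^{k,n}$, and it establishes that each $\tilde{X}^{j,n}$ is a semimartingale via the continuous-integrator characterization: take $\R^{n}$-simple $S^{i}\to S$ ucp, lift to $\bU$-valued integrands, and apply Proposition~\ref{Prop:Semis_Are_Cont_Integrators} to conclude $S^{i}\cdot\tilde{X}^{j,n}\to S\cdot\tilde{X}^{j,n}$. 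You instead pass through a common refinement and prove the semimartingale property by an explicit telescoping decomposition of each fine piece into differences of stopped coarse pieces, then localize. Your route is more concrete about where each fine piece lives; the paper's avoids any aggregation over partitions by leaning on Proposition~\ref{Prop:Semis_Are_Cont_Integrators}.

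Two steps in your sketch need shoring up. First, the claim that zero-padding off $\Omega^{k,n}$ ``remains trivially a semimartingale'' is an underclaim: for fixed $j,n$, the events $\tilde{\Omega}^{j,n}\cap\{\tilde{\tau}_{j-1}\in[\tau_{k-1},\tau_{k})\}$, $k\in\N$, form a countably infinite $\cF_{\tilde{\tau}_{j-1}}$-measurable partition, and to conclude that the resulting sum $\sum_{k}\I_{\{\cdots\}}\bigl[(X^{k,n})^{\tilde{\tau}_{j}}-(X^{k,n})^{\tilde{\tau}_{j-1}}\bigr]$ is a semimartingale you need the countable-partition argument the paper uses in the proof of Proposition~\ref{Prop:StochIntChar}\ref{Prop:StochIntChar:Is_A_Semi} (i.e., Protter's Theorem II.3 applied to the conditional measures); this does not follow from locality alone. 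Second, the indices $j_{l}$ you write are random, so ``enumerate the refined times in $[\tau_{k-1},\tau_{k}]$'' and statements like ``$\tilde{X}^{j_{l},n}$ is a semimartingale'' are not yet well-posed. The fix is to reorganize: fix a deterministic index $j$, partition $\Omega$ by which coarse interval $\tilde{\tau}_{j-1}$ lands in, and express $\tilde{X}^{j,n}$ on each cell via your telescoping identity with deterministic superscripts. Similarly, the common-refinement construction tacitly uses that the sorted merge of two nondecreasing sequences of stopping times is again a sequence of stopping times; true, but it deserves a word. With those repairs the proof goes through and gives an arguably more hands-on alternative to the appendix argument.
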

Next we give some basic regularity properties of the stochastic integral.
\begin{prop}
\label{Prop:StochIntChar}The following properties hold for piecewise
semimartingale $X$:
\begin{enumerate}
\item \label{Prop:StochIntChar:Is_A_Semi}The stochastic integral $H\cdot X$
is an $\R$-valued semimartingale;
\item \label{Prop:StochIntChar:Integrands_Form_VS}$\cL(X)$ is a vector
space. If $H,G\in\cL(X)$ then $H\cdot X+G\cdot X=(H+G)\cdot X$;

\item \label{Prop:StochIntChars:Stopping}If $X$ is a  piecewise semimartingale
and $\alpha$ is a stopping time, then $X^{\alpha}$ is a  piecewise
semimartingale, and $(X^{\alpha})^{k,n}=(X^{k,n})^{\alpha}$ for any
$k,n\in\N$. Furthermore, if $H\in\cL(X)$, then $H\uI_{[0,\alpha]}+0^{(N)}\in\cL(X)$,
$H\uI_{[0,\alpha]}+0^{(N^{\alpha})}\in\cL(X^{\alpha})$, and
\begin{align*}
(H\cdot X)^{\alpha} & =(\uI_{[0,\alpha]}H+0^{(N)})\cdot X=(\uI_{[0,\alpha]}H+0^{(N^{\alpha})})\cdot X^{\alpha}.
\end{align*}

\end{enumerate}
\end{prop}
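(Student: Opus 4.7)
The plan is to reduce all three claims to standard $\R^{n}$-valued semimartingale integration by dissecting along the reset sequence $(\tau_{k})$ and exploiting that the events $\Omega^{k,n}$ form an $\cF_{\tau_{k-1}}$-measurable partition of $\{\tau_{k-1}<\infty\}$. For (1), each $H^{k,n}\cdot X^{k,n}$ is an $\R$-valued semimartingale by the definition of $\cL(X)$, and because only one value of $n$ lies in $\Omega^{k,\cdot}$ at each $\omega$, the pointwise sum $Z^{k}:=\sum_{n}H^{k,n}\cdot X^{k,n}$ is well-defined. To show $Z^{k}$ is a semimartingale, I would introduce the stopping times $\rho_{L}^{k}:=\tau_{k-1}\I_{\{N^{+}_{\tau_{k-1}}>L\}}+\tau_{k}\I_{\{N^{+}_{\tau_{k-1}}\le L\}}$ (stopping times since $\{N^{+}_{\tau_{k-1}}\le L\}\in\cF_{\tau_{k-1}}$), so that $(Z^{k})^{\rho_{L}^{k}}=\sum_{n\le L}(H^{k,n}\cdot X^{k,n})^{\rho_{L}^{k}}$ is a finite sum of semimartingales. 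Since $N^{+}_{\tau_{k-1}}$ is $\bN$-valued, $\rho_{L}^{k}=\tau_{k}$ pointwise for $L$ large enough, and $Z^{k}=(Z^{k})^{\tau_{k}}$ by construction; combined with a patching lemma for local martingales and adapted finite-variation processes along an $\cF_{\tau_{k-1}}$-measurable partition, this identifies $Z^{k}$ as a semimartingale. Then $(H\cdot X)^{\tau_{m}}=H_{0}'X_{0}+\sum_{k=1}^{m}Z^{k}$ is a finite sum of semimartingales and $\tau_{m}\to\infty$, so $H\cdot X$ is a semimartingale.

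For (2), since $H$ and $G$ both have dimension $N$, the dissection gives $(H+G)^{k,n}=H^{k,n}+G^{k,n}$ as $\R^{n}$-valued predictable processes, and linearity of $\R^{n}$-valued integration against $X^{k,n}$ combined with term-by-term summation yields $(H+G)\cdot X=H\cdot X+G\cdot X$; scalar multiplication is analogous. For (3), I would first check the three conditions of Definition~\ref{Def:ResetSeq} for $(\tau_{k})$ as a reset sequence of $X^{\alpha}$ using $N^{\alpha}_{t}=N_{t\wedge\alpha}$, which is constant on each $(\tau_{k-1},\tau_{k}]$ (equal to $N^{+}_{\tau_{k-1}}$ when $\tau_{k-1}<\alpha$ and to $N_{\alpha}$ otherwise) and right-continuous on each open piece-interval since $X$ is. The identity $(X^{\alpha})^{k,n}=(X^{k,n})^{\alpha}$ then follows from a case analysis on the position of $\alpha$ relative to $\tau_{k-1},\tau_{k}$; the key observation is that $(X^{\alpha})^{+}_{\tau_{k-1}}=X^{+}_{\tau_{k-1}}$ on $\{\tau_{k-1}<\alpha\}$, while on $\{\tau_{k-1}\ge\alpha\}$ the indicator $\uI_{(\tau_{k-1},\infty)\cap\R_{+}}$ kills all nonzero contributions after the relevant stopping. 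Finally, the $k,n$-dissections of $H\uI_{[0,\alpha]}+0^{(N)}$ (against $X$) and of $H\uI_{[0,\alpha]}+0^{(N^{\alpha})}$ (against $X^{\alpha}$) both reduce to $H^{k,n}\I_{[0,\alpha]}$; the $X^{k,n}$-integrability of this process follows from that of $H^{k,n}$, and applying the $\R^{n}$-valued identities $(H^{k,n}\I_{[0,\alpha]})\cdot X^{k,n}=(H^{k,n}\cdot X^{k,n})^{\alpha}=H^{k,n}\cdot(X^{k,n})^{\alpha}$ and summing over $k,n$ yields the three equal processes.

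The main technical obstacle is the patching step in (1): verifying that a pointwise countable sum of semimartingales, each supported on a distinct $\cF_{\tau_{k-1}}$-measurable event and vanishing on $[0,\tau_{k-1}]$, is itself a semimartingale. The stopping-time reduction through $\rho_{L}^{k}$ does not localize to $\infty$, so one must exploit the $\bN$-valuedness of $N^{+}_{\tau_{k-1}}$ to conclude that $\rho_{L}^{k}=\tau_{k}$ eventually pointwise and that $Z^{k}=(Z^{k})^{\tau_{k}}$; alternatively, one can decompose each summand into its local-martingale and finite-variation parts and verify that the respective patched sums remain a local martingale and an adapted finite-variation process by use of the $\cF_{\tau_{k-1}}$-measurability of the partition. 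The case analyses in (3) are mechanical but require careful bookkeeping of right limits at $\tau_{k-1}$ relative to the stopping time $\alpha$.
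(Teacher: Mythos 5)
Your parts (2) and (3) follow essentially the same route as the paper: linearity via $(H+G)^{k,n}=H^{k,n}+G^{k,n}$, and for (3) the case analysis on $\alpha$ relative to $\tau_{k-1},\tau_k$ together with the reduction of both dissections to $H^{k,n}\I_{[0,\alpha]}$. Those are fine.

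Part (1) takes a genuinely different route from the paper, and the route you describe has a gap in the localization step. You correctly identify that the stopping times $\rho_L^k:=\tau_{k-1}\I_{\{N^+_{\tau_{k-1}}>L\}}+\tau_k\I_{\{N^+_{\tau_{k-1}}\le L\}}$ do not converge to $\infty$ but only to $\tau_k$, so the standard ``locally a semimartingale $\Rightarrow$ semimartingale'' result does not directly apply; but the fix you propose (``$\rho_L^k=\tau_k$ eventually pointwise and $Z^k=(Z^k)^{\tau_k}$'') does not by itself complete the argument, because knowing $(Z^k)^{\rho_L^k}$ is a semimartingale for each $L$ is not the same as knowing $(Z^k)^{\tau_k}$ is. The clean repair is to replace $\rho_L^k$ with $\sigma_L^k$ defined to equal $\tau_{k-1}$ on $\{\tau_{k-1}<\infty,\;N^+_{\tau_{k-1}}>L\}$ and $+\infty$ elsewhere; this is a stopping time since $\{N^+_{\tau_{k-1}}>L\}\in\cF_{\tau_{k-1}}$, one has $\sigma_L^k\nearrow\infty$ a.s., and since $Z^k$ vanishes on $[0,\tau_{k-1}]$ and is constant after $\tau_k$ one gets $(Z^k)^{\sigma_L^k}=(Z^k)^{\rho_L^k}=\sum_{n\le L}H^{k,n}\cdot X^{k,n}$, a finite sum of semimartingales. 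With that modification your localization argument closes, and it is indeed more elementary than the paper's, which instead introduces the conditional measures $P^{k,n}(\cdot)=P(\cdot\cap\Omega^{k,n})/P(\Omega^{k,n})$, observes that $\sum_n H^{k,n}\cdot X^{k,n}$ is trivially a $P^{k,j}$-semimartingale for each $j$ (since it vanishes off $\Omega^{k,j}$ under $P^{k,j}$), and invokes the stability of the semimartingale property under countable convex combinations of measures (Protter, Theorem II.3). The measure-theoretic argument buys brevity and reuses a powerful off-the-shelf result; your localization argument buys a self-contained proof that stays inside the single measure $P$, at the cost of the extra bookkeeping with stopping times jumping to $\infty$. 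The second fix you mention (decomposing into local-martingale and finite-variation parts and patching each) would also work and essentially reproves a special case of the paper's Lemma on partitioned local martingales, but is more work than either of the other two routes.
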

\begin{proof}
\textcompwordmark{}

\emph{\ref{Prop:StochIntChar:Is_A_Semi}. }Define $\Omega^{k,0}:=\{\tau_{k-1}=\infty\}$,
$k\in\N$, and $\cA:=\{(k,n)\in\N\times\bZ_{+}\mid P(\Omega^{k,n})>0\}$.
Define the probability measures $P^{k,n}(A):=\frac{P(A\cap\Omega^{k,n})}{P(\Omega^{k,n})}$,
$\forall A\in\cF$, $\forall(k,n)\in\cA$. Then $\sum_{n=1}^{\infty}(H^{k,n}\cdot X^{k,n})$
is a $P^{k,j}$-semimartingale for each $(k,j)\in\cA$, since $H^{k,n}\cdot X^{k,n}=0$
on $(\Omega^{k,n})^{c}$. For each $k\in\N$, we may represent $P$
as $P(A)=\sum_{n:(n,k)\in\cA}P(\Omega^{k,n})P^{k,n}(A)$, $\forall A\in\cF$,
where $\sum_{n:(n,k)\in\cA}P(\Omega^{k,n})=1$. Theorem II.3 of \cite{Book:Protter:SDE:2005}
then implies that $\sum_{n=1}^{\infty}(H^{k,n}\cdot X^{k,n})$ is
a $P$-semimartingale, $\forall k\in\N$. The stopped process $(H\cdot X)^{\tau_{m}}=\sum_{k=1}^{m}\sum_{n=1}^{\infty}(H^{k,n}\cdot X^{k,n})$
is a semimartingale $\forall m\in\N$, since it is a finite sum of
semimartingales. A process that is locally a semimartingale is a semimartingale
(corollary of Theorem II.6 \cite{Book:Protter:SDE:2005}) so we are
done.

\emph{\ref{Prop:StochIntChar:Integrands_Form_VS}. }By (\ref{Eq:H^kn-Def}),
$(H+G)^{k,n}=H^{k,n}+G^{k,n}$, and so the property follows from its
$\R^{n}$-counterpart.

\emph{\ref{Prop:StochIntChars:Stopping}.} Any reset sequence $(\tau_{k})$
for $X$ is a reset sequence for $X^{\alpha}$. Let $\tilde{\Omega}^{k,n}:=\{\tau_{k-1}<\infty,N_{\tau_{k-1}^{+}}^{\alpha}=n\}$,
and dissect $X^{\alpha}$ to get
\begin{align*}
(X^{\alpha})^{k,n} & =(X^{\tau_{k}\wedge\alpha}-(X^{\alpha})_{\tau_{k-1}}^{+})(\uI_{(\tau_{k-1},\infty)\cap\R_{+}\times\tilde{\Omega}^{k,n}})+0^{(n)},\\
 & =(X^{\tau_{k}}-X{}_{\tau_{k-1}}^{+})^{\alpha}\uI_{(\tau_{k-1},\infty)\cap\R_{+}\times(\tilde{\Omega}^{k,n}\cap\{\alpha>\tau_{k-1}\})}+0^{(n)},\\
 & =(X^{\tau_{k}}-X{}_{\tau_{k-1}}^{+})^{\alpha}(\uI_{(\tau_{k-1},\infty)\cap\R_{+}\times\Omega^{k,n}}+0^{(n)})^{\alpha},\\
 & =(X^{k,n})^{\alpha},
\end{align*}
where we used that $\tilde{\Omega}^{k,n}\cap\{\alpha>\tau_{k-1}\}=\Omega^{k,n}\cap\{\alpha>\tau_{k-1}\}$,
and $(X^{\alpha})^{k,n}=(X^{k,n})^{\alpha}=0^{(n)}$ on $\{\alpha\le\tau_{k-1}\}$.
Since $X^{k,n}$ is a semimartingale, and semimartingales are stable
with respect to stopping, then $(X^{\alpha})^{k,n}$ is a semimartingale,
$\forall k,n\in\N$, so $X^{\alpha}$ is a  piecewise semimartingale.

If $H\in\cL(X)$, then $\dim(H\uI_{[0,\alpha]}+0^{(N^{\alpha})})=\dim X^{\alpha}=N^{\alpha}$,
and dissection yields
\begin{align*}
(H\uI_{[0,\alpha]}+0^{(N^{\alpha})})^{k,n} & =(H\uI_{[0,\alpha]}+0^{(N^{\alpha})})\uI_{(\tau_{k-1},\tau_{k}]\cap\R_{+}\times\tilde{\Omega}^{k,n}}+0^{(n)},\\
 & =(H\uI_{(\tau_{k-1},\tau_{k}]\cap\R_{+}\times\Omega^{k,n}}+0^{(n)})\uI_{[0,\alpha]}+0^{(n)},\\
 & =H^{k,n}\I_{[0,\alpha]}\in\cL((X^{k,n})^{\alpha})=\cL((X^{\alpha})^{k,n}).
\end{align*}
Therefore, $H\uI_{[0,\alpha]}+0^{(N^{\alpha})}\in\cL(X^{\alpha})$,
and so
\begin{align*}
(H\uI_{[0,\alpha]}+0^{(N^{\alpha})})\cdot X^{\alpha} & =\sum_{k=1}\sum_{n=1}(H\uI_{[0,\alpha]}+0^{(N^{\alpha})})^{k,n}\cdot(X^{\alpha})^{k,n},\\
 & =\sum_{k=1}\sum_{n=1}(H^{k,n}\I_{[0,\alpha]})\cdot(X^{k,n})^{\alpha},\\
 & =\sum_{k=1}\sum_{n=1}(H^{k,n}\cdot X^{k,n})^{\alpha},\\
 & =(H\cdot X)^{\alpha}.
\end{align*}
The other equality may be proved in a similar fashion to above, yielding
$(H\uI_{[0,\alpha]}+0^{(N)})^{k,n}$$=H^{k,n}\I_{[0,\alpha]}\in\cL(X^{k,n})$,
and $(H\uI_{[0,\alpha]}+0^{(N)})\cdot X=(H\cdot X)^{\alpha}$.
\end{proof}
Mémin's theorem  \cite[Corollary III.4]{Art:Memin:EspaceDeSemimartingale1980}
for a semimartingale $Y$ states that the set of stochastic integrals
$\{H\cdot Y\mid H\in\cL(Y)\}$ is closed in the semimartingale topology.
For details on the semimartingale topology, see \cite{Art:Emery:SemiMart_Topology:1979,Art:Memin:EspaceDeSemimartingale1980}.
This result is used in the proof of the general (sigma-martingale)
version of the fundamental theorem of asset pricing given by Delbaen
and Schachermayer \cite{Art:DelbSchach:FundThmMathFin:1998}, and
also that of Kabanov \cite{Art:Kabanov:OnFTAPKrepsDelbSchach:1997},
who proves a more general intermediate theorem, valid for wealth processes
not necessarily arising from stochastic integration. We extend Mémin's
theorem here in preparation for extending the Delbaen and Schachermayer
FTAP, Theorem \ref{Thm:FTAP} below.

\begin{prop}[Mémin extension]
\label{Prop:Memin}If $X$ is a  piecewise semimartingale, then the
sets of stochastic integrals $\{H\cdot X\mid H\in\cL(X),\; H\cdot X\ge-c\}$
are closed in the semimartingale topology for each $c\in[0,\infty]$. \end{prop}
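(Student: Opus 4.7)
My plan is to reduce to the classical M\'emin theorem on each piece via dissection, apply it under localized probability measures, and then stitch the resulting local integrands back into a single $H \in \cL(X)$. Suppose $(H^j)_{j \in \N} \subset \cL(X)$ satisfies $H^j \cdot X \ge -c$ and $H^j \cdot X \to Z$ in the semimartingale topology. First I would exploit continuity of stopping in this topology to obtain $(H^j \cdot X)^{\tau_k} \to Z^{\tau_k}$ for every $k \in \N$. Using the definition (\ref{Eq:Def-U-Val-StochInt}), the stopped differences decompose as
\[
(H^j \cdot X)^{\tau_k} - (H^j \cdot X)^{\tau_{k-1}} = \sum_{n=1}^\infty H^{k,n,j} \cdot X^{k,n},
\]
so this sum converges in the $P$-semimartingale topology to $Z^{\tau_k} - Z^{\tau_{k-1}}$.

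Next, for each $(k,n)$ with $P(\Omega^{k,n}) > 0$, I would pass to the localized measure $P^{k,n}(\cdot) := P(\,\cdot \cap \Omega^{k,n})/P(\Omega^{k,n})$ used in the proof of Proposition \ref{Prop:StochIntChar}. Since $P^{k,n} \ll P$, the preceding convergence transfers, and under $P^{k,n}$ only the $n$-th summand is nonzero; hence $H^{k,n,j} \cdot X^{k,n}$ converges in the $P^{k,n}$-semimartingale topology to some $W^{k,n}$. Applying the classical M\'emin theorem to the $\R^n$-valued semimartingale $X^{k,n}$ under $P^{k,n}$ then furnishes a predictable $\tilde H^{k,n}$, $X^{k,n}$-integrable under $P^{k,n}$, with $\tilde H^{k,n} \cdot X^{k,n} = W^{k,n}$. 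Extending by $0^{(n)}$ outside $\Omega^{k,n}$ yields a global $H^{k,n}$ satisfying the same identity under $P$.

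To stitch, I would set
\[
H := H_0 \uI_{\{0\}\times\Omega} + \sum_{k,n=1}^\infty H^{k,n} \uI_{(\tau_{k-1},\tau_k] \cap (\R_+ \times \Omega^{k,n})} + 0^{(N)},
\]
choosing an $\cF_0$-measurable $H_0$ with $H_0' X_0 = Z_0$---possible because $Z_0 = \lim_j (H_0^j)' X_0$ in $L^0(P)$ lies in the fiberwise range of $g \mapsto g' X_0$ (it vanishes wherever $X_0$ does). Dissection of $H$ recovers the $H^{k,n}$, so $H \in \cL(X)$ and $H \cdot X = Z$ by (\ref{Eq:Def-U-Val-StochInt}). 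Since the semimartingale topology is metrizable, a ucp-convergent subsequence of $(H^j \cdot X)$ exists, giving $Z \ge -c$ by pointwise passage to the limit.

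The main obstacle will be the piecewise M\'emin step: one must transfer the convergence to each fiber $\Omega^{k,n}$ under $P^{k,n}$ and verify that the resulting $\tilde H^{k,n}$, a priori predictable only in the $P^{k,n}$-completion of $\bF$, can be chosen predictable in the $P$-completion (a version-selection argument is required because $P^{k,n}$-null sets can be strictly larger than $P$-null sets). Matching the initial values $Z_0 = H_0' X_0$ is a secondary technicality dispatched by the fiberwise range observation above.
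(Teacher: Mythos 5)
Your overall architecture matches the paper's: dissect along $(\tau_k)$, apply the classical M\'emin theorem on each $\R^n$-valued piece $X^{k,n}$, stitch the local integrands into a single $\hat H\in\cL(X)$, and deduce $Z\ge -c$ from ucp convergence. The one genuine divergence is how you obtain the piecewise convergence $H^{k,n,j}\cdot X^{k,n}\to Y^{k,n}$: you pass to the conditional measures $P^{k,n}$, which forces you (as you yourself flag) to apply M\'emin under $P^{k,n}$ and then argue that the resulting $P^{k,n}$-predictable integrand can be replaced by a $P$-predictable version before extending by zero off $\Omega^{k,n}$. That version-selection argument does go through — the $P^{k,n}$-predictable $\sigma$-algebra is just the $P$-predictable one augmented by $P^{k,n}$-evanescent sets, and on $\Omega^{k,n}$ the two measures are proportional — but it is avoidable. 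The paper stays under $P$: since $\I_{\Omega^{k,n}}$ is $\cF_{\tau_{k-1}}$-measurable and all the terms vanish on $[0,\tau_{k-1}]$, multiplying $\sum_n H^{k,n,j}\cdot X^{k,n}$ by $\uI_{(\tau_{k-1},\infty)\cap\R_+\times\Omega^{k,n}}$ is just integration against a fixed bounded predictable process, which is continuous in the semimartingale topology; this extracts $H^{k,n,j}\cdot X^{k,n}\to Y^{k,n}$ directly in the $P$-semimartingale topology, and M\'emin can then be applied under $P$ with no change of measure and no predictability bookkeeping. Your route is correct but buys an extra technicality; the paper's route is cleaner precisely because it never leaves $P$. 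Two small notes: you assert continuity of stopping in the semimartingale topology without proof — the paper supplies the short argument via the characterization $(\xi^i\cdot Y^i)_t-(\xi^i\cdot Y)_t\to 0$ with $\xi^i$ replaced by $\xi^i\I_{[0,\alpha]}$, and your proof should either cite or reproduce it; and your fiberwise-range observation for matching $H_0'X_0 = Z_0$ is a nice explicit justification of a step the paper leaves implicit.
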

\begin{proof}
Denote by $\fG_{c}$ the set of stochastic integrals bounded from
below by $-c$, and let $Y$ be in the closure of $\fG_{c}$. Then
there exists a sequence $(H^{i})$ such that $H^{i}\cdot X\in\fG_{c}$,
for all $i\in\N$, and $\lim_{i\to\infty}(H^{i}\cdot X)=Y$ (all limits
in this proof are assumed to be in the semimartingale topology). $Y$
can be dissected as
\begin{align*}
Y & =Y_{0}+\sum_{k=1}^{\infty}\sum_{n=1}^{\infty}Y^{k,n},\qquad Y^{k,n}:=\I_{\Omega^{k,n}}(Y^{\tau_{k}}-Y^{\tau_{k-1}}).
\end{align*}
It is the case that limits in the semimartingale topology may be interchanged
with the operation of stopping a process, a fact proved at the end.
 For example, $\lim_{i\to\infty}(H^{i}\cdot X)^{\tau_{k}}=Y^{\tau_{k}}$,
for each $k\in\N$. Let $H^{k,n,i}$ be the dissection of $H^{i}$
as in (\ref{Eq:H^kn-Def}). We may then deduce that $\lim_{i\to\infty}(H^{k,n,i}\cdot X^{k,n})=Y^{k,n}$,
$\forall k,n\in\N$. The sets
\begin{align*}
\fG^{k,n} & :=\left\{ H^{k,n}\cdot X^{k,n}\mid H^{k,n}\in\cL(X^{k,n})\right\} ,\quad\forall k,n\in\N,
\end{align*}
are closed in the semimartingale topology by Corollary III.4 of \cite{Art:Memin:EspaceDeSemimartingale1980}.
Therefore, there exist processes $\hat{H}^{k,n}\in\fG^{k,n}$ such
that $\hat{H}^{k,n}\cdot X^{k,n}=Y^{k,n}$. Stitching the local pieces
together and choosing $\hat{H}_{0}$ so that $\hat{H}_{0}^{\prime}X_{0}=Y_{0}$
provides the candidate closing integrand,
\begin{align*}
\hat{H} & :=\hat{H}_{0}+\sum_{k=1}^{\infty}\sum_{n=1}^{\infty}(\uI_{(\tau_{k-1},\tau_{k}]\times\R_{+}\cap\Omega^{k,n}}\hat{H}^{k,n})+0^{(N)}.
\end{align*}
Then $\hat{H}\in\cL(X)$ and
\begin{align*}
(\hat{H}\cdot X) & =\hat{H}_{0}X_{0}+\sum_{k=1}^{\infty}\left[(\hat{H}\cdot X)^{\tau_{k}}-(\hat{H}\cdot X)^{\tau_{k-1}}\right],\\
 & =Y_{0}+\sum_{k=1}^{\infty}\sum_{n=1}^{\infty}((\I_{\Omega^{k,n}}\hat{H}^{k,n})\cdot X^{k,n}),\\
 & =Y_{0}+\sum_{k=1}^{\infty}\sum_{n=1}^{\infty}\I_{\Omega^{k,n}}Y^{k,n},\\
 & =Y.
\end{align*}
To show that $Y\ge-c$, semimartingale convergence implies ucp convergence,
which implies that $P(Y_{t}\ge-c,\;0\le s\le t)=1$, $\forall t\ge0$.
Therefore $Y\in\fG_{c}$, so $\fG_{c}$ is closed.

To show that stopping may be interchanged with semimartingale convergence,
we use that $Y^{i}\to Y$ if and only if $(\xi^{i}\cdot Y^{i})_{t}-(\xi^{i}\cdot Y)_{t}\to0$
in probability, for all simple, predictable, bounded sequences of
processes $(\xi^{i})$, $\forall t\ge0$ (see \cite{Art:Emery:SemiMart_Topology:1979,Art:Memin:EspaceDeSemimartingale1980}).
For any stopping time $\alpha$ and any sequence of simple predictable
bounded processes $(\xi^{i})$, $(\xi^{i}\I_{[0,\alpha]})$ is also
a sequence of simple, predictable, bounded processes. Therefore, $Y^{i}\to Y$
implies that $(\xi^{i}\cdot(Y^{i})^{\alpha}-\xi^{i}\cdot Y^{\alpha})_{t}=(\xi^{i}\I_{[0,\alpha]}\cdot Y^{i})_{t}-(\xi^{i}\I_{[0,\alpha]}\cdot Y)_{t}\to0$,
in probability, $\forall t\ge0$.
\end{proof}

\subsection{Martingales\label{Sub:Martingales}}

The notion of martingale and its relatives may also be extended to
the piecewise setting, but due to the reset feature of these processes,
some care is needed. A characterization of martingality that generalizes
usefully is the martingality of stochastic integrals when the integrand
is bounded, simple, and predictable.
\begin{defn}
\label{Def:U_Val_Marts}A\emph{ piecewise martingale} is a  piecewise
semimartingale $X$ such that $H\cdot X$ is a martingale whenever
both $H\in\bS(N)$ and $\left|H\right|_{1}$ is bounded. A\emph{  piecewise
local martingale} is a process $X$ for which there exists a sequence
of increasing stopping times $(\rho_{i})$ such that $\lim_{i\to\infty}\rho_{i}=\infty$,
and $\I_{\{\rho_{i}>0\}}X^{\rho_{i}}$ is a  piecewise martingale,
$\forall i\in\N$. A\emph{ piecewise sigma-martingale} is a  piecewise
semimartingale $X$ such that $H\cdot X$ is a sigma-martingale whenever
$H\in\cL(X)$.
\end{defn}
It is easy to check that the definition of piecewise martingale is
equivalent to the usual definition of martingale, when $X$ is an
$\R^{n}$-valued semimartingale. Hence, the definition of piecewise
local martingale is also equivalent in this case. The equivalence
of the definition of piecewise sigma-martingale follows from \cite[p. 218]{Book:JacodShiryaev:LimitThmsStochProc:2002}.
\begin{rem}
In the definition of  piecewise martingale, requiring $H$ to be bounded
in $|\cdot|_{1}$ rather than in some other local norm is somewhat
arbitrary. All are of course equivalent for $\R^{n}$-valued semimartingales
due to the equivalence of any two norms on a finite-dimensional norm
space. But when the dimension is stochastic and unbounded, then the
definitions depend on the choice of local norm. This distinction disappears
under localization, as Lemma \ref{Lem:Loc_Bdd_Is_Loc_Bdd} shows.
Proposition \ref{Prop:X_Loc_Mart_iff_X^k,n_Loc_Mart} implies that
any choice of local norm in the definition of  piecewise martingale
yields the same class of  piecewise local martingales. See also Corollaries
\ref{Cor:Loc_Mart_Closure_wrt_SI} and \ref{Cor:FTAP_Loc_Bdd}, relating
to this point.\end{rem}
\begin{lem}
\label{Lem:Loc_Bdd_Is_Loc_Bdd}Let $\bK:=\cup_{n\in\N}\cK_{n}$, where
each $\cK_{n}$ is a finite-dimensional normed space. Let $\bK$ be
equipped with the Borel sigma algebra generated by the by the union
of the norm topologies of each $\cK_{n}$. Let $Y$ be a $\bK$-valued
progressive process with associated process $N^{Y}$ satisfying $N^{Y}=n$
whenever $Y\in\cK_{n}$. Suppose that $N^{Y}$ has paths that are
left-continuous with right limits. If $\left\Vert Y\right\Vert _{a}$
is locally bounded for some function $\left\Vert \cdot\right\Vert _{a}:\bK\to[0,\infty)$
such that $\left\Vert \cdot\right\Vert _{a}\restriction_{A}$ is a
norm whenever $A\subset\bK$ is a vector space, then $\left\Vert Y\right\Vert _{b}$
is locally bounded for any $\left\Vert \cdot\right\Vert _{b}:\bK\to[0,\infty)$
such that $\left\Vert \cdot\right\Vert _{b}\restriction_{A}$ is a
norm whenever $A\subset\bK$ is a vector space.\end{lem}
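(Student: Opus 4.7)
The plan is to reduce to the finite-dimensional case by exploiting left-continuity of $N^Y$ to localize onto a bounded range of dimensions, and then invoke equivalence of norms on each finite-dimensional $\cK_n$. The key insight is that for $Y^{\sigma}$ taking values in a finite union $\bigcup_{n\le m}\cK_n$, pointwise norm equivalence on each $\cK_n$ is enough, even though the union is not itself a vector space; we just need to bound the finite collection of equivalence constants by a maximum.

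First I would establish that $N^Y$ is locally bounded. Every path of $N^Y$ is bounded on each compact interval $[0,T]$: otherwise one could extract $t_k\in[0,T]$ with $N^Y_{t_k}\to\infty$ and a convergent subsequence $t_k\to t$, contradicting the existence of left and right limits of $N^Y$ at $t$. Define
\[
\sigma_m := \inf\{t\ge 0 : N^Y_t > m\},\qquad m\in\N.
\]
Each $\sigma_m$ is a stopping time (the filtration satisfies the usual conditions, and $N^Y$ is left-continuous adapted, hence predictable, so this is the debut of a closed set by a predictable process); by left-continuity, $N^Y_{\sigma_m}\le m$ on $\{\sigma_m<\infty\}$, so $N^{Y,\sigma_m}\le m$ everywhere on $[0,\sigma_m]$. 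Path-wise boundedness on compacts yields $\sigma_m\uparrow\infty$ a.s.

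Second, let $(\pi_k)$ be a localizing sequence with $\|Y^{\pi_k}\|_a\le C_k$ and set $\rho_k:=\sigma_k\wedge\pi_k$, so $\rho_k\uparrow\infty$. Then $Y^{\rho_k}$ takes values in $\bigcup_{n\le k}\cK_n$ and satisfies $\|Y^{\rho_k}\|_a\le C_k$. For each $n\in\N$, $\cK_n$ is a finite-dimensional vector space on which both $\|\cdot\|_a\restriction_{\cK_n}$ and $\|\cdot\|_b\restriction_{\cK_n}$ are norms, so by equivalence of norms in finite dimensions there is a constant $d_n<\infty$ with $\|x\|_b\le d_n\|x\|_b$ for all $x\in\cK_n$. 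Setting $D_k:=\max_{1\le n\le k}d_n$ gives $\|Y^{\rho_k}\|_b\le D_k C_k$ uniformly, proving that $\|Y\|_b$ is locally bounded with localizing sequence $(\rho_k)$.

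The only nontrivial step is the first one: recognizing that the left-continuity-plus-right-limits hypothesis on $N^Y$ is precisely what is needed to produce stopping times $\sigma_m$ with $N^{Y,\sigma_m}\le m$, rather than merely $N^{Y,\sigma_m^-}\le m$. Had $N^Y$ been only right-continuous (say càdlàg), one could have a jump at $\sigma_m$ of arbitrarily large size, and the localization would fail. Once the dimension is controlled, everything reduces to the classical fiber-wise equivalence of norms on finitely many finite-dimensional spaces.
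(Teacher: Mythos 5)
Your argument is essentially the paper's: both introduce the stopping times $\sigma_m=\inf\{t\ge 0: N^Y_t>m\}$ (called $\alpha_n$ there), use left-continuity of $N^Y$ to bound the dimension of the stopped process, intersect with a localizing sequence for $\|Y\|_a$, invoke finite-dimensional norm equivalence fiberwise with a maximum over $n\le m$, and finish by showing $\sigma_m\uparrow\infty$ from the path regularity of $N^Y$ (you argue via pathwise boundedness on compacts; the paper argues by contradiction via $\alpha_\infty$ — the same idea). Two small slips: you should write $\I_{\{\sigma_m>0\}}Y^{\sigma_m}$ rather than $Y^{\sigma_m}$, since if $N^Y_0>m$ then $\sigma_m=0$ and the stopped process is not dimension-bounded (the paper includes this indicator), and in the second step the norm inequality should read $\|x\|_b\le d_n\|x\|_a$, not $d_n\|x\|_b$.
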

\begin{proof}
Define the stopping times
\begin{align*}
\alpha_{n} & :=\inf\{t\ge0\mid N_{t}>n\},\quad n\in\N.
\end{align*}
Since $N^{Y}$ is left continuous, then $\I_{\{\alpha_{n}>0\}}(N^{Y})^{\alpha_{n}}\le n$.
For each $n\in\N$, $\left\Vert \cdot\right\Vert _{a}\restriction_{\cK_{n}}$
and $\left\Vert \cdot\right\Vert _{b}\restriction_{\cK_{n}}$ are
equivalent. Therefore, for each $n\in\N$, there exists $c_{n}\in(0,\infty)$
such that
\begin{align*}
\left\Vert \I_{\{\alpha_{n}>0\}}Y^{\alpha_{n}}\right\Vert _{b} & \le c_{n}\left\Vert \I_{\{\alpha_{n}>0\}}Y^{\alpha_{n}}\right\Vert _{a}.
\end{align*}
Let $(\beta_{n})$ be a sequence of stopping times such that $\left\Vert \I_{\{\beta_{n}>0\}}Y^{\beta_{n}}\right\Vert _{a}$
is bounded, $\forall n\in\N$, and $\lim_{n\to\infty}\beta_{n}=\infty$.
Then for $\rho_{n}:=\alpha_{n}\wedge\beta_{n}$, $\left\Vert \I_{\{\rho_{n}>0\}}Y^{\rho_{n}}\right\Vert _{b}$
is bounded, $\forall n\in\N$. It remains to show $\alpha_{\infty}:=\lim_{n\to\infty}\alpha_{n}=\infty$.

Since $\bN$ is discrete and $N^{Y}$ has paths with right limits,
then there exists an increasing sequence of random times $(\eta_{n})$
such that $\lim_{n\to\infty}\eta_{n}=\alpha_{\infty}$, and $N_{\eta_{n}}^{Y}=N_{\alpha_{n}^{+}}^{Y}\ge n$
on $\{\alpha_{n}<\infty\}$. Therefore, $\lim_{n\to\infty}N_{\eta_{n}}^{Y}$$=\lim_{n\to\infty}N_{\alpha_{n}^{+}}^{Y}=\infty$
on $\{\alpha_{\infty}<\infty\}$. This contradicts the left continuity
of the paths of $N^{Y}$, thus $P(\alpha_{\infty}<\infty)=0$.
\end{proof}
Lemma \ref{Lem:Loc_Bdd_Is_Loc_Bdd} invites an unambiguous extension
of the notion of a \emph{locally bounded} process taking values in
a finite-dimensional norm space.
\begin{defn}
A process $Y$ meeting the conditions of Lemma \ref{Lem:Loc_Bdd_Is_Loc_Bdd}
is a \emph{locally bounded }process.
\end{defn}
We will next formulate some characterizations of Definition \ref{Def:U_Val_Marts}
that will be useful in the sequel. First we need the following lemma,
for use in conjunction with dissection.
\begin{lem}
\label{Lem:Partitioned-Local-Marts-Are-Local-Marts}If $\eta$ is
a stopping time, $(C_{j})_{j\in\N}$ is an $\cF_{\eta}$-measurable
partition of $\Omega$, and $Y$ is an $\R^{n}$-valued semimartingale,
equal to $0^{(n)}\in\R^{n}$ on $[0,\eta]$, then
\begin{enumerate}
\item If $Y\I_{C_{j}}$ is a martingale for each $j\in\N$, and $Y$ is
an $L_{1}$-process, then $Y$ is a martingale.
\item If $Y\I_{C_{j}}$ is a supermartingale for each $j\in\N$, and $Y$
is an $L_{1}$-process, then $Y$ is a supermartingale.
\item If $Y\I_{C_{j}}$ is a local martingale for each $j\in\N$, then $Y$
is a local martingale.
\item If $Y\I_{C_{j}}$ is a sigma-martingale for each $j\in\N$, then $Y$
is a sigma-martingale.
\end{enumerate}
\end{lem}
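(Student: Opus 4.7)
My plan is to establish (1) and (2) directly by pathwise decomposition with dominated convergence, then reduce (3) and (4) to (1) via localization and rescaling, respectively. For (1) and (2): since $(C_j)$ is a countable partition, $Y_t=\sum_j(Y\I_{C_j})_t$ pathwise with only one term nonzero per $\omega$. The partial sums are bounded by $|Y_t|\in L^1$, so dominated convergence for conditional expectations yields
\begin{align*}
E[Y_t\mid\cF_s]=\sum_j E[(Y\I_{C_j})_t\mid\cF_s]=\sum_j(Y\I_{C_j})_s=Y_s,
\end{align*}
by the per-piece martingale hypothesis; replacing $=$ by $\le$ gives (2).

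For (3): For each $j$, fix a localizing sequence $(\rho^j_k)_k$ for $Y\I_{C_j}$ with $(Y\I_{C_j})^{\rho^j_k}$ a uniformly integrable martingale. Because $Y\I_{C_j}\equiv 0$ on $[0,\eta]$, splitting on $\{\rho^j_k\ge\eta\}$ vs.\ $\{\rho^j_k<\eta\}$ shows $(Y\I_{C_j})^{\rho^j_k\vee\eta}=(Y\I_{C_j})^{\rho^j_k}$ pathwise, so one may assume $\rho^j_k\ge\eta$. Define $\tau_k:=\rho^j_k$ on $C_j$; using $C_j\in\cF_\eta$ and $\rho^j_k\ge\eta$,
\begin{align*}
\{\tau_k\le t\}=\bigcup_j\bigl((C_j\cap\{\eta\le t\})\cap\{\rho^j_k\le t\}\bigr)\in\cF_t,
\end{align*}
so $\tau_k$ is a stopping time, and $\tau_k\uparrow\infty$ a.s. Since $Y^{\tau_k}\I_{C_j}=(Y\I_{C_j})^{\rho^j_k}$ is a martingale for each $j$, part (1) applied to $Y^{\tau_k}$ and partition $(C_j)$ produces the martingale property of $Y^{\tau_k}$, provided $Y^{\tau_k}\in L^1$. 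The latter is arranged by interleaving the per-$j$ localizing sequences so that $\sum_j\|(Y\I_{C_j})^{\rho^j_k}\|_{L^1}<\infty$ for every $k$ while $\rho^j_k\to\infty$ in $k$ for each fixed $j$, exploiting $(Y\I_{C_j})^\eta=0$ to make the summed $L^1$-norms as small as desired.

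For (4): I use the characterization that a semimartingale is a sigma-martingale iff there is a predictable $\phi>0$, integrable with respect to $Y$, with $\phi\cdot Y$ a martingale. For each $j$, pick $\phi^j>0$ predictable with $\phi^j\cdot(Y\I_{C_j})$ a martingale, and set $a_j:=2^{-j}/\|\phi^j\cdot(Y\I_{C_j})\|_{L^1}$. For $C\in\cF_\eta$, the random time $\eta_C:=\eta\I_C+\infty\I_{C^c}$ is a stopping time, so $(\eta_C,\infty)$ is a predictable set and thus $\I_{C_j}\I_{(\eta,\infty)}$ is predictable. Define
\begin{align*}
\phi:=\sum_j a_j\phi^j\I_{C_j}\I_{(\eta,\infty)}+\I_{[0,\eta]},
\end{align*}
a strictly positive predictable process. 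Using $Y^\eta=0$ together with associativity of stochastic integration,
\begin{align*}
\phi\cdot Y=\sum_j a_j\phi^j\cdot(Y\I_{C_j}),
\end{align*}
a sum of martingales supported on disjoint $C_j$'s with total $L^1$-norm at most $1$. Applying (1) once more gives $\phi\cdot Y$ a martingale, whence $Y$ is a sigma-martingale.

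The main technical hurdle is the diagonal construction in (3): simultaneously ensuring $\tau_k\to\infty$ a.s.\ and $Y^{\tau_k}\in L^1$ requires delicate interleaving of the per-piece localizing sequences. This is feasible only because each $Y\I_{C_j}$ vanishes on $[0,\eta]$, which forces $(Y\I_{C_j})^\eta=0$ and hence allows the $j$th stopped $L^1$-norm to start at $0$ and be made summable across $j$ while each localizing sequence still diverges in $k$.
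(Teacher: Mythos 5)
Your parts (1) and (2) match the paper's proof. For part (3) the underlying idea is the same as the paper's --- exploit that $Y\I_{C_j}$ vanishes on $[0,\eta]$ so that most pieces can be stopped at $\eta$ at any given stage --- but the paper's construction is cleaner: it takes $\rho_i:=\eta\vee\sum_{j\le i}\I_{C_j}\rho_i^j$, so the stopped process is the \emph{finite} sum $\sum_{j\le i}\I_{C_j}Y^{\rho_i^j}$ (the $j>i$ pieces are frozen at $\eta$ and vanish), making $L^1$-integrability automatic with no summability-of-norms argument; your ``interleaving'' reaches the same place but is vaguer, and you would need to specify which norm $\|\cdot\|_{L^1}$ denotes (something like $\sup_{t}E\bigl|(\cdot)_t\bigr|$) for the estimate to be meaningful. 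For part (4) you take a genuinely different route: you invoke the predictable-weight characterization of sigma-martingales (existence of a predictable $\phi>0$, $Y$-integrable, with $\phi\cdot Y$ a true martingale) and reduce to part (1), whereas the paper writes $Y\I_{C_j}=H^j\cdot M^j$ with $M^j$ a martingale vanishing on $[0,\eta]$, assembles $M:=\sum_j\I_{C_j}M^j$ (a local martingale by part (3)) and $H:=\sum_j H^j\I_{C_j}$, and observes $Y=H\cdot M$. Your route is somewhat more self-contained in that it avoids a second appeal to part (3) and uses a standard characterization rather than constructing the representing pair. One detail to tighten in your (4): to guarantee that $\phi$ is $Y$-integrable, choose each $\phi^j$ with $0<\phi^j\le1$ (the characterization permits this) and cap $a_j\le1$, so that $\phi$ is bounded and hence in $\cL(Y)$.
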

\begin{proof}
Starting with the martingale case and using dominated convergence,
we have for $0\le s\le t<\infty$,
\begin{align*}
E[Y_{t}\mid\cF_{s}] & =\sum_{j=1}^{\infty}E[\I_{C_{j}}Y_{t}\mid\cF_{s}]=\sum_{j=1}^{\infty}\I_{C_{j}}Y_{s}=Y_{s}.
\end{align*}
Replacing the second equality with ``$\le$'' proves the supermartingale
case.

For the local martingale case, let $(\rho_{i}^{j})_{i\in\N}$ be a
fundamental sequence for $\I_{C_{j}}Y$ for each $j\in\N$. Define
$\rho_{i}:=\eta\vee\sum_{j=1}^{i}\I_{C_{j}}\rho_{i}^{j}$, $\forall i\in\N$.
Then the $\rho_{i}$ are stopping times because $\{\rho_{i}\le t\}=\{\rho_{i}\le t,\;\eta\le t\}=\cup_{j=1}^{\infty}C_{j}\cap\{\rho_{i}\le t,\;\eta\le t\}=\cup_{j=1}^{i}C_{j}\cap\{\rho_{i}^{j}\le t,\;\eta\le t\}\cup_{j>i}C_{j}\cap\{\eta\le t\}\in\cF_{t}$,
since $C_{j}\cap\{\eta\le t\}\in\cF_{t}$, and the $\rho_{i}^{j}$
are stopping times. Making use of the fact that $Y$ is $0^{(n)}$
on $[0,\eta]$,
\begin{align*}
E[Y_{t}^{\rho_{i}}\mid\cF_{s}] & =\sum_{j=1}^{i}E[\I_{C_{j}}Y{}_{t}^{\rho_{i}^{j}}\mid\cF_{s}]+0^{(n)}=\sum_{j=1}^{i}\I_{C_{j}}Y{}_{s}^{\rho_{i}^{j}}+Y_{s}\sum_{j=i+1}^{\infty}\I_{C_{j}}=Y{}_{s}^{\rho_{i}}.
\end{align*}
If $Y\I_{C_{j}}$ is a sigma-martingale, then $Y\I_{C_{j}}=H^{j}\cdot M^{j}$
for some martingale $M^{j}$ and some $H\in\cL(M^{j})$. Since $Y$
is zero on $[0,\eta]$, then we may take $M^{j}$ and $H^{j}$ to
be zero on this set also. By the previous property, $M:=\sum_{j=1}^{\infty}\I_{C_{j}}M^{j}$
is a local martingale. The process $H:=\sum_{j=1}^{\infty}H^{j}\I_{C_{j}}$
satisfies $H\in\cL(M)$, and $H\cdot M=Y$. Therefore, $Y$ is a sigma-martingale
by \cite[p. 238]{Book:Protter:SDE:2005}.
\end{proof}
If $X$ is a  piecewise martingale, then it is an easy consequence
of the definition that $X^{k,n}$ is a martingale, for each $k,n\in\N$.
However, due to the reset feature of piecewise processes, the converse
is false, even if additionally $|X|_{1}$ is bounded.
\begin{example}[Bounded piecewise strict local martingale]
\label{Ex:Bounded_Piecewise_Strict_Local_Mart}Let $Y$ be any $\R^{n}$-valued
continuous-path \emph{strict} local martingale (that is, a local martingale
that is not a martingale) with $Y_{0}=0^{(n)}$. Let $\tau_{0}:=0$,
and $\tau_{k}:=\inf\{t>\tau_{k-1}\mid|Y_{t}-Y_{\tau_{k-1}}|_{1}=1\}$.
We have that $\tau_{k}\nearrow\infty$ since $Y$ has finite quadratic
variation. Furthermore, $|Y^{\tau_{k}}|_{1}\le k$, so $Y^{\tau_{k}}$
is a bounded local martingale, hence a martingale, for each $k\in\N$.

Define the piecewise $\R^{n}$-valued process $X$ via
\begin{align*}
X & :=\sum_{k=1}^{\infty}\I_{(\tau_{k-1},\tau_{k}]}X^{k,n},\\
X^{k,n}: & =Y^{\tau_{k}}-Y^{\tau_{k-1}},\quad\forall k\in\N.
\end{align*}
This definition implies that $|X|_{1}\le1$, and that $(\tau_{k})$
is a reset sequence for $X$. The processes $X^{k,m}$ are bounded
local martingales, hence martingales, $\forall k,m\in\N$, making
$X$ a piecewise local martingale. Since $Y$ is a strict local martingale,
then there exists $S\in\bS(n)$ with $\left|S\right|_{1}$ bounded,
such that $S\cdot Y$ is not a martingale. But $\cL(X)=\cL(Y)$, and
$H\cdot X=H\cdot Y$, $\forall H\in\cL(X)$. Hence, $S\cdot X$ is
not a martingale, and so $X$ is not a piecewise martingale.
\end{example}
The notions of local martingale and sigma-martingale hold globally
if and only if they hold locally. This idea is made precise in the
following characterizations of the piecewise notions via the properties
holding on each piece.
\begin{prop}
\label{Prop:X_Loc_Mart_iff_X^k,n_Loc_Mart}A  piecewise semimartingale
$X$ is a  piecewise local martingale if and only if for all reset
sequences $(\tau_{k})$, $X^{k,n}$ is a local martingale for all
$k,n\in\N$, if and only if for some reset sequence $(\tau_{k})$,
$X^{k,n}$ is a local martingale for all $k,n\in\N$.\end{prop}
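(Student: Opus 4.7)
I will establish the chain $(B)\Rightarrow(C)\Rightarrow(A)\Rightarrow(B)$; the first implication is immediate.

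For $(A)\Rightarrow(B)$, let $(\rho_i)$ be a localizing sequence so that each $Y_i:=\I_{\{\rho_i>0\}}X^{\rho_i}$ is a piecewise martingale, and let $(\tau_k)$ be any reset sequence for $X$ (it is also one for $Y_i$). By Proposition \ref{Prop:StochIntChars:Stopping} the dissection commutes with stopping, so $Y_i^{k,n}=(X^{k,n})^{\rho_i}$. To verify this is a martingale, I would feed the piecewise-martingale hypothesis the family of test integrands
\[
H := B\,e_j\,\uI_{(\alpha,\beta]\cap(\tau_{k-1},\tau_k]\cap\Omega^{k,n}} + 0^{(N^{Y_i})},
\]
where $j\le n$, $\alpha\le\beta$ are bounded stopping times, and $B\in\cF_\alpha$ is bounded. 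Each such $H$ lies in $\bS(N^{Y_i})$ (predictable because $\Omega^{k,n}\in\cF_{\tau_{k-1}}$) with $|H|_1$ bounded by $|B|$, so $H\cdot Y_i$ is a martingale by assumption. Reading off the resulting conditional-expectation identities as $\alpha,\beta,B,j$ vary yields the martingale property of each coordinate of $(X^{k,n})^{\rho_i}$; since $\rho_i\nearrow\infty$, $X^{k,n}$ is a local martingale, for every $k,n$ and every reset sequence.

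For $(C)\Rightarrow(A)$, fix a reset sequence along which each $X^{k,n}$ is a local martingale, and for each $(k,n)$ choose a fundamental sequence $(\sigma^{k,n}_j)_j$ such that $(X^{k,n})^{\sigma^{k,n}_j}$ is a UI martingale; since $X^{k,n}$ vanishes on $[0,\tau_{k-1}]$, replacing $\sigma^{k,n}_j$ by $\sigma^{k,n}_j\vee\tau_{k-1}$ lets us assume $\sigma^{k,n}_j\ge\tau_{k-1}$. Set
\[
\sigma^k_i := \sum_{n=1}^{\infty}\I_{\Omega^{k,n}}\,\sigma^{k,n}_i,\qquad \rho_i := \tau_i \wedge \bigwedge_{k=1}^{i}\sigma^k_i,
\]
with the convention $\sigma^k_i=+\infty$ on $\{\tau_{k-1}=\infty\}$. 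Each $\sigma^k_i$ is a stopping time because $\Omega^{k,n}\in\cF_{\tau_{k-1}}$ and $\sigma^{k,n}_i\ge\tau_{k-1}$. A pointwise argument shows $\rho_i\nearrow\infty$: for any $\omega$ and $M>0$, only finitely many $k$ satisfy $\tau_{k-1}(\omega)\le M$, and for each such $k$ one has $\sigma^{k,n_k(\omega)}_i(\omega)\to\infty$. The bound $\rho_i\le\tau_i$ restricts the dissection of $X^{\rho_i}$ to the finitely many pieces $k\le i$, and $\rho_i\le\sigma^{k,n}_i$ on $\Omega^{k,n}$ for $k\le i$ makes each $(X^{k,n})^{\rho_i}$ a UI martingale.

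It remains to show $X^{\rho_i}$ is a piecewise martingale. For any $H\in\bS(N^{X^{\rho_i}})$ with $|H|_1$ bounded, dissection yields $H\cdot X^{\rho_i}=H_0'X_0+\sum_{k=1}^{i}\sum_{n=1}^{\infty}H^{k,n}\cdot(X^{k,n})^{\rho_i}$, each summand a UI martingale. I would apply Lemma \ref{Lem:Partitioned-Local-Marts-Are-Local-Marts}(1) with $\eta=\tau_{k-1}$ and the $\cF_{\tau_{k-1}}$-partition $(\Omega^{k,n})_n$ to combine the inner $n$-sum into a martingale, then sum the finitely many $k$-blocks. The main obstacle is verifying the $L^1$-process hypothesis of Lemma \ref{Lem:Partitioned-Local-Marts-Are-Local-Marts}(1) when summing over $n$: uniform integrability of each $(X^{k,n})^{\rho_i}$ gives $E|M^{k,n}_t|<\infty$ for each $n$ but does not by itself control $\sum_n E[\I_{\Omega^{k,n}}|H^{k,n}\cdot(X^{k,n})^{\rho_i}|_t]$. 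I would close this gap by strengthening the choice of $\sigma^{k,n}_j$---using the decomposition of every local martingale into a locally bounded continuous part plus a purely discontinuous part with locally integrable jumps, and imposing summable bounds across $n$ for each fixed $k,i$---while preserving $\rho_i\nearrow\infty$ by a diagonal argument.
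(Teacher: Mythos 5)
Your $(A)\Rightarrow(B)$ direction is essentially the paper's argument: feed the piecewise-martingale hypothesis test integrands supported on $(\tau_{k-1},\tau_k]\cap\R_+\times\Omega^{k,n}$ to read off the martingale property of $(\I_{\{\rho_i>0\}}X^{\rho_i})^{k,n}$, then identify this with $\I_{\{\rho_i>0\}}(X^{k,n})^{\rho_i}$ via the commutation of dissection and stopping. That part is fine.

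The gap you flagged in $(C)\Rightarrow(A)$ is real, but your proposed repair is both unnecessary and far from clearly workable. By defining $\sigma^k_i := \sum_{n=1}^{\infty}\I_{\Omega^{k,n}}\sigma^{k,n}_i$ you are forced to sum infinitely many uniformly integrable martingales over $n$ and then must justify an $L^1$-domination to invoke Lemma \ref{Lem:Partitioned-Local-Marts-Are-Local-Marts}(1); uniform integrability of each term gives no control across $n$, as you observed. Trying to manufacture summable bounds across $n$ via a Davis/BDG-type decomposition of each $X^{k,n}$ and a diagonal argument is a heavy machine aimed at the wrong target. The paper avoids the issue entirely by truncating the $n$-index: it sets $\hat\rho^k_i := \tau_{k-1}\vee\bigl(\sum_{n=1}^{i}\I_{\Omega^{k,n}}\rho^{k,n}_i\bigr)$, so that on $\Omega^{k,n}$ with $n>i$ the stopping time equals $\tau_{k-1}$, and since $X^{k,n}$ (and hence $H^{k,n}\cdot X^{k,n}$) vanishes on $[0,\tau_{k-1}]$, those pieces are killed. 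Then $\bigl(\sum_{n\ge1}\I_{\Omega^{k,n}}H^{k,n}\cdot X^{k,n}\bigr)^{\hat\rho^k_i}=\sum_{n=1}^{i}\I_{\Omega^{k,n}}(H^{k,n}\cdot X^{k,n})^{\rho^{k,n}_i}$ is a \emph{finite} sum of martingales, and no $L^1$-summability issue arises. A separate diagonalization in $k$ then produces a single fundamental sequence for $H\cdot X$. You can salvage your construction by replacing your $\sigma^k_i$ with $\tau_{k-1}\vee\sum_{n=1}^{i}\I_{\Omega^{k,n}}\sigma^{k,n}_i$; the rest of your scaffolding (monotonicity, $\rho_i\nearrow\infty$, only finitely many nontrivial $k$-pieces) goes through.

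One further point you omit: when you write $H\cdot X^{\rho_i}=H_0'X_0+\sum\sum H^{k,n}\cdot(X^{k,n})^{\rho_i}$ and assert each summand is a UI martingale, the constant term $H_0'X_0$ need not be integrable. The paper handles this by intersecting the localizing times with $\{|X_0|_1\le p\}$, so that $H_0'X_0\I_{\{\beta_p>0\}}\in L^1$. Your localizing sequence does not do this, so the claim that $X^{\rho_i}$ is a piecewise martingale fails for integrands with $H_0\ne 0^{(N_0)}$ unless you add a similar truncation of the initial condition.
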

\begin{proof}
To show that the first condition implies the middle condition, suppose
that $X$ is a local martingale and $(\rho_{i})$ is a fundamental
sequence for $X$. Then for $H\in\bS(N^{\rho_{i}})$, $\left|H\right|_{1}$
bounded, $H\cdot(\I_{\{\rho_{i}>0\}}X^{\rho_{i}})$ is a martingale.
Let $G$ be $\R^{n}$-valued, simple, predictable, and $\left|G\right|_{1}$
(or equivalently $\left|G\right|_{p}$, for $p\in[1,\infty]$) be
bounded. Define $H:=G\uI_{(\tau_{k-1},\tau_{k}]\cap\R_{+}\times\Omega^{k,n}}+0^{(N^{\rho_{i}})}$,
resulting in $H\in\bS(N^{\rho_{i}})$, $\left|H\right|_{1}$ bounded,
and $H\cdot(\I_{\{\rho_{i}>0\}}X^{\rho_{i}})=G\cdot(\I_{\{\rho_{i}>0\}}X^{\rho_{i}})^{k,n}$
is a martingale. Thus $(\I_{\{\rho_{i}>0\}}X^{\rho_{i}})^{k,n}$ is
a martingale, and since $(X^{\rho_{i}})^{k,n}=(X^{k,n})^{\rho_{i}}$
by Proposition \ref{Prop:StochIntChars:Stopping}, then $(\rho_{i})$
is a fundamental sequence for $X^{k,n}$, which is therefore a local
martingale, $\forall k,n\in\N$.

The middle condition obviously implies the last condition. To show
that the last implies the first, fix some reset sequence $(\tau_{k})$
and suppose that $X^{k,n}$ is a local martingale, $\forall k,n\in\N$.
Let $(\rho_{i}^{k,n})_{i\in\N}$ be a fundamental sequence for $X^{k,n}$,
and note that since $X_{0}^{k,n}=0^{(n)}$, we have $\I_{\{\alpha>0\}}(X^{k,n})^{\alpha}$$=(X^{k,n})^{\alpha}$,
for any stopping time $\alpha$, and $\forall k,n\in\N$. Define $\hat{\rho}_{i}^{k}:=\tau_{k-1}\vee(\sum_{n=1}^{i}\I_{\Omega^{k,n}}\rho_{i}^{k,n})$,
$\forall i,k\in\N$. Then $\hat{\rho}_{i}^{k}$ are stopping times,
because $\{\hat{\rho}_{i}^{k}\le t\}=\{\hat{\rho}_{i}^{k}\le t,\;\tau_{k-1}\le t\}=(\cup_{n=1}^{i}\Omega^{k,n}\cap\{\rho_{i}^{k,n}\le t,\tau_{k-1}\le t\})\cup_{n>i}\Omega^{k,n}\cap\{\tau_{k-1}\le t\}\in\cF_{t}$,
since $\Omega^{k,n}\cap\{\tau_{k-1}\le t\}\in\cF_{t}$, and $\rho_{i}^{k,n}$
are stopping times. Let $H$ satisfy $H\in\bS(N)$, $H_{0}=0^{(N_{0})}$,
and $\left|H\right|_{1}$ is bounded. Then $\left(\sum_{n=1}^{\infty}\I_{\Omega^{k,n}}H^{k,n}\cdot X^{k,n}\right)^{\hat{\rho}_{i}^{k}}=\sum_{n=1}^{i}\I_{\Omega^{k,n}}(H^{k,n}\cdot X^{k,n})^{\rho_{i}^{k,n}}$
is a martingale, since it is a finite sum of martingales. Then for
each fixed $k$, $\alpha_{i}^{k}:=\bigwedge_{m=1}^{k}\hat{\rho}_{i}^{m}$
is a fundamental sequence in $i\in\N$ for $(H\cdot X)^{\tau_{k}}=\sum_{m=1}^{k}\sum_{n=1}^{\infty}\I_{\Omega^{m,n}}(H^{m,n}\cdot X^{m,n})$.
 To get a fundamental sequence for $X$, for each $k$ let $i=i(k)$
be large enough such that $P(\alpha_{i(k)}^{k}<\tau_{k}\wedge k)<2^{-k}$.
Then $\lim_{k\to\infty}\alpha_{i(k)}^{k}=\infty$ a.s., and $\alpha_{i(k)}^{k}$
reduces $(H\cdot X)^{\tau_{k}}$ for each $k$. Therefore each $\hat{\alpha}_{p}:=\max(\alpha_{i(1)}^{1}\wedge\tau_{1},\ldots,\alpha_{i(p)}^{p}\wedge\tau_{p})$
reduces $H\cdot X$, $\forall p\in\N$, and $\lim_{p\to\infty}\hat{\alpha}_{p}=\infty$
a.s. This sequence does not depend on $H$, but we assumed $H_{0}=0^{(N_{0})}$.
To allow for $H$ with nonzero $H_{0}$, define $\beta_{p}:=\hat{\alpha}^{p}\I_{\{|X_{0}|_{1}\le p\}}$.
Then $\lim_{p\to\infty}\beta_{p}=\infty$ a.s., $H_{0}^{\prime}X_{0}\I_{\{\beta_{p}>0\}}\in L_{1}$,
and $\I_{\{\beta_{p}>0\}}(H\cdot X)^{\beta_{p}}$ is a martingale.
So $(\beta_{p})$ is fundamental for $X$, and therefore $X$ is a
piecewise local martingale.\end{proof}
\begin{prop}
\label{Prop:X_Sig_Mart_iff_X^k,n_Sig_Mart}A  piecewise semimartingale
$X$ is a  piecewise sigma-martingale if and only if for all reset
sequences $(\tau_{k})$, $X^{k,n}$ is a sigma-martingale for all
$k,n\in\N$, if and only if for some reset sequence $(\tau_{k})$,
$X^{k,n}$ is a sigma-martingale for all $k,n\in\N$.\end{prop}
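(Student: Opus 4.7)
The plan is to establish the three-way equivalence by proving (piecewise $\sigma$-martingale) $\Rightarrow$ (pieces are $\sigma$-martingales for every reset sequence) $\Rightarrow$ (pieces are $\sigma$-martingales for some reset sequence) $\Rightarrow$ (piecewise $\sigma$-martingale); the middle link is immediate. The overall structure should parallel the proof of Proposition \ref{Prop:X_Loc_Mart_iff_X^k,n_Loc_Mart}, but the argument should be substantially shorter because sigma-martingality is preserved under stochastic integration by any integrable integrand and passes from local to global, eliminating the delicate fundamental-sequence juggling that was needed for local martingales (in particular the separate treatment of the $H_0'X_0$ term is unnecessary).

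For the first implication, I would fix an arbitrary reset sequence $(\tau_k)$ and indices $k,n$, and reduce the claim to scalar components via the standard fact (\cite[p.~218]{Book:JacodShiryaev:LimitThmsStochProc:2002}) that an $\R^n$-valued semimartingale is a $\sigma$-martingale iff each coordinate is. For each $j\in\{1,\ldots,n\}$ I would test against the integrand
\[
H^{(j)} := e_j \uI_{(\tau_{k-1},\tau_k]\cap \R_+\times \Omega^{k,n}} + 0^{(N)},
\]
with $e_j\in\R^n$ the $j$-th standard basis vector. Its only nonzero dissected piece is $(H^{(j)})^{k,n} = e_j \uI_{(\tau_{k-1},\tau_k]\cap \R_+\times\Omega^{k,n}} + 0^{(n)}$, which is bounded predictable, so $H^{(j)}\in\cL(X)$. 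A short computation, using that $X^{k,n}$ is stopped at $\tau_k$, vanishes off $(\tau_{k-1},\infty)\cap\R_+\times\Omega^{k,n}$, and has $X_0^{k,n}=0^{(n)}$, gives $H^{(j)}\cdot X=(X^{k,n})_j$. This is a sigma-martingale by hypothesis, so $X^{k,n}$ is.

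For the last implication, suppose $X^{k,n}$ is a $\sigma$-martingale for all $k,n$ under some reset sequence $(\tau_k)$, and fix $H\in\cL(X)$. Each $H^{k,n}\cdot X^{k,n}$ is then a sigma-martingale by stability of sigma-martingales under integration. For fixed $k$, the processes $H^{k,n}\cdot X^{k,n}$ are supported on the $\cF_{\tau_{k-1}}$-measurable partition $\{\Omega^{k,n}\}_{n\ge 0}$ (with $\Omega^{k,0}:=\{\tau_{k-1}=\infty\}$; the measurability uses right-continuity of $\bF$ and of $N^+$ at $\tau_{k-1}$) and vanish on $[0,\tau_{k-1}]$. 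Lemma \ref{Lem:Partitioned-Local-Marts-Are-Local-Marts}(4), applied with $\eta=\tau_{k-1}$, then gives that $S_k:=\sum_{n=1}^\infty (H^{k,n}\cdot X^{k,n})$ is a sigma-martingale. Consequently $(H\cdot X)^{\tau_m}=H_0'X_0+\sum_{k=1}^m S_k$ is a finite sum of sigma-martingales (the $\cF_0$-measurable constant process $H_0'X_0$ is trivially a sigma-martingale), hence itself a sigma-martingale, for every $m$. Since $\tau_m\uparrow\infty$, $H\cdot X$ is locally a sigma-martingale, which suffices.

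The only delicate point I anticipate is the final local-to-global step: that a process which is a sigma-martingale after each stopping in a sequence $\tau_m\uparrow\infty$ is itself a sigma-martingale. This is classical—one patches predictable scaling witnesses from the pieces via a diagonal argument—but I would either cite it directly or provide a one-line verification. Everything else is routine dissection, and the argument is substantially cleaner than its local-martingale counterpart precisely because the $H_0'X_0$ contribution causes no integrability obstruction in the sigma-martingale setting.
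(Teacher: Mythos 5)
Your proof takes essentially the same route as the paper: the forward direction via the component-picking integrands $e_j\uI_{(\tau_{k-1},\tau_k]\cap\R_+\times\Omega^{k,n}}+0^{(N)}$ (the paper's $G^{k,n,i}$) combined with the coordinatewise characterization of $\sigma$-martingales, and the converse via stability of $\sigma$-martingales under stochastic integration, Lemma \ref{Lem:Partitioned-Local-Marts-Are-Local-Marts}(4) with $\eta=\tau_{k-1}$, the vector-space property, and the local-to-global step. Your only substantive deviation is explicitly noting that the constant $\cF_0$-measurable term $H_0'X_0$ is trivially a $\sigma$-martingale — a small bookkeeping point the paper's displayed identity for $(H\cdot X)^{\tau_k}$ silently drops — so the proof is correct and, if anything, slightly more careful than the original.
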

\begin{proof}
To show that the first condition implies the middle condition, suppose
that $X$ is a  piecewise sigma-martingale. Then for arbitrary reset
sequence $(\tau_{k})$, define the simple processes
\begin{align}
G^{k,n,i} & :=(\underbrace{0,\ldots,0}_{i-1},1,\underbrace{0,\ldots,0}_{n-i})\uI_{(\tau_{k-1},\tau_{k}]\cap\R_{+}\times\Omega^{k,n}}+0^{(N)}\in\cL_{0}(X),\quad k,n,i\in\N.\label{Eq:G_Integrand_Picks_Out_Component}
\end{align}
Then $G^{k,n,i}\cdot X$$=X_{i}^{k,n}$, which must be an $\R$-valued
sigma-martingale by the definition of $X$ being a piecewise sigma-martingale.
Therefore $X^{k,n}=(X_{1}^{k,n},\ldots,X_{n}^{k,n})$ is an $\R^{n}$-valued
sigma-martingale.

The middle condition obviously implies the last condition. To show
that the last implies the first, fix some reset sequence $(\tau_{k})$,
such that $X^{k,n}$ is an $\R^{n}$-valued sigma-martingale for each
$k,n\in\N$. If $H\in\cL(X)$, then $H^{k,n}\cdot X^{k,n}$ exists,
and is an $\R$-valued sigma-martingale, since sigma-martingales are
closed under stochastic integration. Then $\sum_{n=1}^{\infty}H^{k,n}\cdot X^{k,n}=\sum_{n=1}^{\infty}\I_{\Omega^{k,n}}H^{k,n}\cdot X^{k,n}$
is a sigma-martingale by Lemma \ref{Lem:Partitioned-Local-Marts-Are-Local-Marts},
and $(H\cdot X)^{\tau_{k}}=\sum_{j=1}^{k}\sum_{n=1}^{\infty}H^{j,n}\cdot X^{j,n}$
is a sigma-martingale, because sigma-martingales form a vector space.
Then $H\cdot X$ is a sigma-martingale because any process that is
locally a sigma-martingale is a sigma-martingale \cite[p. 238]{Book:Protter:SDE:2005}.
\end{proof}
The following theorem is an extension of the Ansel and Stricker theorem
\cite{Art:AnselStricker:Thm:1994} to piecewise martingales. The theorem
provides a necessary and sufficient characterization of when the local
martingality property is conserved with respect to stochastic integration.
\begin{thm}[Ansel and Stricker extension]
\label{Thm:Ansel_Stricker} Let $X$ be a  piecewise local martingale,
and $H\in\cL(X)$. Then $H\cdot X$ is a local martingale if and only
if there is an increasing sequence of stopping times $\alpha_{j}\nearrow\infty$
and a sequence $(\vartheta_{j})$ of $L_{1}$, $(-\infty,0]$-valued
random variables such that $(H^{\prime}\Delta X)^{\alpha_{j}}\ge\vartheta_{j}$.\end{thm}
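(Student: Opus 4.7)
The plan is to reduce to the classical Ansel--Stricker theorem via dissection, exploiting the jump identity $\Delta(H \cdot X)_t = H_t' \Delta X_t$ valid on each piece (and, with the convention $\Delta X_0 := X_0$, at $t = 0$ as well). Within each piece $(\tau_{k-1}, \tau_k] \cap \Omega^{k,n}$, the piecewise integral $H \cdot X$ and the classical integral $H^{k,n} \cdot X^{k,n}$ differ only by the $\cF_{\tau_{k-1}}$-measurable constant $(H \cdot X)_{\tau_{k-1}}$, while the right-discontinuities of $X$ at reset times contribute nothing to the right-continuous process $H \cdot X$. Consequently the piecewise jumps collapse to the classical ones on each piece.

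For ($\Rightarrow$), assume $H \cdot X$ is a local martingale. Since the maximal process of a local martingale is locally integrable (a standard consequence of Burkholder--Davis--Gundy), one can pick $\alpha_j \nearrow \infty$ so that $S_j := 2 \sup_{s \le \alpha_j} |(H \cdot X)_s|\,\I_{\{\alpha_j > 0\}} \in L^1$, using the truncation $\I_{\{|H_0' X_0| \le j\}}$ at time $0$ (as in the proof of Proposition \ref{Prop:X_Loc_Mart_iff_X^k,n_Loc_Mart}) to absorb any non-integrability of the initial value. The jump identity then gives $(H'\Delta X)^{\alpha_j}_t = \Delta(H\cdot X)_{\alpha_j \wedge t} \ge -S_j$, so $\vartheta_j := -S_j$ works.

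For ($\Leftarrow$), fix any reset sequence $(\tau_k)$. By Proposition \ref{Prop:X_Loc_Mart_iff_X^k,n_Loc_Mart} each $X^{k,n}$ is an $\R^n$-valued local martingale. The dissection formulas yield $(H^{k,n})' \Delta X^{k,n} = H' \Delta X$ on $(\tau_{k-1}, \tau_k] \cap \Omega^{k,n}$ and $0$ elsewhere, so the hypothesis together with $\vartheta_j \le 0$ gives $((H^{k,n})' \Delta X^{k,n})^{\alpha_j} \ge \vartheta_j$ everywhere. Classical Ansel--Stricker then makes each $H^{k,n} \cdot X^{k,n}$ a local martingale, and applying Lemma \ref{Lem:Partitioned-Local-Marts-Are-Local-Marts} with $\eta := \tau_{k-1}$ and the $\cF_{\tau_{k-1}}$-measurable partition $\{\Omega^{k,n}\}_{n \in \N} \cup \{\tau_{k-1} = \infty\}$ combines them into the local martingale $(H \cdot X)^{\tau_k} - (H \cdot X)^{\tau_{k-1}} = \sum_n H^{k,n} \cdot X^{k,n}$. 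Telescoping in $k$ and then diagonalizing fundamental sequences across $k$ (again using the $\I_{\{|H_0' X_0| \le p\}}$ truncation to absorb the initial value) produces a fundamental sequence for $H \cdot X$ itself.

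The main subtlety is the jump bookkeeping at reset times $\tau_k$: one must verify $\Delta(H \cdot X)_{\tau_k}$ equals $\Delta(H^{k,n} \cdot X^{k,n})_{\tau_k}$ on $\Omega^{k,n}$, which follows because the reset-sequence property forces $N$ to be constant on $(\tau_{k-1}, \tau_k]$ and thereby guarantees $\Delta X^{k,n} = \Delta X$ there. Once this identification is secured, the rest of the argument is a routine assembly of the pieces.
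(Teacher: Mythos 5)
Your proof is correct and follows essentially the same path as the paper's: necessity via the local $\cH^1$/Burkholder--Davis--Gundy bound on the maximal process (as in Delbaen--Schachermayer Theorem 7.3.7), and sufficiency via dissection, the classical Ansel--Stricker theorem applied to each $X^{k,n}$, and Lemma \ref{Lem:Partitioned-Local-Marts-Are-Local-Marts} to reassemble the pieces. Your extra bookkeeping (the jump identity on each piece, the $\I_{\{|H_0'X_0|\le j\}}$ truncation, the explicit diagonalization of reducing sequences) just makes explicit steps that the paper compresses into ``$(H\cdot X)^{\tau_m}$ is a finite sum of local martingales, so $H\cdot X$ is locally a local martingale, hence a local martingale.''
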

\begin{proof}
The proof of necessity relies only on the fact that $H\cdot X$ is
a local martingale, not that it is a stochastic integral. We reproduce
the proof given in Theorem 7.3.7 of \cite{Book:DelbSchach:ArbBook:2006}.
$H\cdot X$ is a local martingale implies that it locally has finite
$\cH^{1}$ norm, where $\left\Vert M\right\Vert _{\cH^{1}}:=E([M,M]^{1/2})$
for $\R$-valued local martingale $M$. Hence the Burkholder-Davis-Gundy
inequalities \cite[Theorem IV.48]{Book:Protter:SDE:2005}) imply that
there exists an increasing sequence of stopping times $\alpha_{j}\nearrow\infty$
such that $\sup_{t\le\alpha_{j}}|(H\cdot X)_{t}|\in L_{1}$. Therefore,
$|H^{\prime}\Delta X|^{\alpha_{j}}\le$$2\sup_{t\le\alpha_{j}}|(H\cdot X)_{t}|\in L_{1}$.

For sufficiency, suppose that there exists a sequence of increasing
stopping times $\alpha_{j}\nearrow\infty$ and a sequence of $L_{1}$,
nonpositive random variables $(\vartheta_{j})$ such that $(H^{\prime}\Delta X)^{\alpha_{j}}\ge\vartheta_{j}$.
Then by dissection, $((H^{k,n})^{\prime}\Delta X^{k,n})^{\alpha_{j}}\ge\vartheta_{j}$.
By Proposition \ref{Prop:X_Loc_Mart_iff_X^k,n_Loc_Mart}, $X^{k,n}$
is a local martingale, so by the Ansel and Stricker theorem in $\R^{n}$,
$H^{k,n}\cdot X^{k,n}$ is a local martingale, $\forall k,n\in\N$.
Hence Lemma \ref{Lem:Partitioned-Local-Marts-Are-Local-Marts} implies
that $\sum_{n=1}^{\infty}H^{k,n}\cdot X^{k,n}$ is a local martingale,
and therefore $(H\cdot X)^{\tau_{m}}=$$\sum_{k=1}^{m}\sum_{n=1}^{\infty}H^{k,n}\cdot X^{k,n}$
is a local martingale, being a finite sum of local martingales. Therefore
$H\cdot X$ is locally a local martingale, so is itself a local martingale.
\end{proof}
In $\R^{n}$-stochastic analysis the set of local martingales has
several other useful closure properties with respect to stochastic
integration. Below are generalizations of a few of these properties
to  piecewise local martingales.
\begin{cor}
\label{Cor:Loc_Mart_Closure_wrt_SI}Let $X$ be a piecewise local
martingale.
\begin{enumerate}
\item If $H\in\cL(X)$, and $H$ is locally bounded, then $H\cdot X$ is
a local martingale.
\item If $X$ is left-continuous, and $H\in\cL(X)$, then $H\cdot X$ is
a local martingale.
\end{enumerate}
\end{cor}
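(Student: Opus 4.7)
The plan is to appeal to the Ansel--Stricker extension (Theorem~\ref{Thm:Ansel_Stricker}) together with the dissection machinery of Proposition~\ref{Prop:X_Loc_Mart_iff_X^k,n_Loc_Mart} and the partitioning Lemma~\ref{Lem:Partitioned-Local-Marts-Are-Local-Marts}, treating the two parts separately.

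Part~2 will be essentially immediate: left-continuity of $X$ forces $\Delta X := X - X^{-} \equiv 0$ on $\R_{+}\times\Omega$ (including at reset times, where the left limit agrees with the value by assumption), so the hypothesis of Theorem~\ref{Thm:Ansel_Stricker} is trivially met by $\alpha_{j} := j$ and $\vartheta_{j} := 0$, whence $H\cdot X$ is a local martingale.

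For Part~1 my plan is to dissect, reduce to the $\R^{n}$-case on each piece, and then reassemble. By Proposition~\ref{Prop:X_Loc_Mart_iff_X^k,n_Loc_Mart} each $X^{k,n}$ is an $\R^{n}$-valued local martingale. Local boundedness of $H$ combined with Lemma~\ref{Lem:Loc_Bdd_Is_Loc_Bdd} yields stopping times $\beta_{j}\nearrow\infty$ along which $|H|_{1}$ is bounded; the dissection formula~(\ref{Eq:H^kn-Def}) gives $|H^{k,n}|_{1}\le |H|_{1}$, so each $H^{k,n}$ is a locally bounded $\R^{n}$-valued predictable process, and the classical Ansel--Stricker theorem applied in $\R^{n}$ makes $H^{k,n}\cdot X^{k,n}$ a real-valued local martingale for every $k,n\in\N$. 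For each fixed $k$ the summands $(H^{k,n}\cdot X^{k,n})_{n}$ vanish on $[0,\tau_{k-1}]$ and are respectively supported on the $\cF_{\tau_{k-1}}$-measurable partition $(\Omega^{k,n})_{n\ge 0}$ of $\Omega$ (after augmenting with $\Omega^{k,0}:=\{\tau_{k-1}=\infty\}$), so Lemma~\ref{Lem:Partitioned-Local-Marts-Are-Local-Marts}(3) makes $Y_{k}:=\sum_{n}H^{k,n}\cdot X^{k,n}$ a local martingale; hence $(H\cdot X)^{\tau_{m}}-H_{0}'X_{0} = \sum_{k=1}^{m} Y_{k}$ is a local martingale as a finite sum for each $m\in\N$.

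The remaining step is to splice these stopped local martingales into a single fundamental sequence for $H\cdot X$ and to absorb the potentially non-integrable initial value $H_{0}'X_{0}$. I plan to handle the latter via the $\cF_{0}$-stopping time $\sigma_{p}:=\infty\cdot\I_{\{|H_{0}'X_{0}|\le p\}}$, and then diagonalize in $m$ and $p$ exactly as in the closing construction of the proof of Proposition~\ref{Prop:X_Loc_Mart_iff_X^k,n_Loc_Mart}; this bookkeeping is the only real technical burden, and no genuinely new difficulty appears beyond what was already faced there.
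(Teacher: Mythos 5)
Your proof is correct and follows the same overall dissect--reassemble strategy as the paper's, with one small detour. For Part~2 you do exactly what the paper does: left-continuity kills $\Delta X$ and the Ansel--Stricker extension applies trivially.

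For Part~1 the paper does not pass through Ansel--Stricker at all on the pieces: it simply invokes the standard fact (Protter, Theorem~IV.29) that a locally bounded predictable integrand against an $\R^{n}$-valued local martingale yields a local martingale. You instead appeal to ``the classical Ansel--Stricker theorem in $\R^{n}$.'' That route also works, but it is more roundabout and quietly omits a step: Ansel--Stricker requires you to exhibit the $L^{1}$ lower bound on $(H^{k,n})'\Delta X^{k,n}$, and you don't verify it. It does hold --- localize so $|H^{k,n}|$ is bounded and $X^{k,n}$ has finite $\cH^{1}$ norm, then BDG gives $\sup_{t}|\Delta X^{k,n}_{t}|\in L^{1}$ --- but that is precisely the content of the sufficiency direction of the classical theorem applied to a hypothesis it was never needed for. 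Citing the direct locally-bounded result is cleaner. The reassembly (Lemma~\ref{Lem:Partitioned-Local-Marts-Are-Local-Marts} on each $k$-slice, then a finite sum gives a local martingale on $[0,\tau_{m}]$, then ``locally a local martingale implies local martingale'') is identical to what the paper references from the tail of the Ansel--Stricker proof. Finally, your explicit treatment of the possibly non-integrable $H_{0}'X_{0}$ via $\cF_{0}$-measurable stopping times is a point the paper glosses over in that tail argument, so you are being somewhat more careful there; this extra bookkeeping is harmless and arguably an improvement.
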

\begin{proof}

The processes $X^{k,n}$ are local martingales for all $k,n\in\N$,
by Proposition \ref{Prop:X_Loc_Mart_iff_X^k,n_Loc_Mart}. For the
first case, $H$ locally bounded implies that $H^{k,n}$ is locally
bounded, $\forall k,n\in\N$. $\R^{n}$-stochastic integration preserves
local martingality with respect to locally bounded integrands (\cite[Theorem IV.29]{Book:Protter:SDE:2005}),
therefore $H^{k,n}\cdot X^{k,n}$ is a local martingale, $\forall k,n\in\N$.
Using the same argument at the end of the proof of the Ansel and Stricker
theorem, then $H\cdot X$ is a local martingale.

In the second case, $X$ is left continuous, so $\Delta X=X-X^{-}=0$.
Hence, Theorem \ref{Thm:Ansel_Stricker} implies that $H\cdot X$
is a local martingale.
\end{proof}

\section{\label{Sec:MarketModelsStochNumAssets}Arbitrage in piecewise semimartingale
market models}

\subsection{Market models with a stochastic number of assets\label{Sub:Market-models-with-Stoch-Num-Assets}}

In this section we will specify the market model for an investable
universe having a finite, but unbounded, stochastic number of assets
available for investment. The process $X$ is a  piecewise semimartingale
modeling the prices of the $N=\dim X$ assets. At each $\tau_{k}^{+}$
the market prices may reconfigure in an arbitrary way, potentially
adding or removing assets.

There is a money market account $B$, which for convenience we give
an interest rate of zero so that $B=1$. The process $V^{v,H}$ is
the total wealth of an investor starting with initial wealth $V_{0}=v$,
and investing by holding $H\in\cL_{0}(X)$ shares of the risky assets.
All wealth processes are assumed to be \emph{self-financing,} meaning
that there exists a process $H\in\cL_{0}(X)$ and initial wealth $v\in\R$
such that
\begin{align*}
V_{t}^{v,H} & =v+(H\cdot X)_{t},\quad\forall t\ge0.
\end{align*}

\begin{rem}
It is important to be clear about what this self-financing condition
implies in the model. Since $H\cdot X$ is right-continuous, it is
unaffected by any discontinuities in $X$ at $\tau_{k}^{+}$. Therefore,
self financing portfolios will not be affected by these jumps. This
is useful for modeling certain types of events normally excluded from
equity market models: the entry of new companies,  the merging of
several companies into one, and the breakup or spin off of a company
into several companies.

These events may affect portfolio values \emph{upon their announcement},
but leave them unaffected at the point in time of their implementation.
Any surprise in the announcement of such events can be manifested
through a left discontinuity in $X$, which is passed on to $V$.
Furthermore, there need not be any gap between announcement and implementation,
since the paths of $X$ may have both a right and left discontinuity
at a given point in time. An illustrative example is when a company
goes bankrupt via a jump to $0$ in its stock price. This should occur
via a left discontinuity, since this event should affect portfolio
values through $H\cdot X$. A right discontinuity may also occur at
this time, as the market transitions from $n$ to $n-1$ assets, removing
the bankrupt company as an option for investment.
\end{rem}
We assume the standard notion of admissibility, that a trading strategy
must have a limited credit line. That is, losses must be uniformly
bounded from below.
\begin{defn}
\emph{\label{Def:AdmTradStrat-1}}A  process $H$ is \emph{admissible
}for piecewise semimartingale $X$ if both of the following hold:
\begin{enumerate}
\item \label{Def:AdmTradStrat:Item:Integrable-1}$H\in\cL_{0}(X)$;
\item \label{Def:AdmTradStrat:Item:BB-1}There exists a constant $c$ such
that a.s.
\begin{align}
(H\cdot X)_{t} & \ge-c,\quad\forall t\ge0.\label{Eq:AdmStratsPartitionBB-1}
\end{align}

\end{enumerate}
The class of \emph{nonnegative wealth processes} is denoted by
\begin{align*}
\cV & :=\cV(X):=\{V^{v,H}:=v+H\cdot X\mid v\in\R_{+},\; H\in\cL_{0}(X),\; v+H\cdot X\ge0\}.
\end{align*}

\end{defn}

\subsection{Fundamental theorems of asset pricing\label{Sub:FTAPs}}

Characterizing the presence or absence of various arbitrage-like notions
in a market model is important for both checking the viability of
a model for the purposes of optimization and realism, and conversely
for discovering portfolios that may be desirable to implement in practice.
In this section we study the existence of arbitrage of the first kind
and free lunch with vanishing risk, giving FTAPs for each.

The presence of arbitrage of the first kind, studied recently by Kardaras
in \cite{Art:Kardaras:FiniteAddProb&FTAP:2010,Art:Kardaras:MarkViabAbsArb1stKind:2012},
may be a sufficiently strong pathology to rule out a market model
for practical use. In other words, its absence, $\mbox{NA}_{1}$,
is often viewed as a minimal condition for market viability. The notion
of arbitrage of the first kind has previously appeared in the literature
under several different names and equivalent formulations. The name\emph{
cheap thrill }was used in \cite{Art:LoewWillard:LocMartsFreeSnacksCheapThrills:2000},
whereas the property of the set $\{V\in\cV(X)\mid V_{0}=1\}$ being
bounded in probability, previously called \emph{BK }in \cite{Art:Kabanov:OnFTAPKrepsDelbSchach:1997}
and \emph{no unbounded profit with bounded risk }(\emph{NUPBR) }in
\cite{Art:KaratzKard:Numeraire:2007}, was shown in \cite[Proposition 1.2]{Art:Kardaras:FiniteAddProb&FTAP:2010}
to be equivalent to $\mbox{NA}_{1}$.

The condition NFLVR is stronger than $\mbox{NA}_{1}$. It was studied
by Delbaen and Schachermayer in \cite{Art:DelbSchach:FundThmMathFin:1994,Art:DelbSchach:FundThmMathFin:1998,Book:DelbSchach:ArbBook:2006},
and it rules out approximate arbitrage in a sense recalled below.
In certain market models, such as those admitting arbitrage in stochastic
portfolio theory \cite{Book:Fernholz:SPT:2002,Art:Karatzas&Fernholz:SPTReview:2009}
and the benchmark approach \cite{Book:PlatenHeath:BenchmarkApproach:2006},
the flexibility of violating NFLVR while upholding $\mbox{NA}_{1}$
is essential.
\begin{defn}
An \emph{arbitrage of the first kind} for $X$ on horizon $\alpha$,
a finite stopping time, is an $\cF_{\alpha}$-measurable random variable
$\psi$ such that $P[\psi\ge0]=1$, $P[\psi>0]>0$ and, for each $v>0$,
there exists $V^{v,H}\in\cV(X)$, where $H$ may depend on $v$, satisfying
$v+(H\cdot X)_{\alpha}\ge\psi$. If there exists no arbitrage of the
first kind, then $\mbox{\emph{NA}}_{1}$ holds.
\end{defn}
Note that in contrast to NFLVR, Definition \ref{Def:NFLVR}, $\mbox{NA}_{1}$
and its corresponding FTAP, Theorem \ref{Thm:NA1_ELMD}, may be formulated
on a horizon that is an \emph{unbounded} stopping time. In this sense,
it is a more general result than the NFLVR FTAP.
\begin{defn}
\label{Def:ELMD}An \emph{equivalent local martingale deflator }(ELMD)
for $\cV(X)$ is a strictly positive $\R$-valued local martingale
$Z$, such that $Z_{0}=1$, and for each $V\in\cV(X)$, $ZV$ is a
nonnegative local martingale.
\end{defn}
An ELMD is identical to the notion of \emph{strict martingale density},
as in \cite{Art:Schweizer-On_The_Minimal_Mart_Meas:1995}. For an
FTAP relating ELMDs to finitely additive, locally equivalent probability
measures for $\R$-valued $X$, see \cite{Art:Kardaras:FiniteAddProb&FTAP:2010}.

When the price process is an $\R$-valued semimartingale, Kardaras
proved in Theorem 1.1 of \cite{Art:Kardaras:MarkViabAbsArb1stKind:2012}
that $\mbox{NA}_{1}$ is equivalent to the existence of an ELMD. We
assume here, as he does in \cite{Art:Kardaras:MarkViabAbsArb1stKind:2012},
that the result extends to $\R^{n}$-valued semimartingales, and use
this to prove that it holds for piecewise semimartingales as well.
In performing the extension, it is useful to recruit the $n$-dimensional
``market slices'' running on stochastic time intervals $(\tau_{k-1},\tau_{k}]$.
These slices can be taken as markets in and of themselves, with price
processes $X^{k,n}$.
\begin{thm}
\label{Thm:NA1_ELMD}Let $\alpha$ be a finite stopping time. $\mbox{NA}_{1}$
holds for $X$ on horizon $\alpha$ if and only if it holds for each
$X^{k,n}$, $k,n\in\N$, on horizon $\alpha$, if and only if there
exists an ELMD for $\cV(X^{\alpha})$. \end{thm}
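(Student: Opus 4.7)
My plan is to prove the three conditions equivalent via the cycle (A) $\Rightarrow$ (B) $\Rightarrow$ (C) $\Rightarrow$ (A), where (A) is $\mbox{NA}_1$ for $X$ on horizon $\alpha$, (B) is $\mbox{NA}_1$ for each $X^{k,n}$ on horizon $\alpha$, and (C) is the existence of an ELMD for $\cV(X^\alpha)$. The proof exploits the dissection machinery of the previous section together with the assumed $\R^n$-case of the equivalence.

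For (A) $\Rightarrow$ (B), I argue the contrapositive. Suppose some $X^{k_0,n_0}$ admits an arbitrage of the first kind $\psi$ on horizon $\alpha$. Since $X^{k_0,n_0} \equiv 0^{(n_0)}$ off $\Omega^{k_0,n_0}$, the inequality $v + (H_v \cdot X^{k_0,n_0})_\alpha \ge \psi$ forces $v \ge \psi$ on $(\Omega^{k_0,n_0})^c$ for every $v > 0$, so $\psi$ must vanish there. For each $v > 0$ pick a witness $H_v \in \cL_0(X^{k_0,n_0})$ and form its extension
\[
\tilde H_v := H_v \uI_{(\tau_{k_0-1}, \tau_{k_0}] \cap \R_+ \times \Omega^{k_0,n_0}} + 0^{(N)} \in \cL_0(X).
\]
Its dissection concentrates on the single piece $(k_0,n_0)$, giving $\tilde H_v \cdot X = H_v \cdot X^{k_0,n_0}$, so $\psi$ is an arbitrage of the first kind for $X$, contradicting (A).

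For (B) $\Rightarrow$ (C), apply the assumed $\R^n$-case to each $(X^{k,n})^\alpha$ (an $\R^n$-valued semimartingale by Proposition \ref{Prop:StochIntChars:Stopping}) to obtain strictly positive local martingales $Z^{k,n}$ with $Z^{k,n}_0 = 1$, each serving as an ELMD for $\cV((X^{k,n})^\alpha)$. Stitch these into a global deflator $Z$ recursively: set $Z_0 := 1$, and for $t \in (\tau_{k-1} \wedge \alpha, \tau_k \wedge \alpha]$,
\[
Z_t := Z_{\tau_{k-1} \wedge \alpha} \cdot \sum_{n=1}^{\infty} \I_{\Omega^{k,n}} \frac{Z^{k,n}_{t \wedge \tau_k \wedge \alpha}}{Z^{k,n}_{\tau_{k-1} \wedge \alpha}}.
\]
Strict positivity and $Z_0 = 1$ are clear. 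On each piece, $Z$ is a local martingale since multiplication by the $\cF_{\tau_{k-1}}$-measurable scale factor $Z_{\tau_{k-1} \wedge \alpha}/Z^{k,n}_{\tau_{k-1} \wedge \alpha}$ preserves local martingality, and Lemma \ref{Lem:Partitioned-Local-Marts-Are-Local-Marts} combines the $\Omega^{k,n}$-pieces; since a process that is locally a local martingale is itself one, $Z$ inherits the property globally. The deflator requirement, that $ZV$ be a nonnegative local martingale for every $V \in \cV(X^\alpha)$, reduces piecewise to the corresponding ELMD property of $Z^{k,n}$ after factoring out the $\cF_{\tau_{k-1}}$-measurable initial wealth $V_{\tau_{k-1}\wedge\alpha}$ and the normalizing factor, and is reassembled globally in the same fashion.

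Finally, (C) $\Rightarrow$ (A) is the standard ELMD-implies-NUPBR argument: for any $V \in \cV(X^\alpha)$ with $V_0 = 1$, $ZV$ is a nonnegative local martingale, hence a supermartingale, so $E[Z_\alpha V_\alpha] \le 1$; strict positivity of $Z_\alpha$ together with a Chebyshev-type estimate yields that $\{V_\alpha : V \in \cV(X^\alpha),\, V_0 = 1\}$ is bounded in probability, which by \cite[Proposition~1.2]{Art:Kardaras:FiniteAddProb&FTAP:2010} is equivalent to $\mbox{NA}_1$. The main obstacle will be the bookkeeping in (B) $\Rightarrow$ (C), specifically verifying that the $\cF_{\tau_{k-1}}$-measurable normalizations interact correctly with the local-martingale and deflator properties when chained across the infinite sequence of pieces.
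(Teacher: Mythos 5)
Your cycle (A) $\Rightarrow$ (B) $\Rightarrow$ (C) $\Rightarrow$ (A) is exactly the paper's proof architecture, and the implications (A) $\Rightarrow$ (B) and (C) $\Rightarrow$ (A) are done correctly and essentially as in the paper (you route (C) $\Rightarrow$ (A) through NUPBR while the paper argues directly from the supermartingale bound $E[Z_{\alpha}\psi]\le v_j\to 0$, but these are interchangeable).

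The gap is in (B) $\Rightarrow$ (C), and it is not merely ``bookkeeping'' as you describe it: there is a genuine obstruction that requires an additional idea. After dissecting an $X^{\alpha}$-admissible $H$ into the pieces $H^{k,n}$, the process $H^{k,n}\cdot(X^{k,n})^{\alpha} = (H\cdot X^{\alpha})^{\tau_k}-(H\cdot X^{\alpha})^{\tau_{k-1}}$ is only bounded below by $-c - (H\cdot X^{\alpha})_{\tau_{k-1}}$, and the random variable $(H\cdot X^{\alpha})_{\tau_{k-1}}$ is in general unbounded. So $H^{k,n}$ is \emph{not} $(X^{k,n})^{\alpha}$-admissible, and you cannot directly invoke the ELMD property of $Z^{k,n}$ on $\cV((X^{k,n})^{\alpha})$. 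Your proposed fix of ``factoring out the $\cF_{\tau_{k-1}}$-measurable initial wealth $V_{\tau_{k-1}\wedge\alpha}$'' runs into two problems: $V_{\tau_{k-1}\wedge\alpha}$ can vanish on a set of positive measure (so the division is ill-defined there, and you cannot yet invoke the absorption-at-zero property of nonnegative wealth since that is a consequence of the ELMD you are constructing), and even where it is positive, $H^{k,n}/V_{\tau_{k-1}\wedge\alpha}$ gives a new trading strategy whose deflated wealth you would still need to re-multiply and chain. The paper's resolution is different and sharper: it further dissects $H^{k,n}$ along the $\cF_{\tau_{k-1}}$-measurable partition $C_j:=\{\tau_{k-1}<\infty\}\cap\{j\le (H\cdot X^{\alpha})_{\tau_{k-1}}<j+1\}$, $j\in\bZ$. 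On each $C_j$ the increment is bounded below by a constant, so $\I_{C_j}H^{k,n}$ \emph{is} $(X^{k,n})^{\alpha}$-admissible, $Z^{k,n}\I_{C_j}(H^{k,n}\cdot(X^{k,n})^{\alpha})$ is a local martingale, and Lemma \ref{Lem:Partitioned-Local-Marts-Are-Local-Marts} reassembles over $j$ (and then over $n$). Finally, an induction on $k$ with a common reducing sequence and careful conditioning at $\cF_{s\vee\tau_{k-1}}$ shows $(ZY)^{\tau_k}$ is a local martingale for all $k$, which localizes to give the deflator property globally. Without the $C_j$ partition or a replacement of equal strength, your (B) $\Rightarrow$ (C) does not go through.
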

\begin{proof}
The strategy of the proof is to prove the implications $\mbox{NA}_{1}\mbox{ for }X\imply\mbox{NA}_{1}$
for each $X^{k,n}\imply$ ELMD for $\cV(X^{\alpha})\imply\mbox{NA}_{1}$
for $X$.

$(\mbox{NA}_{1}$ for $X\imply\mbox{NA}_{1}$ for each $(X^{k,n})$)
We prove the contrapositive. Suppose there exists an arbitrage of
the first kind $\psi$ with respect to $X^{k,n}$. Let $H^{k,n,v}\in\cL_{0}(X^{k,n})$
and satisfy $v+(H^{k,n,v}\cdot X^{k,n})_{\alpha}\ge\psi$. Define
$H^{v}:=H^{k,n,v}\uI_{(\tau_{k-1},\tau_{k}]\cap\R_{+}\times\Omega^{k,n}}+0^{(N)}$,
which satisfies $H^{v}\in\cL_{0}(X)$, and $H^{v}\cdot X=H^{k,n,v}\cdot X^{k,n}$.
Therefore, for each $v>0$, there exists $H^{v}\in\cL_{0}(X)$ such
that $H^{v}\cdot X\ge-v$ and $v+(H^{v}\cdot X)_{\alpha}\ge\psi$,
and so $\psi$ is an arbitrage of the first kind with respect to $X$.

($\mbox{NA}_{1}$ for each $(X^{k,n})\imply$ ELMD for $\cV(X^{\alpha})$)
By Theorem 1.1 of \cite{Art:Kardaras:MarkViabAbsArb1stKind:2012},
for each $k,n\in\N$, there exists $Z^{k,n}$, an ELMD for $\cV((X^{k,n})^{\alpha})$.
Without loss of generality (for example, substitute $Z^{k,n}$ with
$\frac{(Z^{k,n})^{\tau_{k}}}{(Z^{k,n})^{\tau_{k-1}}}$), we may take
$Z^{k,n}=1$ on $[0,\tau_{k-1}]$, since $X^{k,n}=0$ here.  Define
$Z^{k}:=\I_{\{\tau_{k-1}=\infty\}}+\sum_{n=1}^{\infty}\I_{\Omega^{k,n}}(Z^{k,n})^{\tau_{k}}$,
$Z=\prod_{k=1}^{\infty}Z^{k}$, and for $X^{\alpha}$-admissible trading
strategy $H$, and $v\in\R$, define $Y:=v+H\cdot X^{\alpha}$, $Y^{k}:=Y^{\tau_{k}}-Y^{\tau_{k-1}}$,
$Y^{k,n}:=Y^{k}\I_{\Omega^{k,n}}=H^{k,n}\cdot(X^{k,n})^{\alpha}$.
 Although $H$ being admissible for $X^{\alpha}$ does not in general
imply that $H^{k,n}$ is admissible for $(X^{k,n})^{\alpha}$, this
is easily remedied by further dissection of $H^{k,n}$ into pieces
using the $\cF_{\tau_{k-1}}$-measurable partition:
\begin{align*}
 & \left\{ \{\tau_{k-1}=\infty\},\quad C_{j}:=\{\tau_{k-1}<\infty\}\cap\{j\le(H\cdot X^{\alpha})_{\tau_{k-1}}<j+1\},\quad j\in\bZ\right\} .
\end{align*}
Then $\I_{C_{j}}(H^{k,n}\cdot(X^{k,n})^{\alpha})$ must be uniformly
bounded from below, since $H\cdot X^{\alpha}$ is, and $\I_{C_{j}}(H\cdot X^{\alpha})_{\tau_{k-1}}<j+1$.
By definition $H^{k,n}=0^{(n)}$ on $[0,\tau_{k-1}]$, thus $\I_{C_{j}}H^{k,n}$
is predictable, and so is $(X^{k,n})^{\alpha}$-admissible. Therefore,
$Z^{k,n}\bigl(\I_{C_{j}}(H^{k,n})\cdot(X^{k,n})^{\alpha}\bigr)=\I_{C_{j}}Z^{k,n}Y^{k,n}$
is a local martingale for each $k,n\in\N$, $j\in\bZ$. Lemma \ref{Lem:Partitioned-Local-Marts-Are-Local-Marts}
implies that $Z^{k,n}Y^{k,n}=\sum_{j=-\infty}^{\infty}\I_{C_{j}}Z^{k,n}Y^{k,n}$
is a local martingale, and furthermore that $Z^{k}Y^{k}=\sum_{n=1}^{\infty}\I_{\Omega^{k,n}}Z^{k,n}Y^{k,n}$
is also a local martingale, noting that both processes are zero on
$[0,\tau_{k-1}]$.

We prove by induction that $(ZY)^{\tau_{k}}$ is a local martingale,
$\forall k\in\N$. First, $(ZY)^{\tau_{1}}=Z^{1}(v+Y^{1})$ is a local
martingale by the above. Assume that $(ZY)^{\tau_{k-1}}$ and $Z^{\tau_{k-1}}$
are local martingales, and choose a fundamental sequence $(\rho_{j})$
that is a common reducing sequence for $Z^{\tau_{k-1}}$, $(ZY)^{\tau_{k-1}}$,
$Z^{k}$, and $Z^{k}Y^{k}$, which can always be done by taking the
minimum at each index over a reducing sequence for each. Making repeated
use of $(Z^{k}Y^{k})_{t}=(Z^{k}Y^{k})_{t\vee\tau_{k-1}}$, we have
for $0\le s\le t<\infty$ that $E[(ZY)_{t}^{\tau_{k}\wedge\rho_{j}}\mid\cF_{s}]$
\begin{align*}
 & =E\left[\Bigl(\prod_{i=1}^{k}Z^{i}\sum_{m=1}^{k}Y^{m}{}_{t}^{\rho_{j}}\Bigr)\mid\cF_{s}\right],\\
 & =E\left[Z_{t}^{\tau_{k-1}\wedge\rho_{j}}\left(E\left[(Z^{k}Y^{k})_{t}^{\rho_{j}}\mid\cF_{s\vee\tau_{k-1}}\right]+Y_{t}^{\tau_{k-1}\wedge\rho_{j}}E\left[(Z^{k})_{t}^{\rho_{j}}\mid\cF_{s\vee\tau_{k-1}}\right]\right)\mid\cF_{s}\right],\\
 & =E\left[Z_{t}^{\tau_{k-1}\wedge\rho_{j}}\left(E\left[(Z^{k}Y^{k})_{t\vee\tau_{k-1}}^{\rho_{j}}\mid\cF_{s\vee\tau_{k-1}}\right]+Y_{t}^{\tau_{k-1}\wedge\rho_{j}}E\left[(Z^{k})_{t\vee\tau_{k-1}}^{\rho_{j}}\mid\cF_{s\vee\tau_{k-1}}\right]\right)\mid\cF_{s}\right],\\
 & =E\left[Z_{t}^{\tau_{k-1}\wedge\rho_{j}}(Z^{k})_{s\vee\tau_{k-1}}^{\rho_{j}}\left((Y^{k})_{s\vee\tau_{k-1}}^{\rho_{j}}+Y_{t}^{\tau_{k-1}\wedge\rho_{j}}\right)\mid\cF_{s}\right],\\
 & =(Z^{k})_{s}^{\rho_{j}}\left[(Y^{k})_{s}^{\rho_{j}}E\left(Z_{t}^{\tau_{k-1}\wedge\rho_{j}}\mid\cF_{s}\right)+E\left((ZY)_{t}^{\tau_{k-1}\wedge\rho_{j}}\mid\cF_{s}\right)\right],\\
 & =(Z^{k})_{s}^{\rho_{j}}[(Y^{k})_{s}^{\rho_{j}}Z_{s}^{\tau_{k-1}\wedge\rho_{j}}+(ZY)_{s}^{\tau_{k-1}\wedge\rho_{j}}],\\
 & =(ZY)_{s}^{\tau_{k}\wedge\rho_{j}}.
\end{align*}
Therefore, $(ZY)^{\tau_{k}}$ is a local martingale, and by choosing
$Y=1$ ($v=1,H=0^{(N^{\alpha})}$), $Z^{\tau_{k}}$ can be seen to
be a local martingale as well, completing the induction.

Since any process that is locally a local martingale is a local martingale,
then $ZY$ and $Z$ are local martingales. $Z$ is strictly positive,
since for $P$-almost every $\omega$, it is the product of finitely
many strictly positive terms. Thus, $Z$ is an ELMD for $\cV(X^{\alpha})$.

(There exists $Z$ ELMD for $\cV(X^{\alpha})\imply\mbox{NA}_{1}$
for $X$.) Suppose that $\psi$ is a nonnegative random variable,
and that there exists a sequences $(H^{j})$, such that $H^{j}\in\cL_{0}(X)$,
$\forall j\in\N$, and there exists a sequence of nonnegative numbers
$(v_{j})\searrow0$, such that $v_{j}+(H^{j}\cdot X)_{\alpha}\ge\psi$,
$\forall j\in\N$. Then using the fact that nonnegative local martingales
are supermartingales, we obtain
\begin{align*}
E[Z_{\alpha}\psi] & \le E[Z_{\alpha}(v_{j}+(H^{j}\cdot X)_{\alpha})]\le Z_{0}v_{j}=v_{j},\quad\forall j\in\N.
\end{align*}
Since $\lim_{j\to\infty}v_{j}=0$, this implies that $E[Z_{\alpha}\psi]\le0$.
Hence, $\psi=0$ and is not an arbitrage of the first kind. Thus $\mbox{NA}_{1}$
holds for $X$.
\end{proof}
Theorem \ref{Thm:NA1_ELMD} can be described as holding globally if
and only if it holds locally. This makes it very convenient and easy
to verify in practice compared to the NFLVR notion, which can hold
locally without holding globally.
\begin{defn}
\label{Def:NFLVR}For piecewise semimartingale $X$ and deterministic
horizon $T\in(0,\infty)$ define
\begin{align}
R & :=\left\{ (H\cdot X)_{T}\mid H\mbox{ admissible}\right\} ,\label{Eq:Replicable_Claims}\\
C & :=\left\{ g\in L^{\infty}(\Omega,\cF_{T},P)\mid g\le f\mbox{ for some }f\in R\right\} .\label{Eq:Super_Replicable_Claims}
\end{align}
The condition \emph{no free lunch with vanishing risk (NFLVR) }with
respect to $X$ on horizon $T$ is
\begin{align*}
\bar{C}\cap L_{+}^{\infty}(\Omega,\cF_{T},P) & =\{0\},
\end{align*}
where $L_{+}^{\infty}$ denotes the a.s. bounded nonnegative random
variables, and $\bar{C}$ is the closure of $C$ with respect to the
norm topology of $L^{\infty}(\Omega,\cF_{T},P)$.
\end{defn}
The following FTAP characterizes NFLVR when $X$ is a general piecewise
semimartingale. When $X$ has more regularity we can and do say more
below. We will need the notion of an \emph{equivalent supermartingale
measure} \emph{(ESMM)} for $\cV(X)$: a measure equivalent to $P$
under which every $V\in\cV(X)$ is a supermartingale.
\begin{thm}[FTAP]
 \label{Thm:FTAP}Let $X$ be a  piecewise semimartingale and $T\in(0,\infty)$.
$X$ satisfies NFLVR on horizon $T$ if and only if there exists an
E$\sigma$MM for $X^{T}$, if and only if there exists an ESMM for
$\cV(X^{T})$.\end{thm}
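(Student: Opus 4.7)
The plan is to establish the cycle
\begin{align*}
\mbox{NFLVR} &\Longrightarrow \mbox{E}\sigma\mbox{MM for }X^T \\
&\Longrightarrow \mbox{ESMM for }\cV(X^T) \\
&\Longrightarrow \mbox{NFLVR},
\end{align*}
on horizon $T$, reducing the nontrivial implication to the $\R^n$-valued theory via dissection. For NFLVR $\Longrightarrow$ E$\sigma$MM the first step is to apply Kabanov's abstract Kreps-Yan argument from \cite{Art:Kabanov:OnFTAPKrepsDelbSchach:1997} to obtain an equivalent separating measure $Q \sim P$, meaning $E_Q[f] \le 0$ for all $f \in C$. That argument needs only NFLVR together with Fatou-closedness of $C$ in $L^\infty(\cF_T)$, which is exactly what the piecewise M\'emin extension (Proposition \ref{Prop:Memin}) provides: closedness in the semimartingale topology of $\{H \cdot X \mid H \in \cL(X),\ H \cdot X \ge -c\}$ for each $c < \infty$ transfers to Fatou-closedness of $C$ by the standard diagonal argument in \cite{Book:DelbSchach:ArbBook:2006}.

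Upgrading $Q$ to an E$\sigma$MM is done by dissection. By Proposition \ref{Prop:X_Sig_Mart_iff_X^k,n_Sig_Mart} it suffices to construct $\tilde Q \sim P$ under which each $\R^n$-valued semimartingale $(X^{k,n})^T$ is a sigma-martingale. Restricting the separating property of $Q$ to integrands supported on $(\tau_{k-1},\tau_k] \cap \R_+ \times \Omega^{k,n}$ shows that $Q$ is already separating for $(X^{k,n})^T$. The Delbaen-Schachermayer density theorem in \cite{Art:DelbSchach:FundThmMathFin:1998}, that E$\sigma$MMs are dense in the equivalent separating measures of a fixed $\R^n$-valued semimartingale, then supplies, for every $(k,n)$ with $P(\Omega^{k,n}) > 0$, an equivalent sigma-martingale density $Z^{k,n}$ for $(X^{k,n})^T$ arbitrarily $L^1$-close to the density of $Q$. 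Gluing the $Z^{k,n}$ across the $\cF_{\tau_{k-1}}$-partition $\{\Omega^{k,n}\}_{n \in \N}$ and then inductively over $k$, in the spirit of the construction of $Z$ in the proof of Theorem \ref{Thm:NA1_ELMD}, produces the density of $\tilde Q$; Lemma \ref{Lem:Partitioned-Local-Marts-Are-Local-Marts} then promotes the piecewise sigma-martingale property to the global one demanded by Proposition \ref{Prop:X_Sig_Mart_iff_X^k,n_Sig_Mart}. The main obstacle is controlling the approximation accuracies for the $Z^{k,n}$ summably in $(k,n)$ so that the infinite product defining $d\tilde Q / dP$ remains in $L^1(P)$ and bounded away from zero, guaranteeing $\tilde Q \sim P$.

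For E$\sigma$MM $\Longrightarrow$ ESMM, given such $\tilde Q$ and $V = v + H \cdot X^T \in \cV(X^T)$, the condition $V \ge 0$ gives $H \cdot X^T \ge -v$, so the Ansel-Stricker extension (Theorem \ref{Thm:Ansel_Stricker}) applied under $\tilde Q$ with $\vartheta_j \equiv -v$ upgrades the $\tilde Q$-sigma-martingale $H \cdot X^T$ to a $\tilde Q$-local martingale; thus $V$ is a nonnegative local martingale and hence a $\tilde Q$-supermartingale by Fatou. Finally, ESMM $\Longrightarrow$ NFLVR is the standard argument: an ESMM makes every $V = 1 + H \cdot X^T \in \cV(X^T)$ a $\tilde Q$-supermartingale with $E_{\tilde Q}[V_T] \le 1$, which simultaneously gives the no-arbitrage property NA and boundedness in probability of the set $R$ of terminal wealths (i.e.\ NUPBR), whence NFLVR via the known equivalence NFLVR $\Leftrightarrow$ NA $+$ NUPBR.
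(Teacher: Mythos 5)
Your overall architecture -- Kabanov/M\'emin for the separating measure, dissection plus the Delbaen--Schachermayer density theorem to build the sigma-martingale density piecewise, gluing with a normalization across the $\cF_{\tau_{k-1}}$-partition, then the easy reverse implications -- is the same as the paper's, and you correctly identify the summability/positivity of the infinite product of densities as the delicate point (the paper handles it by conditioning each $Z_\varepsilon^{k,n}$ to have $E[Z_\varepsilon^{k,n}\mid\cF_{\tau_{k-1}\wedge T}]=1$, choosing $\varepsilon_k$ to make $(\hat Z^k)$ Cauchy in $L^1$, and using that a.s.\ only finitely many factors differ from $1$).

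The one genuine slip is in your E$\sigma$MM $\Longrightarrow$ ESMM step. Theorem \ref{Thm:Ansel_Stricker} as stated requires the integrator $X$ to be a piecewise \emph{local martingale}, but under $\tilde Q$ the process $X^T$ is only a piecewise sigma-martingale, so the theorem does not apply to $H$ and $X^T$ directly. Moreover $\vartheta_j\equiv -v$ is not the right bound even if it did: $H\cdot X^T\ge -v$ controls the running minimum, not the jumps $H'\Delta X$, which is what Ansel--Stricker needs. One can repair this by representing the $\R$-valued $\tilde Q$-sigma-martingale $H\cdot X^T$ as $G\cdot M$ for a $\tilde Q$-martingale $M$ and applying the classical (one-dimensional) Ansel--Stricker to $G\cdot M$ with $\vartheta_j$ obtained by localization at the hitting times of level $j$, but it is cleaner to invoke directly, as the paper does, that a sigma-martingale uniformly bounded below is a supermartingale \cite[p.~216]{Book:JacodShiryaev:LimitThmsStochProc:2002}. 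Your ESMM $\Longrightarrow$ NFLVR via NA $+$ NUPBR is correct but longer than needed; the paper simply passes the $L^\infty$-limit through $E^Q[\cdot]$ to get $E^Q[\psi]\le 0$, hence $\psi=0$.
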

\begin{proof}
First the equivalence $\mbox{NFLVR}\Longleftrightarrow\mbox{ESMM}$
is proved, and then $\mbox{ESMM}\Longleftrightarrow\mbox{E\ensuremath{\sigma}MM}$.

($\mbox{NFLVR}\imply\mbox{ESMM}$) The implication holds via the main
result of Kabanov \cite[Theorems 1.1 and 1.2]{Art:Kabanov:OnFTAPKrepsDelbSchach:1997}.
To apply his result, we need that the subset of $\R$-valued semimartingales
\begin{align*}
\fG^{1} & :=\left\{ (H\cdot X)\mid H\mbox{ is predictable, }X\mbox{-integrable, and }H\cdot X\ge-1\right\}
\end{align*}
is closed in the semimartingale topology, which is provided by Proposition
\ref{Prop:Memin}. The other technical conditions needed to apply
Kabanov's theorem are straightforward from Proposition \ref{Prop:StochIntChar}.

($\mbox{ESMM}\imply\mbox{NFLVR}$) Suppose that $Q$ is a measure
that makes $V$ a supermartingale for each $V\in\cV(X^{T})$, and
that $(H^{j})$ is a sequence of $X$-admissible trading strategies
such that $\lim_{j\to\infty}(H^{j}\cdot X)_{T}=\psi\ge0$, with convergence
in $L^{\infty}$-norm. For each $j\in\N$ there exists a $v_{j}>0$
such that $V^{v_{j},H^{j}}=v_{j}+(H^{j}\cdot X)\ge0$, so each $(H^{j}\cdot X)^{T}$
is a $Q$-supermartingale. Therefore, $E^{Q}[(H^{j}\cdot X)_{T}]\le0$,
for each $j\in\N$. But $E^{Q}[\cdot]$ is a continuous linear functional
on $L^{\infty}$. Hence, $E^{Q}[\psi]\le0$, which implies that $\psi=0$
a.s.

($\mbox{E\ensuremath{\sigma}MM}\imply\mbox{ESMM}$) If $Q$ is an
E$\sigma$MM for $X^{T}$, then $H\cdot X^{T}$ is a $Q$-sigma-martingale
for all $X^{T}$-admissible $H$. It is also a $Q$-supermartingale,
since any sigma-martingale uniformly bounded from below is a supermartingale
\cite[p. 216]{Book:JacodShiryaev:LimitThmsStochProc:2002}.

($\mbox{ESMM}\imply\mbox{E\ensuremath{\sigma}MM}$) Let $\tilde{Q}$
be an ESMM for $\cV(X^{T})$, equivalently for $\cV(X)$ on horizon
$T$, and let $\tilde{Z}:=\frac{\d\tilde{Q}}{\d P}\in\cF_{T}$. Define
$\tilde{Z}^{k}:=\frac{E[Z\mid\cF_{\tau_{k}}]}{E[Z\mid\cF_{\tau_{k-1}}]}$,
so that $\tilde{Z}=\prod_{k=1}^{\infty}\tilde{Z}^{k}$, with convergence
in $L_{1}$. If $H^{k,n}$ is $X^{k,n}$-admissible, then $H:=H^{k,n}\uI_{(\tau_{k-1},\tau_{k}]\cap\R_{+}\times\Omega^{k,n}}+0^{(N)}\in\cL_{0}(X)$,
and $H\cdot X=H^{k,n}\cdot X^{k,n}$, so is $X$-admissible. Hence,
$C(X^{k,n},\bF)\subseteq C(X,\bF)$, and so $\tilde{Z}$ is an ESMM
for $\cV(X^{k,n})$ on horizon $T$. Since $H^{k,n}\in\cL(X^{k,n})$
implies $(H^{k,n}\cdot X^{k,n})_{T}\in\cF_{\tau_{k}\wedge T}$, and
$(H^{k,n}\cdot X^{k,n})$ takes the value $0$ on $[0,\tau_{k-1}]$,
then $\tilde{Z}^{k}$ is the Radon-Nikodym derivative for an ESMM
for $\cV(X^{k,n})$, $\forall k,n\in\N$. By \cite[Proposition 4.7]{Art:DelbSchach:FundThmMathFin:1998},
$\forall k,n\in\N$, $\forall\varepsilon>0$, there exist E$\sigma$MMs
for $X^{k,n}$, generated by $Z_{\varepsilon}^{k,n}$ that satisfy
$E[|Z_{\varepsilon}^{k,n}-\tilde{Z}^{k}|]<\varepsilon2^{-n}$. The
$Z_{\varepsilon}^{k,n}$ may be assumed to satisfy $E[Z_{\varepsilon}^{k,n}\mid\cF_{\tau_{k-1}\wedge T}]=1$,
since the $\tilde{Z}^{k}$ satisfy this. Then $Z_{\varepsilon}^{k}:=\I_{\{\tau_{k-1}=\infty\}}+\sum_{n=1}^{\infty}\I_{\Omega^{k,n}}Z_{\varepsilon}^{k,n}$
generates $Q^{k}$, an E$\sigma$MM for $X^{k,n}$, $\forall k,n\in\N$,
and $E[|Z_{\varepsilon}^{k}-\tilde{Z}^{k}|]<\varepsilon$.

The process $\hat{Z}_{\varepsilon}^{1}:=\tilde{Z}(Z_{\varepsilon}^{1}/\tilde{Z}^{1})=Z_{\varepsilon}^{1}\prod_{j=2}^{\infty}\tilde{Z}_{j}$
satisfies $\lim_{\varepsilon\to0}\hat{Z}_{\varepsilon}^{1}=Z$ in
probability, and $E[\hat{Z}_{\varepsilon}^{1}]=1=E[Z]$, $\forall\varepsilon>0$.
Therefore, $\lim_{\varepsilon\to0}\hat{Z}_{\varepsilon}^{1}=Z$ in
$L_{1}$. Supposing that $\hat{Z}_{\varepsilon}^{k}$ is defined,
and $\lim_{\varepsilon\to0}\hat{Z}_{\varepsilon}^{k}=Z$ in $L_{1}$.
Then as above we have that $\forall\delta_{k}>0,$ $\exists\varepsilon_{k}>0$
such that $\hat{Z}^{k+1}:=\hat{Z}^{k}(Z_{\varepsilon_{k}}^{k+1}/\tilde{Z}^{k+1})$
satisfies $E[|\hat{Z}^{k+1}-\hat{Z}^{k}|]<\delta_{k}$. So we induce
that there exists a sequence $(\varepsilon_{k})$ such that $(\hat{Z}^{k})$
is a Cauchy sequence in $L_{1}$. Hence, there exists an $L_{1}$-limit,
$Z=\tilde{Z}\prod_{k=1}^{\infty}(Z_{\varepsilon_{k}}^{k}/\tilde{Z}^{k})=\prod_{k=1}^{\infty}Z_{\varepsilon_{k}}^{k}$,
and $E[Z]=1$. We henceforth drop the subscripts $\varepsilon_{k}$
from $Z^{k}$. Since $(\prod_{j=1}^{k}Z^{j},\cF_{\tau_{k}\wedge T})_{k\in\N}$
is a martingale, closed by $Z$, then the convergence is a.s. as well.
For almost all $\omega$, $\tau_{k}(\omega)\nearrow\infty$, so there
exists a $k_{\omega}$ such that $Z^{k}(\omega)=1$ for all $k>k_{\omega}$,
and $Z(\omega)=\prod_{k=1}^{k_{\omega}}Z^{k}(\omega)$. Since $Z^{k}(\omega)>0$,
$\forall k$, then $Z(\omega)>0$. Therefore, $Z=\frac{\d Q}{\d P}$
for a measure $Q\sim P$.

It remains to show that $Q$ generated by $Z$ is a sigma-martingale
measure for $X$. Below, we will show that $X^{k,n}$ is a $Q$-sigma-martingale,
$\forall k,n\in\N$. Supposing this, an application of Proposition
\ref{Prop:X_Sig_Mart_iff_X^k,n_Sig_Mart} implies that $H\cdot(\sum_{n=1}^{\infty}\uI_{\Omega^{k,n}}X^{k,n}+0^{(N)})=\sum_{n=1}^{\infty}H^{k,n}\cdot X^{k,n}$
is a $Q$-sigma-martingale, for any $H\in\cL(X)$, $\forall k\in\N$.
Then $(H\cdot X)^{\tau_{m}}=H_{0}^{\prime}X_{0}+\sum_{k=1}^{m}\sum_{n=1}^{\infty}H\cdot X^{k,n}$
is a $Q$-sigma-martingale, so $H\cdot X$ is locally a $Q$-sigma-martingale,
and is thus a $Q$-sigma-martingale. This proves that $X$ is a piecewise
$Q$-sigma-martingale by definition.

($X^{k,n}$ is a $Q$-sigma-martingale) Define the martingale $Z_{t}:=E[Z\mid\cF_{t}]$,
$0\le t\le T$, and let $M^{k,n}$ be an $\R^{n}$-valued $Q^{k}$-martingale
($\frac{\d Q^{k}}{\d P}:=Z^{k}$), such that $X^{k,n}=H^{k,n}\cdot M^{k,n}$
for some $H^{k,n}\in\cL(M^{k,n})$. Such processes $M^{k,n}$ and
$H^{k,n}$ always exist, since $X^{k,n}$ is a $Q^{k}$-sigma-martingale.
Since $X^{k,n}=0^{(n)}$ on the complement of $(\tau_{k-1},\infty)\cap\R_{+}\times\Omega^{k,n}$,
we may choose $M^{k,n}=0^{(n)}$ on this set. Since $Z^{k,n}X^{k,n}$$=(Z^{k,n}X^{k,n})^{\tau_{k}}$,
then we may replace $M^{k,n}$ with $(M^{k,n})^{\tau_{k}}$. Then
we have for $0\le s\le t\le T$,
\begin{align*}
E[Z_{t}M_{t}^{k,n}\mid\cF_{s}] & =E[E(Z_{t}(M_{t}^{k,n})^{\tau_{k}}\mid\cF_{s\vee\tau_{k}})\mid\cF_{s}],\\
 & =E\biggl[Z_{t}^{\tau_{k}}M_{t}^{k,n}E\biggl(\frac{Z_{t}}{Z_{t}^{\tau_{k}}}\mid\cF_{s\vee\tau_{k}}\biggr)\mid\cF_{s}\biggr],\\
 & =\frac{Z_{s}}{Z_{s}^{\tau_{k}}}E[Z_{t}^{\tau_{k}}M_{t}^{k,n}\mid\cF_{s}],\\
 & =\frac{Z_{s}}{Z_{s}^{\tau_{k}}}E[E(Z_{t}^{\tau_{k}}M_{t}^{k,n}\mid\cF_{s\vee\tau_{k-1}})\mid\cF_{s}],\\
 & =\frac{Z_{s}}{Z_{s}^{\tau_{k}}}E\Bigl[\prod_{j=1}^{k-1}Z_{t}^{j}E\Bigl(\sum_{m=1}^{\infty}\I_{\Omega^{k,m}}Z_{t}^{k,m}M_{t}^{k,n}\mid\cF_{s\vee\tau_{k-1}}\Bigr)\mid\cF_{s}\Bigr],\\
 & =\frac{Z_{s}}{Z_{s}^{\tau_{k}}}E\Bigl[\prod_{j=1}^{k-1}Z_{t}^{j}E(\I_{\Omega^{k,n}}Z_{t\vee\tau_{k-1}}^{k,n}M_{t\vee\tau_{k-1}}^{k,n}\mid\cF_{s\vee\tau_{k-1}})\mid\cF_{s}\Bigr],\\
 & =\frac{Z_{s}}{Z_{s}^{\tau_{k}}}Z_{s}^{k,n}M_{s}^{k,n}E\Bigl[\prod_{j=1}^{k-1}Z_{t}^{j}\mid\cF_{s}\Bigr],\\
 & =Z_{s}M_{s}^{k,n},
\end{align*}
where we made use of $M^{k,n}=0^{(n)}$ on the complement of $(\tau_{k-1},\infty)\cap\R_{+}\times\Omega^{k,n}$,
and that $Z_{t}^{k,n}M_{t}^{k,n}=Z_{t\vee\tau_{k-1}}^{k,n}M_{t\vee\tau_{k-1}}^{k,n}$.
This establishes that $ZM^{k,n}$ is a $P$-martingale, so $ZX^{k,n}=Z(H^{k,n}\cdot M^{k,n})$
is a $P$-sigma-martingale, and thus $X^{k,n}$ is a $Q$-sigma-martingale.
\end{proof}
The proof of Theorem \ref{Thm:FTAP} contains the proof of the following
corollary, the generalization of \cite[Proposition 4.7]{Art:DelbSchach:FundThmMathFin:1998}
and \cite[Theorem 2]{Art:Kabanov:OnFTAPKrepsDelbSchach:1997}.
\begin{cor}
If $\tilde{Q}$ is an ESMM for $\cV(X)$, then for any $\varepsilon>0$
there exists $Q$, an E$\sigma$MM for $X$, such that $Q$ and $\tilde{Q}$
are within $\varepsilon$ of each other with respect to the total
variation norm.
\end{cor}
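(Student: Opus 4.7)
The plan is to observe that the construction in the $(\mbox{ESMM}\imply\mbox{E}\sigma\mbox{MM})$ direction of the proof of Theorem \ref{Thm:FTAP} already produces, from the ESMM $\tilde{Q}$ with density $\tilde{Z} = d\tilde{Q}/dP$, a candidate E$\sigma$MM density $Z$ obtained as an $L_1$-limit of iterated rescalings $\hat{Z}^k$ of $\tilde{Z}$. Since the total variation distance between $Q$ and $\tilde{Q}$ equals $\tfrac{1}{2} E[|Z - \tilde{Z}|]$, the corollary reduces to showing that the approximation parameters in that construction can be chosen so that $E[|Z - \tilde{Z}|] < 2\varepsilon$.

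First, I would recall the iterative construction: set $\hat{Z}^0 := \tilde{Z}$ and, for each $k\ge 1$, $\hat{Z}^k := \hat{Z}^{k-1}(Z_{\varepsilon_k}^k / \tilde{Z}^k)$, where $Z_{\varepsilon_k}^k = \I_{\{\tau_{k-1}=\infty\}} + \sum_{n=1}^\infty \I_{\Omega^{k,n}} Z_{\varepsilon_k}^{k,n}$ is assembled from the Delbaen--Schachermayer approximations on each $(k,n)$-piece, which satisfy $E[|Z_{\varepsilon_k}^{k,n} - \tilde{Z}^k|] < \varepsilon_k 2^{-n}$. As shown in the proof of Theorem \ref{Thm:FTAP}, for any prescribed $\delta_k > 0$ we may choose $\varepsilon_k > 0$ small enough that $E[|\hat{Z}^k - \hat{Z}^{k-1}|] < \delta_k$.

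Second, I would exploit the freedom in choosing $\delta_k$. Fix $\delta_k := \varepsilon 2^{-k}$. Then by the triangle inequality,
\begin{align*}
E[|\hat{Z}^m - \tilde{Z}|] & \le \sum_{k=1}^{m} E[|\hat{Z}^k - \hat{Z}^{k-1}|] < \sum_{k=1}^{\infty} \varepsilon 2^{-k} = \varepsilon,
\end{align*}
uniformly in $m$. Since $(\hat{Z}^k)$ is Cauchy in $L_1$ by the same estimate, its $L_1$-limit $Z = \prod_{k=1}^\infty Z_{\varepsilon_k}^k$ (the density of the E$\sigma$MM $Q$ constructed in Theorem \ref{Thm:FTAP}) satisfies $E[|Z - \tilde{Z}|] \le \varepsilon$ by continuity of the $L_1$-norm. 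Therefore $\|Q - \tilde{Q}\|_{TV} = \tfrac{1}{2} E[|Z - \tilde{Z}|] \le \varepsilon/2 < \varepsilon$.

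There is no substantive obstacle here beyond bookkeeping: the heavy lifting (that each $\tilde{Z}^k$ is an ESMM density on its piece, that approximating E$\sigma$MM densities exist on each piece by \cite[Proposition 4.7]{Art:DelbSchach:FundThmMathFin:1998}, and that the resulting infinite product defines an E$\sigma$MM for $X$) has already been carried out inside the proof of Theorem \ref{Thm:FTAP}. The only addition is the explicit summable choice $\delta_k = \varepsilon 2^{-k}$, which is available because the proof of the theorem shows the $\delta_k$ can be made arbitrarily small.
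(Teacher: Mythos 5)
Your proposal is correct and is exactly the argument the paper has in mind: the paper gives no separate proof, stating only that the proof of Theorem \ref{Thm:FTAP} already contains the corollary, and your elaboration—choosing the free parameters $\delta_k := \varepsilon 2^{-k}$ in that construction so that $E[|\hat{Z}^m - \tilde{Z}|] < \varepsilon$ uniformly in $m$, then passing to the $L_1$-limit $Z$—is precisely the bookkeeping the paper leaves implicit. The total variation identification $\|Q - \tilde{Q}\|_{TV} = \tfrac{1}{2}E[|Z - \tilde{Z}|]$ (or $E[|Z - \tilde{Z}|]$, depending on the normalization convention) then finishes the argument either way.
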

When $X$ is a bounded $\R^{n}$-valued semimartingale, then any ESMM
for $\cV(X)$ is an equivalent martingale measure (EMM) for $X$,
since $-X_{i},X_{i}\in\cV(X)$, for $1\le i\le n$. However, in the
piecewise setting, even if $X$ is $\R^{n}$-valued and satisfies
NFLVR, nevertheless $\left|X\right|_{1}$ bounded is not sufficient
regularity for existence of an EMM for $X$. This can be seen by reconsidering
Example \ref{Ex:Bounded_Piecewise_Strict_Local_Mart}, where a bounded
piecewise strict local martingale is constructed. In lieu of the
$|X|_{1}$ bounded assumption, we have the following sufficient condition
for existence of an EMM for $X$.
\begin{cor}
If each simple, predictable $H$ with $\left|H\right|_{1}$ bounded
is admissible for $X$, then any ESMM for $\cV(X)$ is an EMM for
$X$. Therefore, in this case, NFLVR for $X$ is equivalent to the
existence of an EMM for $X$. \textbf{}\end{cor}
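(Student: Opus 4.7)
The plan is to show that under the stated admissibility hypothesis, an ESMM $Q$ for $\cV(X)$ automatically makes every bounded simple integral $H\cdot X$ into a genuine $Q$-martingale, which by Definition \ref{Def:U_Val_Marts} is precisely the statement that $Q$ is an EMM for $X$.

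The key observation is a symmetry trick. Fix any $H\in\bS(N)$ with $|H|_1$ bounded. Since $|-H|_1=|H|_1$, the hypothesis implies that both $H$ and $-H$ are admissible, so there exist constants $c_+,c_-$ with $H\cdot X\ge -c_+$ and $-H\cdot X\ge -c_-$. Setting $c:=c_+\vee c_-$, the processes
\begin{align*}
V^+:=c+H\cdot X, \qquad V^-:=c-H\cdot X
\end{align*}
are both nonnegative and thus belong to $\cV(X)$. Being an ESMM, $Q$ makes both $V^+$ and $V^-$ supermartingales. But $V^++V^-=2c$ is a constant, so $E^Q[V^+_t]+E^Q[V^-_t]=2c=V^+_0+V^-_0$ for every $t\ge 0$. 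Combined with the supermartingale inequalities $E^Q[V^\pm_t]\le V^\pm_0$, this forces equality, so both $V^+$ and $V^-$ have constant expectation. A supermartingale with constant expectation is a martingale, so $V^+$ is a $Q$-martingale and hence $H\cdot X=V^+-c$ is a $Q$-martingale. Since this holds for all simple predictable $H$ with $|H|_1$ bounded, $X$ is a piecewise $Q$-martingale, i.e.\ $Q$ is an EMM for $X$.

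For the NFLVR equivalence I combine this with Theorem \ref{Thm:FTAP}. The forward direction is immediate: NFLVR yields an ESMM by Theorem \ref{Thm:FTAP}, which by the argument above is an EMM. For the converse, suppose $Q$ is an EMM, so $X$ is a piecewise $Q$-martingale, and in particular a piecewise $Q$-local martingale. For any $V=v+H\cdot X\in\cV(X)$, the integrand $H\in\cL_0(X)$ and $H\cdot X\ge -v$, so the Ansel--Stricker extension (Theorem \ref{Thm:Ansel_Stricker}), applied with the trivial bounding sequence $\vartheta_j\equiv -v$ (noting $H'\Delta X\ge -v$ since $V\ge 0$), shows that $H\cdot X$ is a $Q$-local martingale. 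Hence $V$ is a nonnegative $Q$-local martingale, therefore a $Q$-supermartingale by Fatou, so $Q$ is an ESMM, which by Theorem \ref{Thm:FTAP} yields NFLVR.

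There is no real obstacle here; the whole argument rests on the symmetry $|H|_1=|-H|_1$, which upgrades one-sided admissibility into two-sided boundedness of $H\cdot X$ and thereby collapses the gap between ESMM and EMM. The only technical care needed is the invocation of Theorem \ref{Thm:Ansel_Stricker} for the converse direction, for which the uniform lower bound $-v$ provides the required $L_1$ bounding variables $\vartheta_j$.
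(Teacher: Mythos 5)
Your proof of the main assertion (ESMM $\Rightarrow$ EMM) is exactly the paper's argument: use $|{-H}|_1 = |H|_1$ to get two-sided admissibility, conclude that both $H\cdot X$ and $-H\cdot X$ are $Q$-supermartingales, and upgrade to a martingale. Your version is slightly more careful than the paper in that you explicitly shift by $c$ to land in $\cV(X)$ before invoking the supermartingale property, which the paper elides; that is a genuine improvement in rigor.

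However, your converse direction (EMM $\Rightarrow$ NFLVR) contains a false step. You invoke Theorem \ref{Thm:Ansel_Stricker} with $\vartheta_j \equiv -v$ on the grounds that ``$H'\Delta X \ge -v$ since $V\ge 0$.'' This inference is wrong: from $v + H\cdot X \ge 0$ one only gets $H\cdot X \ge -v$ and $(H\cdot X)^- \ge -v$, and the jump
\begin{align*}
H'\Delta X = \Delta(H\cdot X) = (H\cdot X) - (H\cdot X)^-
\end{align*}
is \emph{not} bounded below by $-v$, because $(H\cdot X)^-$ is unbounded above. For example, $(H\cdot X)_{t^-}$ could be large just before a crash to $(H\cdot X)_t = -v$, giving an arbitrarily negative jump. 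To repair this via Ansel--Stricker you must first localize, e.g.\ with $\alpha_j := \inf\{t : |(H\cdot X)_t|>j\}$, which gives $(H'\Delta X)^{\alpha_j} \ge -v-j$ and nonpositive $L_1$ bounds $\vartheta_j := -(v+j)$. Alternatively, and more in the spirit of what the paper leaves implicit, you can bypass Ansel--Stricker entirely: a piecewise $Q$-martingale is a piecewise $Q$-local martingale, hence (via Propositions \ref{Prop:X_Loc_Mart_iff_X^k,n_Loc_Mart} and \ref{Prop:X_Sig_Mart_iff_X^k,n_Sig_Mart}) a piecewise $Q$-sigma-martingale, so $Q$ is an E$\sigma$MM and Theorem \ref{Thm:FTAP} gives NFLVR directly.
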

\begin{proof}
Under the premise, if $H\in\bS(N)$, $H$ is predictable, and $\left|H\right|_{1}$
is bounded, then $H$ and $-H$ must both be admissible. This means
that $H\cdot X$ and $-H\cdot X$ are both $Q$-supermartingales,
hence $Q$-martingales. Thus $X$ is a piecewise $Q$-martingale.
\end{proof}
The following corollary is the natural generalization of the FTAP
for $\R^{n}$-valued locally bounded semimartingales, proved originally
in \cite{Art:DelbSchach:FundThmMathFin:1994}.
\begin{cor}
\label{Cor:FTAP_Loc_Bdd}If $X$ is locally bounded, then any ESMM
for $\cV(X)$ is an ELMM for $X$. Therefore, in this case, NFLVR
for $X$ is equivalent to the existence of an ELMM for $X$.\end{cor}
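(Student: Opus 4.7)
The plan is to dissect $X$ into its $\R^{n}$-valued pieces $X^{k,n}$ and reduce the result to the classical locally bounded $\R^{n}$-valued FTAP of Delbaen and Schachermayer \cite{Art:DelbSchach:FundThmMathFin:1994}. Given an ESMM $Q$ for $\cV(X)$, I would first verify that $Q$ is an ELMM for each $X^{k,n}$, then lift this to $X$ itself via Proposition \ref{Prop:X_Loc_Mart_iff_X^k,n_Loc_Mart}.

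Two observations carry out the first step. First, exactly as in the $(\mbox{ESMM}\imply\mbox{E}\sigma\mbox{MM})$ portion of the proof of Theorem \ref{Thm:FTAP}, any $X^{k,n}$-admissible strategy $G$ lifts to an $X$-admissible strategy $H:=G\uI_{(\tau_{k-1},\tau_{k}]\cap\R_{+}\times\Omega^{k,n}}+0^{(N)}\in\cL_{0}(X)$ satisfying $H\cdot X=G\cdot X^{k,n}$; consequently $Q$ restricts to an ESMM for $\cV(X^{k,n})$. Second, local boundedness transfers to each piece: if $(\rho_{m})$ is a reducing sequence for $X$ per Lemma \ref{Lem:Loc_Bdd_Is_Loc_Bdd}, then $\rho_{m}\wedge\tau_{k}$ reduces $X^{k,n}$ to a bounded $\R^{n}$-valued process, since $X^{k,n}$ is, up to a zero extension off $\Omega^{k,n}$, the difference $X^{\tau_{k}}-X_{\tau_{k-1}}^{+}$. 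Hence each $X^{k,n}$ is a locally bounded $\R^{n}$-valued semimartingale admitting $Q$ as an ESMM, so \cite{Art:DelbSchach:FundThmMathFin:1994} yields that $X^{k,n}$ is a $Q$-local martingale for every $k,n\in\N$. Proposition \ref{Prop:X_Loc_Mart_iff_X^k,n_Loc_Mart} then makes $X$ a piecewise $Q$-local martingale, i.e., $Q$ is an ELMM for $X$.

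For the equivalence with NFLVR, one direction follows by composing Theorem \ref{Thm:FTAP} (NFLVR $\Longrightarrow$ ESMM) with the implication just established. For the converse, let $Q$ be an ELMM for $X$ and $V=v+H\cdot X\in\cV(X)$. Since $V\ge 0$ gives $H\cdot X\ge -v$, the localization $\alpha_{j}:=\inf\{t\ge 0\mid(H\cdot X)_{t}\ge j\}$ yields $(H^{\prime}\Delta X)^{\alpha_{j}}\ge -v-j$, a constant in $L_{1}$; the Ansel--Stricker extension (Theorem \ref{Thm:Ansel_Stricker}) then makes $H\cdot X$ a $Q$-local martingale, so $V$ is a nonnegative $Q$-local martingale, hence a $Q$-supermartingale. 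Thus $Q$ is an ESMM, and NFLVR follows from Theorem \ref{Thm:FTAP}. The only mildly subtle step is verifying that local boundedness transfers to the pieces, where Lemma \ref{Lem:Loc_Bdd_Is_Loc_Bdd} ensures the notion is unambiguous in the $\bU$-valued setting; everything else is bookkeeping on the already-established dissection machinery.
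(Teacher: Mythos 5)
Your argument is correct and follows essentially the same route as the paper: dissect to the pieces $X^{k,n}$, show each is a $Q$-local martingale, then lift to $X$ via Proposition \ref{Prop:X_Loc_Mart_iff_X^k,n_Loc_Mart}, and finally compose with Theorem \ref{Thm:FTAP}. The only cosmetic difference is at the key pointwise step: the paper does not invoke \cite{Art:DelbSchach:FundThmMathFin:1994} at the level of the pieces but rather gives a direct, self-contained argument -- taking a locally bounding sequence $(\beta_i)$ for $X$, lifting a bounded $\R^n$-simple predictable $S$ to $H\in\bS(N)$ with $|H|_1$ bounded, and noting that both $v+S\cdot(X^{k,n})^{\beta_i}$ and $v+(-S)\cdot(X^{k,n})^{\beta_i}$ lie in $\cV$ for suitable $v$, hence are $Q$-supermartingales, hence $S\cdot(X^{k,n})^{\beta_i}$ is a $Q$-martingale, giving $(X^{k,n})^{\beta_i}$ a $Q$-martingale. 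This avoids having to locate the precise ``ESMM implies ELMM under local boundedness'' statement in the DS~1994 reference, though that statement is indeed standard and your citation of it is not wrong. Your explicit verification that an ELMM is an ESMM (via the stopping times $\alpha_j$ and the Ansel--Stricker extension, Theorem \ref{Thm:Ansel_Stricker}) fills in a step the paper leaves implicit, which is a small improvement in completeness.
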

\begin{proof}
Let $(\beta_{i})\nearrow\infty$ be a sequence of stopping times
such that $|\I_{\{\beta_{i}>0\}}X^{\beta_{i}}|_{1}$ is bounded, $\forall i\in\N$.
Suppose that $S$ is bounded, $\R^{n}$-simple, and predictable. The
process $H:=S\uI_{(\tau_{k-1},\tau_{k}]\cap\R_{+}\times\Omega^{k,n}}+0^{(N)}$
satisfies $H\in\bS(N)$, $|H|_{1}$ is bounded, and $H\cdot X=S\cdot X^{k,n}$.
We have $-S,S\in\cL(X^{k,n})$, and $S\cdot(X^{k,n})^{\beta^{i}}$
is bounded, so there exists $v>0$ such that $(v+S\cdot(X^{k,n})^{\beta^{i}})$,
$(v+((-S)\cdot(X^{k,n})^{\beta^{i}}))\in\cV$, making them both $Q$-supermartingales
for any ESMM $Q$ for $\cV$. Therefore, $S\cdot(X^{k,n})^{\beta^{i}}$
is a $Q$-martingale, making $(X^{k,n})^{\beta^{i}}$ also a $Q$-martingale.
Hence, $X^{k,n}$ is a $Q$-local martingale, $\forall k,n\in\N$,
and Proposition \ref{Prop:X_Loc_Mart_iff_X^k,n_Loc_Mart} implies
that $X$ is a piecewise  $Q$-local martingale.
\end{proof}
It is straightforward that if NFLVR holds for $X$, then it holds
for each $X^{k,n}$, but we state the result formally for completeness.
\begin{cor}
\label{Cor:FTAP_for_X_Implies_For_All_X_k_n}If $X$ satisfies NFLVR
on deterministic horizon $T\in(0,\infty)$, then for any reset sequence
$(\tau_{k})$ and for each $k,n\in\N$, $X^{k,n}$ satisfies NFLVR
on horizon $T$. \end{cor}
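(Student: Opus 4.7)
The plan is to exploit a simple embedding: any $X^{k,n}$-admissible strategy can be extended, by filling in zeros off the stochastic interval $(\tau_{k-1}, \tau_k] \cap \R_+ \times \Omega^{k,n}$, to an $X$-admissible strategy producing the same terminal payoff at time $T$. This makes the set of ``super-replicable'' claims for $X^{k,n}$ contained in that for $X$, so the NFLVR emptiness condition descends.

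Concretely, I would fix $k,n \in \N$ and a generic $X^{k,n}$-admissible $H^{k,n}$ with $H^{k,n} \cdot X^{k,n} \ge -c$ for some $c \ge 0$, and set
\begin{align*}
H & := H^{k,n} \uI_{(\tau_{k-1},\tau_k] \cap \R_+ \times \Omega^{k,n}} + 0^{(N)}.
\end{align*}
Exactly as in the first implication in the proof of Theorem \ref{Thm:NA1_ELMD}, $H \in \cL_0(X)$ and $H \cdot X = H^{k,n} \cdot X^{k,n} \ge -c$, so $H$ is $X$-admissible. Writing $R(\cdot)$ and $C(\cdot)$ for the sets defined by (\ref{Eq:Replicable_Claims}) and (\ref{Eq:Super_Replicable_Claims}) but built from the indicated integrator, this gives $R(X^{k,n}) \subseteq R(X)$, hence $C(X^{k,n}) \subseteq C(X)$, hence the $L^\infty$-closures satisfy $\overline{C(X^{k,n})} \subseteq \overline{C(X)}$.

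The conclusion is then immediate: NFLVR for $X$ on horizon $T$ says $\overline{C(X)} \cap L^\infty_+(\Omega,\cF_T,P) = \{0\}$, so $\overline{C(X^{k,n})} \cap L^\infty_+(\Omega,\cF_T,P) \subseteq \{0\}$ as well, which is NFLVR for $X^{k,n}$ on horizon $T$. There is no real obstacle here; all of the content lies in the embedding observation, and both the $\cL_0$-membership and the equality $H \cdot X = H^{k,n} \cdot X^{k,n}$ have already been established earlier in Section \ref{Sec:MarketModelsStochNumAssets}, so the proof reduces to citing those facts.
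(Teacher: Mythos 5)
Your proof is correct and matches the paper's own argument exactly: both embed an $X^{k,n}$-admissible strategy into an $X$-admissible one via $H := H^{k,n}\uI_{(\tau_{k-1},\tau_k]\cap\R_+\times\Omega^{k,n}} + 0^{(N)}$, note $H\cdot X = H^{k,n}\cdot X^{k,n}$, and conclude $C(X^{k,n}) \subseteq C(X)$ so that NFLVR descends. No differences worth noting.
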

\begin{proof}
If $H^{k,n}$ is $X^{k,n}$-admissible, then $H:=H^{k,n}\uI_{(\tau_{k-1},\tau_{k}]\cap\R_{+}\times\Omega^{k,n}}+0^{(N)}\in\cL(X)$,
and $H\cdot X=H^{k,n}\cdot X^{k,n}$, implying that $H$ is $X$-admissible.
Therefore, $C(X^{k,n},\bF)\subseteq C(X,\bF)$, and any FLVR with
respect to $X^{k,n}$ is a FLVR with respect to $X$.
\end{proof}
The converse of Corollary \ref{Cor:FTAP_for_X_Implies_For_All_X_k_n}
is of course false in general: Let $X$ be any $\R^{n}$-valued semimartingale
such that on all horizons $T\in(0,\infty)$ $\mbox{NA}_{1}$ holds
and NFLVR fails. For example, see the log-pole singularity market
of \cite{Art:FernKaratzKard:DiversityAndRelArb:2005,Art:Karatzas&Fernholz:SPTReview:2009}.
Then by Theorem \ref{Thm:NA1_ELMD}, there exists an equivalent local
martingale deflator $Z$ for $\cV(X^{T})$. Any fundamental sequence
$(\tau_{k}$) for $Z$ is a reset sequence for $X$, so that for any
$k\in\N$, $Z_{T}^{\tau_{k}}$ is a Radon-Nikodym derivative for an
ELMM, and thus ESMM, for $\cV(X^{\tau_{k}\wedge T})$, and thus for
$\cV((X^{k,n})^{T})$. By Theorem \ref{Thm:FTAP}, NFLVR holds for
$X^{k,n}$ on horizon $T$, $\forall k,n\in\N$.

The following corollary gives a sufficient additional criterion to
yield the converse implication.
\begin{cor}
For reset sequence $(\tau_{k})$, if $K_{T}:=\sum_{k=1}^{\infty}\I_{T\ge\tau_{k}}\le\kappa\in\N$,
and NFLVR holds on horizon $T\in(0,\infty)$ for $X^{k,n}$ for each
$n\in\N$, $k\le\kappa$, then it holds for $X$.\end{cor}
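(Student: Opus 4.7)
The plan is to reduce to Theorem \ref{Thm:FTAP} in both directions: use it piece-by-piece to obtain E$\sigma$MMs for each $(X^{k,n})^T$, paste them into an E$\sigma$MM for $X^T$, and then invoke the theorem once more to conclude NFLVR for $X$. The condition $K_T \le \kappa$ is crucial because it ensures that only finitely many pieces are active on $[0,T]$, so the pasting becomes a finite product rather than an infinite one, avoiding all of the delicate $L_1$-convergence work seen in the proof of Theorem \ref{Thm:FTAP}.

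Concretely, for each $k \le \kappa$ and $n \in \bN$, the hypothesis plus Theorem \ref{Thm:FTAP} yields an E$\sigma$MM $Q^{k,n}$ for $(X^{k,n})^T$ with Radon--Nikodym density $Z^{k,n}_T$. Since $X^{k,n}$ vanishes on $[0,\tau_{k-1}]$, we may (as in the proof of Theorem \ref{Thm:FTAP}, e.g.\ by replacing $Z^{k,n}$ with $Z^{k,n}/E[Z^{k,n}\mid\cF_{\tau_{k-1}\wedge T}]$) assume that the associated martingale $Z^{k,n}_t := E[Z^{k,n}_T\mid\cF_t]$ equals $1$ on $[0,\tau_{k-1}\wedge T]$. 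I then set
\begin{align*}
Z^k & := \I_{\{\tau_{k-1}=\infty\}} + \sum_{n=1}^{\infty}\I_{\Omega^{k,n}} Z^{k,n}_T, \qquad 1 \le k \le \kappa,
\end{align*}
and define $Z := \prod_{k=1}^{\kappa} Z^k$. The assumption $K_T \le \kappa$ means $\tau_{\kappa+1} > T$ a.s., so on $[0,T]$ all pieces beyond index $\kappa+1$ are irrelevant; for indices $k$ whose piece is not active by $T$, the corresponding factor is just $1$, and the product is an honest finite product with $E[Z]=1$ and $Z>0$ a.s.

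It remains to verify that $Q := Z\cdot P$ is an E$\sigma$MM for $X^T$. This is precisely the content of the final block of the proof of Theorem \ref{Thm:FTAP}: the telescoping conditional-expectation computation there, together with $Z^{k,n}_t = 1$ on $[0,\tau_{k-1}\wedge T]$ and $(Z^{k,n}X^{k,n})^{\tau_k} = Z^{k,n}X^{k,n}$, shows that each $(X^{k,n})^T$ is a $Q$-sigma-martingale. Proposition \ref{Prop:X_Sig_Mart_iff_X^k,n_Sig_Mart} then upgrades this to $X^T$ being a piecewise $Q$-sigma-martingale. Applying Theorem \ref{Thm:FTAP} in the other direction gives NFLVR for $X$ on horizon $T$.

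The only step requiring any real care is verifying that the conditional-expectation calculation from the proof of Theorem \ref{Thm:FTAP} transfers over cleanly; the rest is bookkeeping. In particular, the finiteness of the product removes the need for the $\varepsilon$-approximation and Cauchy-sequence argument that dominates that proof, so I expect no serious obstacle beyond writing out the telescoping carefully on the finite reset grid.
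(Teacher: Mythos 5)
Your proposal is correct and matches the paper's proof: build the density $Z = \prod_{k=1}^{\kappa} Z^k$ by pasting the piecewise E$\sigma$MM densities (normalized to equal $1$ on $[0,\tau_{k-1}\wedge T]$), use $K_T\le\kappa$ to get a finite product with $E[Z_T]=1$ (thereby bypassing the $\varepsilon$-approximation and $L_1$-Cauchy argument of Theorem \ref{Thm:FTAP}), verify sigma-martingality of each $(X^{k,n})^T$ under $Q$ via the same telescoping conditional-expectation computation, and conclude via Proposition \ref{Prop:X_Sig_Mart_iff_X^k,n_Sig_Mart} and Theorem \ref{Thm:FTAP}. The paper's proof is just a terser version of exactly this.
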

\begin{proof}
An ELMD $Z:=\prod_{k=1}^{\kappa}\tilde{Z}^{k}$ for $\cV(X^{T})$
may be constructed as in the proof of Theorem \ref{Thm:FTAP}, where
$\tilde{Z}^{k}$ generates an E$\sigma$MM for $\sum_{n=1}^{\infty}\hat{\I}_{\Omega^{k,n}}X^{k,n}+0^{(N)}$.
Then $Z_{T}^{\tau_{\kappa+1}}=Z_{T}$, so $E[Z_{T}]=E[Z_{0}]=1$,
proving closure. The sigma-martingality of $(XZ)_{0\le t\le T}$ follows
from the same arguments as given in the proof of Theorem \ref{Thm:FTAP}.
\end{proof}

\section{Concluding remarks\label{Secc:Conclusion}}

The notions of $\R^{n}$-valued semimartingale, martingale, and relatives,
may be extended by localization to a  piecewise semimartingale of
stochastic dimension, et al. The stochastic integral $H\cdot X$ may
be extended in kind, by pasting together pieces of stochastic integrals
from $\R^{n}$-valued segments. This construction seems to preserve
nearly all of the properties of $\R^{n}$-stochastic analysis that
are local in nature. Care is needed when extending results relying
on the boundedness of processes, as this notion is ambiguous in $\cup_{n=1}^{\infty}\R^{n}$.
However, the notion of local boundedness is unambiguous. Some properties
that are not local in nature extend as well, such as the FTAP of Delbaen
and Schachermayer.

Piecewise semimartingale models open up the possibility of studying
more realistic and varied market dynamics, for example, allowing companies
to enter, leave, merge and split in an equity market. This has already
seen application, in extending the results presented in \cite{Art:WinslowFouque:RegDivArb:2010}
to the more realistic setting in \cite{Diss:Winslow}.

The generalization of the various forms of the fundamental theorem
of asset pricing suggests that many of the results \cite{Book:DelbSchach:ArbBook:2006,Book:FollmerSchied:Stoch_Fin_Discrete_Time:2004}
pertaining to super-replication and hedging that exploit $\sigma(L_{\infty},L_{1})$-duality
may also extend to the  piecewise setting. We leave investigation
of these and other properties to future work.

\bigskip{}

\noindent \begin{flushleft}
\textbf{Acknowledgements: }I would like to thank Jean-Pierre Fouque
for discussions and feedback.
\par\end{flushleft}

\appendix
\begin{appendices}

\section{\label{App:Proofs}Proof\emph{ }of Proposition\emph{ }\ref{Prop:Well_Definition_of_Semis_and_Stoch_Int}}

Let $X$ be progressive and have paths with left and right limits
for all times. Let $(\tau_{k})$ be a reset sequence such that $X^{k,n}$
is an $\R^{n}$-valued semimartingale for each $k,n\in\N$. Let $\cL(X)$
and $H\cdot X$ be defined with respect to $(\tau_{k})$. Suppose
that $(\tilde{\tau}_{k})$ is an arbitrary reset sequence for $X$,
with corresponding $\tilde{X}^{k,n}$, $\tilde{H}^{k,n}$, $\tilde{\Omega}^{k,n}$,
$\tilde{\cL}(X)$. For $H\in\cL_{0}(X)$, we have
\begin{align}
H\cdot X & =\sum_{j=1}^{\infty}[(H\cdot X)^{\tilde{\tau}_{j}}-(H\cdot X)^{\tilde{\tau}_{j-1}}],\nonumber \\
 & =\sum_{j,k,n=1}^{\infty}\left[\left(H^{k,n}\cdot X^{k,n}\right)^{\tilde{\tau}_{j}}-\left(H^{k,n}\cdot X^{k,n}\right)^{\tilde{\tau}_{j-1}}\right],\nonumber \\
 & =\sum_{j,k,n=1}^{\infty}(H^{k,n}\I_{(\tilde{\tau}_{j-1},\tilde{\tau}_{j}]})\cdot((X^{k,n})^{\tilde{\tau}_{j}}\I_{(\tilde{\tau}_{j-1},\infty)}),\nonumber \\
 & =\sum_{j,k,n=1}^{\infty}((H\uI_{(\tau_{k-1}\vee\tilde{\tau}_{j-1},\tau_{k}\wedge\tilde{\tau}_{j}]\cap\R_{+}\times\Omega^{k,n}}+0^{(n)}))\nonumber \\
 & \quad\cdot((X-X_{\tau_{k-1}}^{+})^{\tau_{k}\wedge\tilde{\tau}_{j}}\uI_{(\tau_{k-1}\vee\tilde{\tau}_{j-1},\infty)\cap\R_{+}\times\Omega^{k,n}}+0^{(n)}),\nonumber \\
 & =\sum_{j,k,n=1}^{\infty}((H\uI_{(\tau_{k-1}\vee\tilde{\tau}_{j-1},\tau_{k}\wedge\tilde{\tau}_{j}]\cap\R_{+}\times\Omega^{k,n}}+0^{(n)}))\label{Eq:Semi+Int_Well_Def:1}\\
 & \quad\cdot(X^{\tau_{k}\wedge\tilde{\tau_{j}}}\uI_{[\tau_{k-1}\vee\tilde{\tau}_{j-1},\infty)\cap\R_{+}\times\Omega^{k,n}}+0^{(n)}),\nonumber \\
 & =\sum_{j,k,n=1}^{\infty}((H\uI_{(\tau_{k-1}\vee\tilde{\tau}_{j-1},\tau_{k}\wedge\tilde{\tau}_{j}]\cap\R_{+}\times\tilde{\Omega}^{j,n}}+0^{(n)}))\label{Eq:Semi+Int_Well_Def:2}\\
 & \quad\cdot(X^{\tau_{k}\wedge\tilde{\tau}_{j}}\uI_{[\tau_{k-1}\vee\tilde{\tau}_{j-1},\infty)\cap\R_{+}\times\tilde{\Omega}^{j,n}}+0^{(n)}),\nonumber \\
 & =\sum_{k,j,n=1}^{\infty}((H\uI_{(\tau_{k-1}\vee\tilde{\tau}_{j-1},\tau_{k}\wedge\tilde{\tau}_{j}]\cap\R_{+}\times\tilde{\Omega}^{j,n}}+0^{(n)}))\label{Eq:Semi+Int_Well_Def:3}\\
 & \quad\cdot((X-X_{\tilde{\tau}_{j-1}})^{\tau_{k}\wedge\tilde{\tau}_{j}}\uI_{(\tau_{k-1}\vee\tilde{\tau}_{j-1},\infty)\cap\R_{+}\times\tilde{\Omega}^{j,n}}+0^{(n)}),\nonumber \\
 & =\sum_{k,j,n=1}^{\infty}(\tilde{H}^{j,n}\I_{(\tau_{k-1},\tau_{k}]})\cdot((\tilde{X}^{j,n})^{\tau_{k}}\I_{(\tau_{k-1},\infty)}),\nonumber \\
 & =\sum_{k,j,n=1}^{\infty}\left[\left(\tilde{H}^{j,n}\cdot\tilde{X}^{j,n}\right)^{\tau_{k}}-\left(\tilde{H}^{j,n}\cdot\tilde{X}^{j,n}\right)^{\tau_{k-1}}\right],\nonumber \\
 & =\sum_{j,n=1}^{\infty}\left[\left(\tilde{H}^{j,n}\cdot\tilde{X}^{j,n}\right)\right].\label{Eq:Semi+Int_Well_Def:4}
\end{align}
The steps utilize only definitions and basic properties of $\R^{n}$-valued
stochastic analysis. Equations (\ref{Eq:Semi+Int_Well_Def:1}) and
(\ref{Eq:Semi+Int_Well_Def:3}) follow because the integrand is zero
when the shift in the integrator takes effect. Equation (\ref{Eq:Semi+Int_Well_Def:2})
follows from $(\tau_{k-1}\vee\tilde{\tau}_{j-1},\tau_{k}\wedge\tilde{\tau}_{j}]\cap\R_{+}\times\Omega^{k,n}=(\tau_{k-1}\vee\tilde{\tau}_{j-1},\tau_{k}\wedge\tilde{\tau}_{j}]\cap\R_{+}\times\tilde{\Omega}^{j,n}$.

To prove that $\tilde{X}^{j,n}$ is a semimartingale $\forall j,n\in\N$,
suppose that the $\R^{n}$-valued simple predictable processes $(S^{i})_{i\in\N}$
and $S$ satisfy $\lim_{i\to\infty}S^{i}=S$, with the convergence
being ucp (assumed throughout). Then for $j,n\in\N$, define $H^{i}:=S^{i}\uI_{(\tilde{\tau}_{j-1},\tilde{\tau_{j}}]\cap\R_{+}\times\tilde{\Omega}^{j,n}}+0^{(N)}$,
so that $\lim_{i\to\infty}H^{i}=H:=S\uI_{(\tilde{\tau}_{j-1},\tilde{\tau_{j}}]\cap\R_{+}\times\tilde{\Omega}^{k,n}}+0^{(N)}$.
Since each $S^{i}$ is $\R^{n}$-simple predictable, then each $H^{k,m,i}$,
formed by dissecting $H^{i}$ as in (\ref{Eq:H^kn-Def}), is $\R^{m}$-simple
predictable, and therefore $H^{i}\in\cL(X)$. By Proposition \ref{Prop:Semis_Are_Cont_Integrators},
$\lim_{i\to\infty}H^{i}\cdot X=H\cdot X$, and by (\ref{Eq:Semi+Int_Well_Def:4}),
$H\cdot X=(S\uI_{(\tilde{\tau}_{j-1},\tilde{\tau_{j}}]\cap\R_{+}\times\tilde{\Omega}^{j,n}}+0^{(N)})\cdot X=S\cdot\tilde{X}^{j,n}$.
Since $H^{i}\cdot X=S^{i}\cdot\tilde{X}^{j,n}$, this proves that
$\lim_{i\to\infty}S^{i}\cdot\tilde{X}^{j,n}=S\cdot\tilde{X}^{j,n}$,
and therefore $\tilde{X}^{j,n}$ is a semimartingale.

Equation (\ref{Eq:Semi+Int_Well_Def:4}) above shows that $H\in\cL(X)\imply H\in\tilde{\cL}(X)$,
and furthermore that $H\cdot X=\wt{H\cdot X}$. The reset sequences
$(\tau_{k})$ and $(\tilde{\tau}_{k})$ are arbitrary, so $\cL(X)=\tilde{\cL}(X)$,
and $H\cdot X$ is independent of the choice of reset sequence.\qed

\end{appendices}

\bibliographystyle{spmpsci}
\addcontentsline{toc}{section}{\refname}


\appendix

\end{document}